\newtheorem{theorem}{Theorem}[section]
\newtheorem{cor}[theorem]{Corollary}
\newtheorem{defi}[theorem]{Definition}
\newtheorem{lemma}[theorem]{Lemma}
\newtheorem{prop}[theorem]{Proposition}
\newtheorem{remark}[theorem]{Remark}
\numberwithin{equation}{section}
\newcommand{\R}{\mathbb{R}}
\newcommand{\N}{\mathbb{N}}
\newcommand{\T}{\mathbb{T}}
\newcommand{\abs}[1]{\left|#1\right|}
\newcommand{\eps}{\varepsilon}
\newcommand{\norm}[1]{\left\|#1\right\|}
\renewcommand{\leq}{\leqslant}
\renewcommand{\geq}{\geqslant}
\renewcommand{\bar}{\overline}
\renewcommand{\tilde}{\widetilde}
\newcommand\restr[2]{{
  \left.\kern-\nulldelimiterspace 
  #1 
  \right|_{#2} 
  }}
\def\signmb{\bigskip \begin{center} {\sc
Marc Briant\par\vspace{3mm}
Brown University\par
Division of Applied Mathematics\par
182 George Street, Box F
Providence, RI 02192, USA\par
\vspace{3mm}
e-mail:} \tt{briant.maths@gmail.com} \end{center}}
\begin{document} 

\title[EXPONENTIAL LOWER BOUND FOR BOLTZMANN WITH MAXWELLIAN DIFFUSION]{INSTANTANEOUS EXPONENTIAL LOWER BOUND FOR SOLUTIONS TO THE BOLTZMANN EQUATION WITH MAXWELLIAN DIFFUSION BOUNDARY CONDITIONS}
\author{M. Briant}

\maketitle

\begin{abstract}
We prove the immediate appearance of an exponential lower bound, uniform in time and space, for continuous mild solutions to the full Boltzmann equation in a $C^2$ convex bounded domain with the physical Maxwellian diffusion boundary conditions, under the sole assumption of regularity of the solution. We investigate a wide range of collision kernels, with and without Grad's angular cutoff assumption. In particular, the lower bound is proven to be Maxwellian in the case of cutoff collision kernels. Moreover, these results are entirely constructive if the initial distribution contains no vacuum, with explicit constants depending only on the \textit{a priori} bounds on the solution. 
\end{abstract}

\vspace*{10mm}

\textbf{Keywords:} Boltzmann equation, Exponential lower bound, Maxwellian lower bound, Maxwellian diffusion boundary conditions.


\smallskip
\textbf{Acknowledgements:} I would like to acknowledge Fran\c cois Golse for suggesting me to look at this problem.
\\ This work was supported by the $150^{th}$ Anniversary Postdoctoral Mobility Grant of the London Mathematical Society.
\tableofcontents

\section{Introduction} \label{sec:intro}

The Boltzmann equation rules the dynamics of rarefied gas particles moving in a domain $\Omega$ of $\R^d$ with velocities in $\R^d$ ($d \geq 2$) when the only interactions taken into account are elastic binary collisions. More precisely, the Boltzmann equation describes the time evolution of $f(t,x,v)$, the distribution of particles in position and velocity, starting from an initial distribution $f_0(x,v)$ .
\par In the present article we are interested in the case where the gas stays inside a domain of which walls are heated at a constant temperature $T_\partial$. Contrary to the classical specular (billiard balls) or bounce-back reflections boudary conditions, the temperature of the boundary generates a diffusion towards the inside of the domain which prevents the usual preservation of energy of the gas.
\par We investigate the case where $\Omega$ is a $C^2$ convex bounded domain and that the boundary conditions are Maxwellian diffusion. The Boltzmann equation reads

\begin{eqnarray}
\forall t \geq 0 &,& \:\forall (x,v) \in \Omega \times \R^d,\quad  \partial_t f + v\cdot \nabla_x f = Q(f,f),\label{BE}
\\ && \:\forall (x,v) \in \bar{\Omega} \times \R^d,\quad f(0,x,v) = f_0(x,v), \nonumber
\end{eqnarray}
with $f$ satisfying the Maxwellian diffusion boundary condition:
$$\forall (t,x,v) \in \R^{*+}\times\partial \Omega \times \R^d, \quad f(t,x,v)= f_{\partial}(t,x,v),$$
where
\begin{equation}\label{fpartial}
f_\partial(t,x,v)= \left[\int_{v\cdot n(x)>0} f(t,x,v)\left(v\cdot n(x)\right)\:dv\right]\frac{1}{\left(2\pi\right)^{\frac{d-1}{2}}T_{\partial}^{\frac{d+1}{2}}}e^{-\frac{\abs{v}^2}{2T_{\partial}}},
\end{equation}
with $n(x)$ denoting the outwards normal to $\bar{\Omega}$ at $x$ on $\partial\Omega$. This boundary condition expresses the physical process where particles are absorbed by the wall and then emitted back into $\Omega$ according to the thermodynamical equilibrium distribution between the wall and the gas.

\bigskip
The operator $Q(f,f)$ encodes the physical properties of the interactions between two particles. This operator is quadratic and local in time and space. It is given by 
$$Q(f,f) =  \int_{\R^d\times \mathbb{S}^{d-1}}B\left(|v - v_*|,\mbox{cos}\:\theta\right)\left[f'f'_* - ff_*\right]dv_*d\sigma,$$
where $f'$, $f_*$, $f'_*$ and $f$ are the values taken by $f$ at $v'$, $v_*$, $v'_*$ and $v$ respectively. Define:
$$\left\{ \begin{array}{rl} \displaystyle{v'} & \displaystyle{= \frac{v+v_*}{2} +  \frac{|v-v_*|}{2}\sigma} \vspace{2mm} \\ \vspace{2mm} \displaystyle{v' _*}&\displaystyle{= \frac{v+v_*}{2}  -  \frac{|v-v_*|}{2}\sigma} \end{array}\right., \: \mbox{and} \quad \mbox{cos}\:\theta = \langle \frac{v-v_*}{\abs{v-v_*}},\sigma\rangle .$$
We recognise here the conservation of kinetic energy and momentum when two particles of velocities $v$ and $v_*$ collide to give two particles of velocities $v'$ and $v'_*$.
\bigskip
The collision kernel $B \geq 0$ contains all the information about the interaction between two particles and is determined by physics (see \cite{Ce} or \cite{Ce1} for a formal derivation for the hard sphere model of particles). In this paper we shall only be interested in the case of $B$ satisfying the following product form
\begin{equation}\label{assumptionB}
B\left(|v - v_*|,\mbox{cos}\:\theta\right) = \Phi\left(|v - v_*|\right)b\left( \mbox{cos}\:\theta\right),
\end{equation}
which is a common assumption as it is more convenient and also covers a wide range of physical applications.
Moreover, we shall assume that $\Phi$ satisfies either
\begin{equation}\label{assumptionPhi}
\forall z \in \R,\quad c_\Phi \abs{z}^\gamma \leq \Phi(z) \leq C_\Phi \abs{z}^\gamma
\end{equation}
or a mollified assumption
\begin{equation}\label{assumptionPhimol}
\left\{\begin{array}{rl}\displaystyle{\forall \abs{z} \geq 1 \in \R,}&\displaystyle{\quad c_\Phi \abs{z}^\gamma \leq \Phi(z) \leq C_\Phi \abs{z}^\gamma} \vspace{2mm} \\\vspace{2mm} \displaystyle{\forall \abs{z} \leq 1 \in \R,}&\displaystyle{\quad c_\Phi \leq \Phi(z) \leq C_\Phi,} \end{array}\right.
\end{equation}
$c_\Phi$ and $C_\Phi$ being strictly positive constants and $\gamma$ in $(-d,1]$. The collision kernel is said to be ``hard potential" in the case of $\gamma >0$, ``soft potential" if  $\gamma < 0$ and ``Maxwellian" if $\gamma = 0$.
\par Finally, we shall consider $b$ to be a continuous function on $\theta$ in $(0,\pi]$, strictly positive near $\theta \sim \pi/2$, which satisfies
\begin{equation}\label{assumptionb}
b\left(\mbox{cos}\:\theta\right)\mbox{sin}^{d-2}\theta\mathop{\sim}\limits_{\theta \to 0^+}b_0 \theta^{-(1+\nu)}
\end{equation}
for $b_0 >0$ and $\nu$ in $(-\infty,2)$. The case when $b$ is locally integrable, $\nu < 0$, is referred to by the Grad's cutoff assumption (first introduce in \cite{Gr1}) and therefore $B$ will be said to be a cutoff collision kernel. The case $\nu \geq 0$ will be designated by non-cutoff collision kernel.
\bigskip


\subsection{Motivations and comparison with previous results} \label{subsec:previously}
The aim of this article is to show and to quantify the strict positivity of the solutions to the Boltzmann equation when  the gas particles move in a domain with boundary conditions. In that sense, it continues the study started in \cite{Bri2} about exponential lower bounds for solutions to the Boltzmann equation when specular refletions boundary conditions were taken into account.
\par More precisely, we shall prove that continuous solutions to the Boltzmann equation, with Maxwellian diffusion boundary conditions in a $C^2$ convex bounded domain, which have uniformly bounded energy satisfy an immediate exponential lower bound:
$$\forall t \geq t_0, \forall (x,v) \in \T^d\times\R^d, \quad f(t,v) \geq C_1 e^{-C_2\abs{v}^{K}},$$
for all $t_0>0$. Moreover, in the case of collision kernel with angular cutoff we recover a Maxwellian lower bound
$$\forall \tau>0, \: \exists \rho,\theta >0, \: \forall t \geq \tau,\:\forall (x,v) \in \Omega\times\R^d, \quad f(t,x,v) \geq \frac{\rho}{(2\pi\theta)^{d/2}}e^{-\frac{\abs{v}^2}{2\theta}}.$$
\par We would like to emphasize that, in the spirit of \cite{Bri2}, our results show that the gas will instantaneously fill up the whole domain even if the initial configuration contains vacuum. Indeed, they only require some regularity on the solution and no further assumption on its local density. Previous studies assumed the latter to be uniformly bounded from below, which is equivalent of assuming \textit{a priori} either that there is no vacuum or that the solution is strictly positive.
\par Moreover, the present results only require solutions to the Boltzmann equation to be continuous away from the grazing set
\begin{equation}\label{grazingset}
\Lambda_0 = \left\{(x,v)\in \partial\Omega\times\R^d,\:\quad n(x)\cdot v =0\right\},
\end{equation}
which is a property that is known to hold in the case of Maxwellian diffusion boundary conditions \cite{Gu6}.

\bigskip
The issue of quantifying the positivity of solutions has been investigated for a long time since it not only presents a great physical interest but also appears to be of significant importance for the mathematical study of the Boltzmann equation. Indeed, exponential lower bounds are essential for entropy-entropy production methods used to describe long-time behaviour for kinetic equations \cite{DV}\cite{DV1}. More recently, such lower bounds were needed to prove uniqueness of solutions to the Boltzmann equation in $L^1_vL^\infty_x\left(1+\abs{v}^{2+0}\right)$ \cite{GMM}.
\par Several works quantified the study of an explicit lower bound for solutions to the Boltzmann equation. We give here a brief overview and we refer the interested reader to the more detailed description in \cite{Bri2}.
\par The first result about the strict positivity of solutions to the Boltzmann equation was derived by Carleman \cite{Ca}. Noticing that a part $Q^+$ of the Boltzmann operator $Q$ satisfies a spreading property, roughly speaking
$$Q^+(\mathbf{1}_{B(\bar{v},r)},\mathbf{1}_{B(\bar{v},r)}) \geq C_+\mathbf{1}_{B\left(\bar{v},\sqrt{2}r\right)},$$
with $C_+<1$ (see Lemma $\ref{lem:Q+spread}$ for an exact statement), he proved the immediate creation of an exponential lower bound for a certain class of solutions (radially symmetric in velocity) to the spatially homogeneous equation with hard potential kernel with angular cutoff. The latter result was improved to a Maxwellian lower bound and extended to the case of non-radially symmetric solutions to the spatially homogeneous equation with hard potential and cutoff by Pulvirenti and Wennberg \cite{PW}.
\par Finally, the study in the case of the full equation has been tackled by Mouhot \cite{Mo2} in the case of the torus $\Omega=\T^d$, and more recently by the author \cite{Bri2} in $C^2$ convex bounded domains with specular boundary conditions in all dimension $d$. In both articles, a Maxwellian lower bound is derived for solutions to the full Boltzmann equation with angular cutoff (both with hard of soft potentials) and they showed the appearance of an exponential lower bound in the non-cutoff case.

\bigskip
Our present results show that the previous properties proven for the full Boltzmann equation \cite{Mo2}\cite{Bri2} still hold in the physically relevant case of the Maxwellian diffusion generated by the boundary of the domain $\Omega$. This is done for all the physically relevant collision kernels such as with and without angular cutoff and hard and soft potentials. Moreover, in the case of a solution that has uniformly bounded local mass and entropy, the proofs are entirely constructive and the constants are explicit and only depend on the \textit{a priori} bounds on the solution and the geometry of the domain, which is of great physical interest for the study of the spreading of gas into a domain with heated walls.
\par There are two key contributions in the present article. The main one is a quantification of the strict positivity of the Maxwellian diffusion process thanks to a combination of a localised positivity of the solution and a geometrical study of the rebounds against a convex boundary. Roughly speaking, we show that the wall instantaneously starts to diffuse in all directions and that its diffusion is uniformly bounded from below. The second one is a spreading method combining the effects used in previous studies and the exponential diffusive process (see next section for details).
\bigskip


\subsection{Our strategy} \label{subsec:strategy}

\par The main strategy to tackle this result relies on the breakthrough of Carleman \cite{Ca}, namely finding an ``upheaval point'' (a first minoration uniform in time but localised in velocity) and spreading this bound, thanks to the spreading property of the $Q^+$ operator, in order to include larger and larger velocities and finally compare it to an exponential. The case of the spatially non-homogeneous equation \cite{Mo2}\cite{Bri2} was dealt by finding a spreading method that was invariant along the flow of characteristics.

\bigskip
The creation of ``upheaval points" (localised in space because of boundary effects) is essantially the method developed in \cite{Bri2} for general continuous initial datum or the one of \cite{PW}\cite{Mo2} for constructive purposes. There is a new technical treatment of the time of appearance of such lower bounds but does not present any new difficulties.
\par The main issue is the use of a spreading method that would be invariant along the flow of characteristics \cite{Mo2}\cite{Bri2}. In the case of Maxwellian diffusion boundary conditions  characteristic trajectories are no longer well defined. Indeed, once a trajectory touches the boundary it is absorbed an re-emitted in all the directions. Characteristic trajectories are therefore only defined in between two consecutive rebounds and one cannot hope to use the invariant arguments developed in \cite{Mo2}\cite{Bri2}.
\par The case of the torus, studied in \cite{Mo2}, indicates that without boundaries the exponential lower bound is created after time $t=0$ as quickly as one wants. In the case of a bounded domain with specular reflection boundary conditions \cite{Bri2}, this minoration also occurs immediately. Together, it roughly means that one can expect to obtain an exponential lower bound on straight lines uniformly on how close the particle is from the boundary. Therefore we expect  the same kind of uniform bounds to arise on each characteristic trajectory in between two consecutive rebounds as long as the Maxwellian diffusion emitted by the boundary is uniformly non-negative.
\par Our strategy is therefore to first prove that the boundary condition produces a strictly positive quantity uniformly towards the interior of $\Omega$ and then to find a method to spread either this diffusion or the localised ``upheaval points". More precisely, if there is no contact during a time $\tau>0$ we expect to be able to use the spreading method developed in \cite{Mo2} from the initial lower bounds. Else there is a contact during the interval $[0,\tau]$ we cannot hope to use the latter spreading method, nor its more general characteristics invariant version derived in \cite{Bri2}, since the Maxwellian diffusion boundary condition acts like an absorption for particles arriving on the boundary. But this boundary condition also diffuses towards the interior of the domain $\Omega$ what it has absorbed. This diffusion follows an exponential law and therefore of the shape of a Maxwellian lower bound that we manage to keep along the incoming characteristic trajectory.

\bigskip
Collision kernels satisfying a cutoff property as well as collision kernels with a non-cutoff property will be treated following the strategy described above. The only difference is the decomposition of the Boltzmann bilinear operator $Q$ we consider in each case. In the case of a non-cutoff collision kernel, we shall divide it into a cutoff collision kernel and a remainder. The cutoff part will already be dealt with and a careful control of the $L^\infty$-norm of the remainder will give us the expected exponential lower bound, which decreases faster than a Maxwellian.
\bigskip


\subsection{Organisation of the paper}\label{subsec:organization}

Section $\ref{sec:mainresults}$ is dedicated to the statement and the description of the main results proven in this article. It contains three different parts
\par Section $\ref{subsec:notations}$ defines all the notations which will be used throughout the article.
\par The last subsections, $\ref{subsec:maincutoff}$ and $\ref{subsec:mainnoncutoff}$, are dedicated to a mathematical formulation of the results related to the lower bound in, respectively, the cutoff case and the non-cutoff case, described above. It also defines the concept of mild solutions to the Boltzmann equation in each case.

\bigskip
Section $\ref{sec:cutoffcase}$ deals with the case of the immediate Maxwellian lower bound for collision kernels with angular cutoff. As described in our strategy, it follows three steps.
\par Section $\ref{subsec:1stlocalisedbounds}$ generates the localised ``upheaval points" for general initial datum. A constructive approach to that problem is given in Section $\ref{subsec:constructiveupheavalpoint}$.
\par The uniform positivity of the Maxwellian diffusion is proven in Section $\ref{subsec:diffusioneffects}$.
\par Finally, Section $\ref{subsec:maxwellboundcutoff}$ combines the standard spreading methods with the diffusion process to prove the expected instantaneous Maxwellian lower bound.

\bigskip
To conclude, Section $\ref{sec:noncutoff}$ proves the immediate appearance of an exponential lower bound in the case of collision kernels without angular cutoff. We emphasize here that this Section actually explains the adaptations required compared to the case of collision kernels with the cutoff property.

\bigskip

\section{Main results} \label{sec:mainresults}

We begin with the notations we shall use all along this article.

\subsection{Notations}\label{subsec:notations}

First of all, we denote $\langle \cdot \rangle = \sqrt{1 + \abs{\cdot}^2}$ and $y^+ = \max\{0,y\}$, the positive part of $y$.
\par\textbf{Functional spaces.}
This study will hold in specific functional spaces regarding the $v$ variable that we describe here and use throughout the sequel. For all $p$ in $[1,\infty]$, we use the shorthand notation for Lebesgue spaces $L^p_v = L^p\left(\R^d\right)$.
\par For $p \in [1,\infty]$ and $k \in \N$ we use the Sobolev spaces $W^{k,p}_v$ by the norm $$\norm{f}_{W^{k,p}_v} = \left[\sum\limits_{\abs{s}\leq k}\norm{\partial^sf(v)}^p_{L^p_v}\right]^{1/p}.$$


\bigskip
\textbf{Physical observables and hypotheses.}
In the sequel of this study, we are going to need bounds on some physical observables of solution to the Boltzmann equation $\eqref{BE}$.
\par We consider here a function $f(t,x,v) \geq 0$ defined on $[0,T)\times \Omega\times \R^d$ and we recall the definitions of its local hydrodynamical quantities.

\bigskip
\begin{itemize}
\item its local energy  $$e_f (t,x) = \int_{\R^d}\abs{v}^2f(t,x,v)dv,$$
\item its local weighted energy  $$e'_f (t,x) = \int_{\R^d}\abs{v}^{\tilde{\gamma}}f(t,x,v)dv,$$ where $\tilde{\gamma} = (2+\gamma)^+$,
\item its local $L^p$ norm ($p \in [1,+\infty)$) $$l^p_f(t,x) = \norm{f(t,x,\cdot)}_{L^p_v},$$
\item its local $W^{2,\infty}$ norm $$w_f(t,x) = \norm{f(t,x,\cdot)}_{W^{2,\infty}_v}.$$
\end{itemize}
\bigskip

Our results depend on uniform bounds on those quantities and therefore, to shorten calculations we will use the following

\begin{eqnarray*}
E_f =  \sup\limits_{(t,x)\in [0,T)\times \Omega}e_f(t,x) &,& E'_f = \sup\limits_{(t,x)\in [0,T)\times \Omega}e'_f(t,x) ,
\\ L^p_f = \sup\limits_{(t,x)\in [0,T)\times \Omega} l^p_f(t,x) &,& W_f = \sup\limits_{(t,x)\in [0,T)\times \Omega} w_f(t,x).
\end{eqnarray*}

\bigskip
In our theorems we are giving a priori lower bound results for solutions to $\eqref{BE}$ satisfying some properties about their local hydrodynamical quantities. Those properties will differ depending on which case of collision kernel we are considering. We will take them as assumptions in our proofs and they are the following.
\begin{itemize}
\item In the case of hard or Maxwellian potentials with cutoff ($\gamma\geq 0$ and $\nu <0$):
\begin{equation}\label{assumption1}
E_f < +\infty.
\end{equation}
\item In the case of a singularity of the kinetic collision kernel ($\gamma \in (-d,0)$) we shall make the additional assumption
\begin{equation}\label{assumption2}
L^{p_\gamma}_f < +\infty,
\end{equation}
where $p_\gamma  > d/(d+\gamma)$.
\item In the case of a singularity of the angular collision kernel ($\nu \in [0,2)$) we shall make the additional assumption
\begin{equation}\label{assumption3}
W_f < +\infty, \:\: E'_f < + \infty.
\end{equation}
\end{itemize}

Assumption $\eqref{assumption2}$ implies the boundedness of the local entropy and if $\gamma \leq 0$ we have $E'_f \leq E_f$ and so in some cases several assumptions might be redundant.
\par Moreover, in the case of the torus with periodic conditions or the case of bounded domain with specular boundary reflections \cite{Bri2}, solutions to $\eqref{BE}$ also satisfied the conservation of the total mass and the total energy. The case with Maxwellian diffusion boundary conditions only preserves, in general (see \cite{Vi2} for instance), the total mass:

\begin{equation}\label{massconservation}
\exists \mbox{M} \geq 0,\:\forall t \in \R^+, \quad \int_{\Omega}\int_{\R^d} f(t,x,v)\:dxdv = M.
\end{equation}


\bigskip
\textbf{Characteristic trajectories.}
The characteristic trajectories of the equation are only defined between two consecutive rebounds against the boundary and they are given by straight lines that we will denote by
\begin{equation}\label{characteristics}
\forall\: 0\leq s\leq t,\:\forall (x,v)\in \R^d\times\R^d,\quad X_{s,t}(x,v) = x-(t-s)v.
\end{equation}
Because $\bar{\Omega}$ is a closed set, we can define the first time of contact between a backward trajectory and $\partial\Omega$:
\begin{equation}\label{tpartial}
\forall (x,v) \in \bar{\Omega}\times\R^d,\quad t_\partial(x,v) = \min\left\{t \geq 0: \: x-vt \in \partial\Omega\right\},
\end{equation}
and the contact point between such a trajectory and the boundary:
\begin{equation}\label{xpartial}
\forall (x,v) \in \bar{\Omega}\times\R^d,\quad x_\partial(x,v) = x-vt_\partial(x,v)
\end{equation}
\bigskip

 
\subsection{Maxwellian lower bound for cutoff collision kernels}\label{subsec:maincutoff}

The final theorem we prove in the case of cutoff collision kernel is the immediate appearance of a uniform Maxwellian lower bound. We use, in that case, the Grad's splitting for the bilinear operator $Q$ such that the Boltzmann equation reads
\begin{eqnarray*}
Q(g,h) &=&  \int_{\R^d\times \mathbb{S}^{d-1}}\Phi\left(|v - v_*|\right)b\left( \mbox{cos}\theta\right)\left[h'g'_* - hg_*\right]dv_*d\sigma
\\     &=& Q^+(g,h) - Q^-(g,h),
\end{eqnarray*}
where we used the following definitions
\begin{eqnarray}
Q^+(g,h) &=& \int_{\R^d\times \mathbb{S}^{d-1}}\Phi\left(|v - v_*|\right)b\left( \mbox{cos}\theta\right)h'g'_*\: dv_*d\sigma,\nonumber
\\ Q^-(g,h) &=& n_b \left(\Phi * g (v)\right)h = L[g](v)h, \label{gradsplitting}
\end{eqnarray}
where
\begin{equation}\label{nb}
n_b = \int_{\mathbb{S}^{d-1}}b\left(\mbox{cos}\:\theta\right)d\sigma = \left|\mathbb{S}^{d-2}\right|\int_0^\pi b\left(\mbox{cos}\:\theta\right) \mbox{sin}^{d-2}\theta \:d\theta.
\end{equation}

\bigskip
As already mentionned, the characteristics of our problem can only be defined in between two consecutive rebounds against $\partial\Omega$. We can therefore define a mild solution of the Boltzmann equation in the cutoff case, which is expressed by a Duhamel formula along the characteristics. This weaker form of solutions is actually the key point for our result and also gives a more general statement.

\bigskip
\begin{defi}\label{def:mildcutoff}
Let $f_0$ be a measurable function non-negative almost everywhere on $\bar{\Omega}\times \R^d$.
\\ A measurable function $f = f(t, x, v)$ on $[0, T)\times\bar{\Omega}\times \R^d$ is a mild solution of the Boltzmann equation associated to the initial datum $f_0(x, v)$ if 
\begin{enumerate}
\item $f$ is non-negative on $\bar{\Omega}\times \R^d$,
\item for every $(t, x, v)$ in $\R^+\times \Omega\times \R^d$:
$$s \longmapsto L[f(t,X_{s,t}(x,v),\cdot)](v), \quad t \longmapsto Q^+[f(t,X_{s,t}(x,v),\cdot), f(t,X_{s,t}(x,v),\cdot)](v)$$
are in $L^1_{loc}([0,T))$,
\item and for each $t\in [0,T)$, for all $x \in \Omega$ and $v\in \R^d$
\end{enumerate}
\begin{eqnarray}
&&f(t,x,v) = f_0(x-vt,v)\emph{\mbox{exp}}\left[-\int_{0}^t L[f(s,X_{s,t}(x,v),\cdot)](v)\:ds\right] \label{mildCO0}
\\ &&\quad\quad + \int_{0}^t \emph{\mbox{exp}}\left(-\int_s^t L[f(s',X_{s',s}(x,v),\cdot)](v)\:ds'\right)\nonumber
\\&&\quad\quad\quad\quad\quad\quad\quad \times Q^+[f(s,X_{s,t}(x,v),\cdot), f(s,X_{s,t}(x,v),\cdot)](v)\: ds.\nonumber
\end{eqnarray}
if $t \leq t_\partial(x,v)$ or else
\begin{eqnarray}
&&f(t,x,v) = f_\partial(t_\partial(x,v),x_\partial(x,v),v)\emph{\mbox{exp}}\left[-\int_{t_\partial(x,v)}^t L[f(s,X_{s,t}(x,v),\cdot)](v)\:ds\right] \label{mildCOtpartial}
\\ && \quad\quad+ \int_{t_\partial(x,v)}^t \emph{\mbox{exp}}\left(-\int_s^t L[f(s',X_{s',s}(x,v),\cdot)](v)\:ds'\right) \nonumber
\\&&\quad\quad\quad\quad\quad\quad\quad \times Q^+[f(s,X_{s,t}(x,v),\cdot), f(s,X_{s,t}(x,v),\cdot)](v)\: ds.\nonumber
\end{eqnarray}
\end{defi}
\bigskip

Now we state our result.

\bigskip
\begin{theorem}\label{theo:boundcutoff}
Let $\Omega$ be a $C^2$ open bounded domain in $\R^d$ with nowhere null normal vector and let $f_0$ be a non-negative continuous function on $\bar{\Omega} \times \R^d$. 
Let $B=\Phi b$ be a collision kernel satisfying $\eqref{assumptionB}$, with $\Phi$ satisfying $\eqref{assumptionPhi}$ or $\eqref{assumptionPhimol}$ and $b$ satisfying $\eqref{assumptionb}$ with $\nu < 0$. Let $f(t,x,v)$ be a mild solution of the Boltzmann equation in $\bar{\Omega}\times \R^d$ on some time intervalle $[0,T)$, $T \in (0,+\infty]$, which satisfies
\begin{itemize}
\item $f$ is continuous on $[0,T) \times \left(\bar{\Omega} \times \R^d-\Lambda_0\right)$ ($\Lambda_0$ grazing set defined by $\eqref{grazingset}$), $f(0,x,v) = f_0(x,v)$ and $M>0$ in $\eqref{massconservation}$;
\item if $\Phi$ satisfies $\eqref{assumptionPhi}$ with $\gamma \geq 0$ or if $\Phi$ satisfies $\eqref{assumptionPhimol}$, then $f$ satisfies $\eqref{assumption1}$;
\item if $\Phi$ satisfies $\eqref{assumptionPhi}$ with $\gamma < 0$, then $f$ satisfies $\eqref{assumption1}$ and $\eqref{assumption2}$.
\end{itemize}
Then for all $\tau \in (0,T)$ there exists $\rho >0$ and $\theta > 0$, depending on $\tau$, $E_f$ (and $L^{p_\gamma}_f$ if  $\Phi$ satisfies $\eqref{assumptionPhi}$ with $\gamma < 0$), such that for all $t \in [\tau,T)$ the solution $f$ is bounded from below, almost everywhere, by a global Maxwellian distribution with density $\rho$ and temperature $\theta$, i.e.
$$\forall t \in [\tau,T),\:\forall (x,v) \in \bar{\Omega}\times \R^d, \quad f(t,x,v) \geq \frac{\rho}{(2\pi \theta )^{d/2}}e^{-\frac{\abs{v}^2}{2\theta }}. $$
\end{theorem}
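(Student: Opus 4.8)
The plan is to follow the Carleman strategy---build a localised ``upheaval point'' uniform in time, then spread it via the $Q^+$ spreading property until it dominates a global Maxwellian---but to adapt the spreading step to deal with the fact that characteristics die at the boundary and are replaced by the diffusion condition $\eqref{fpartial}$. I would organise the argument in three stages, matching Sections $\ref{subsec:1stlocalisedbounds}$, $\ref{subsec:diffusioneffects}$ and $\ref{subsec:maxwellboundcutoff}$.

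First, the localised lower bound. Fix $\tau\in(0,T)$. Using continuity of $f$ away from $\Lambda_0$ together with the mass conservation $M>0$, I would show that at some time $t_1\in(0,\tau)$ there is a point $x_1\in\Omega$, a velocity $\bar v_1$ and radii such that $f(t_1,y,v)\geq a_0$ for all $(y,v)$ in a small neighbourhood of $(x_1,\bar v_1)$; this is exactly the construction of \cite{Bri2}, with only a new bookkeeping of the time of appearance. Then I would propagate this bound forward along the Duhamel formula $\eqref{mildCO0}$: on any straight characteristic segment not meeting $\partial\Omega$, the gain term $Q^+$ together with the spreading Lemma $\ref{lem:Q+spread}$ upgrades a ball $B(\bar v,r)$ indicator into a $B(\bar v,\sqrt2 r)$ indicator with a quantitatively controlled constant, while the loss factor $\exp(-\int L[f])$ is bounded below using the bounds $E_f$ (and $L^{p_\gamma}_f$ in the soft case) through the standard estimate on $L[f]=n_b(\Phi*f)$. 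Iterating $n$ times gives, after any fixed positive time, a lower bound of the form $c_n \mathbf 1_{B(\bar v,\,2^{n/2}r)}$ with $c_n$ shrinking but explicit, hence---optimising $n$ against $|v|$---an exponential-in-$|v|^{2}$ lower bound on that segment, i.e. a Maxwellian up to constants.

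The genuinely new step is the second one: quantifying the Maxwellian diffusion. I would take the localised bound just obtained, valid on some segment of positive length in the interior, and use it to bound from below the incoming flux integral $\int_{v\cdot n(x)>0} f(t,x,v)(v\cdot n(x))\,dv$ appearing in $\eqref{fpartial}$, uniformly for $x$ on $\partial\Omega$ and $t$ in a suitable interval; the convexity and $C^2$ regularity of $\Omega$, with nowhere-vanishing normal, guarantee that from every boundary point one can look back along a cone of directions and hit the region where the interior lower bound holds within a controlled time, so the flux is $\geq\kappa>0$ with $\kappa$ explicit in $E_f$, $M$ and the geometry. Consequently $f_\partial(t,x,v)\geq \kappa\,(2\pi T_\partial)^{-(d-1)/2}T_\partial^{-(d+1)/2}\,e^{-|v|^2/(2T_\partial)}$ on $\R^{*+}\times\partial\Omega\times\R^d$: the wall emits a true Maxwellian in all directions.

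Finally I would assemble the global bound. Given $t\geq\tau$ and $(x,v)$, look at the backward characteristic. If $t_\partial(x,v)$ is large (no rebound on a time interval of length $\geq$ some fixed $\tau'$), apply the interior spreading argument of stage one starting from the localised bound. If instead the trajectory hits $\partial\Omega$ recently, use $\eqref{mildCOtpartial}$: the boundary term carries the Maxwellian $f_\partial$ from stage two, the loss exponential $\exp(-\int_{t_\partial}^t L[f])$ is again bounded below, and since the travelled time $t-t_\partial(x,v)$ is bounded the velocity argument of the Maxwellian is only mildly dilated---so one still controls $f(t,x,v)$ from below by a Maxwellian with a possibly smaller temperature $\theta\leq T_\partial$ and density $\rho$. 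Taking the worse of the two Maxwellians (largest temperature, smallest density) and using that $\overline\Omega$ is bounded and $[\tau,T)$ can be exhausted yields the uniform $f(t,x,v)\geq \rho(2\pi\theta)^{-d/2}e^{-|v|^2/(2\theta)}$ claimed, with $\rho,\theta$ depending only on $\tau$, $E_f$ (and $L^{p_\gamma}_f$) and the geometry.

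The main obstacle is stage two combined with the rebound bookkeeping in stage three: one must show the diffused Maxwellian survives, with uniform constants, along an incoming characteristic that may itself have undergone several earlier rebounds, i.e. that the process is genuinely \emph{self-sustaining} and does not degrade to zero after infinitely many bounces in finite time. Handling this requires a careful geometric lemma controlling, uniformly in $x\in\partial\Omega$, both $t_\partial$ and the measure of good incoming directions---this is where convexity, the $C^2$ bound and the non-degeneracy of $n(x)$ are essential---together with the observation that between any two rebounds a definite amount of time elapses (again by convexity and the velocity bounds coming from the Maxwellian weight), so only finitely many rebounds occur before time $\tau$.
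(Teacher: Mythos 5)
Your three-stage plan matches the paper's organisation in Section~\ref{sec:cutoffcase} quite closely: localised upheaval points (Lemma~\ref{lem:positivity1} and Proposition~\ref{prop:upheaval}), uniform positivity of the diffused flux (Proposition~\ref{prop:positivediffusion}), and a contact/no-contact dichotomy in the spreading (Proposition~\ref{prop:spread}). The point you under-describe, and which would be a genuine gap if read literally, is that the dichotomy must be applied \emph{recursively at every iteration} of the spreading, not once per $(t,x,v)$. The paper encodes this by the coupled sequences $a_{n+1}\sim \min(a_n,b_n)^2\, C_Q\, r_n^{d+\gamma}\,\xi_n^{d/2-1}$ (no contact) and $b_{n+1}\sim b_\partial\, e^{-\tilde r_{n+1}^2/(2T_\partial)}$ (contact). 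Two consequences: first, the ``interior spreading'' is never purely interior, since $a_{n+1}$ feeds on $\min(a_n,b_n)$ and thus picks up the boundary Maxwellian at every step; second, $b_n$ depends only on the fixed flux lower bound $b_\partial$ and the Gaussian weight, not on $b_{n-1}$, so the worry you raise about the diffused Maxwellian ``degrading after infinitely many bounces'' does not arise. One never follows a chain of earlier rebounds: a single Duhamel step $\eqref{mildCOtpartial}$ back to the most recent contact, together with the fresh emission $\eqref{fpartial}$ and the uniform $b_\partial$, suffices at each level. Your proposed geometric lemma bounding the number of rebounds before time $\tau$ is therefore not needed.

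Two further points. Your ``optimising $n$ against $|v|$'' conclusion needs a quantitative input: one must verify that $\min(a_n,b_n)\geq\alpha^{2^n}$ for some $\alpha>0$ survives the quadratic recursion, and then invoke Lemma~3.3 of \cite{Mo2} to pass from the family $f(\tau,\cdot,v)\geq c_n$ on $B(0,c_r 2^{n/2})$ to a true Maxwellian. This is precisely what Section~\ref{subsubsec:proofcutoff} does, and it constrains the admissible $\xi_n$ (the paper takes $\xi_n=\xi^n$); it is not a free optimisation. Finally, a smaller imprecision: stage one as you describe it gives the Maxwellian only at the distinguished interior point $x_1$, since the spatial balls $B(x_1,\Delta/2^n)$ shrink with $n$. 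What Proposition~\ref{prop:upheaval} transports to the covering points $x_i$ is only a fixed-radius, localised ball-in-velocity bound; the global Maxwellian emerges only after the interleaved spreading of stage three.
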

\bigskip

If we add the assumptions of uniform boundedness of $f_0$ and of the local mass and entropy of the solution $f$ we can use the arguments originated in \cite{PW} to construct explicitely the initial ``upheaval point'', without any compactness argument. We refer the reader to Section $\ref{subsec:constructiveupheavalpoint}$ which gives the following corollary.

\bigskip
\begin{cor}\label{cor:constructivecutoff}
Suppose that conditions of Theorem $\ref{theo:boundcutoff}$ are satisfied and further assume that $f_0$ is uniformly bounded from below
$$\forall (x,v) \in \Omega\times\R^d, \quad f_0(x,v) \geq \varphi(v) > 0,$$
and that $f$ has a bounded  local mass and entropy
\begin{eqnarray*}
R_f &=& \inf\limits_{(t,x)\in [0,T)\times \Omega}\int_{\R^d}f(t,x,v)\:dv >0
\\H_f &=& \sup\limits_{(t,x)\in [0,T)\times \Omega}\abs{\int_{\R^d}f(t,x,v)\emph{\mbox{log}}f(t,x,v)\:dv} <+\infty.
\end{eqnarray*}
Then conclusion of Theorem $\ref{theo:boundcutoff}$ holds true with the constants $\rho$ and $\theta$ being explicitely constructed in terms of $\tau$, $E_f$, $H_f$ and $L^{p_\gamma}_f$.
\end{cor}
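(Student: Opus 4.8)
The plan is to exploit the modular structure of the proof of Theorem~\ref{theo:boundcutoff}, which proceeds in three stages: \textbf{(i)} the creation of a localised lower bound --- an ``upheaval point'' $f(t_1,x,\cdot)\geq a_1\mathbf{1}_{B(\bar{v},r)}$ at some small time $t_1>0$, uniformly in $x\in\bar{\Omega}$ (Section~\ref{subsec:1stlocalisedbounds}); \textbf{(ii)} the uniform positivity of the Maxwellian diffusion emitted by the boundary (Section~\ref{subsec:diffusioneffects}); and \textbf{(iii)} a spreading argument coupling the spreading property of $Q^+$ with the diffusive process, which upgrades the upheaval point into the global Maxwellian lower bound (Section~\ref{subsec:maxwellboundcutoff}). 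Stages (ii) and (iii) are already quantitative, the constants they produce depending only on $\tau$, $E_f$, $L^{p_\gamma}_f$ and the geometry of $\Omega$. The one non-constructive ingredient of Theorem~\ref{theo:boundcutoff} is the compactness argument underlying stage (i) for a general continuous $f_0$. Thus, to prove the corollary it suffices to replace that argument by an \emph{explicit} construction of the upheaval point under the extra hypotheses $f_0\geq\varphi>0$, $R_f>0$ and $H_f<+\infty$, and then to propagate the resulting constants through (ii)--(iii).

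For the explicit upheaval point I would follow \cite{PW}. First, control the loss rate: since $L[f](t,x,v)=n_b\,(\Phi*f(t,x,\cdot))(v)$ and $\Phi$ grows polynomially by \eqref{assumptionPhi} or \eqref{assumptionPhimol}, the bound $E_f<+\infty$ --- together with $L^{p_\gamma}_f<+\infty$ and Hölder's inequality in the soft case, using $p_\gamma>d/(d+\gamma)$ from \eqref{assumption2} --- yields an explicit $C_L>0$ such that $L[f](t,x,v)\leq C_L\langle v\rangle^{\tilde{\gamma}}$ for all $(t,x,v)$. Next, using only $R_f$, $H_f$ and $E_f$, one shows that for \emph{every} $(t,x)$ the set $A_{t,x}:=\{v\in B(0,R_0):\ f(t,x,v)\geq\eta_0\}$ has Lebesgue measure at least an explicit $m_0>0$: by $E_f<+\infty$ at most $R_f/4$ of the local mass lies outside $B(0,R_0)$ once $R_0^2\geq 4E_f/R_f$; at most $\eta_0\abs{B(0,R_0)}$ lies on $\{f<\eta_0\}$; and, since $f\log f\geq f\log M_0$ on $\{f>M_0\}$ while the negative part of $f\log f$ is integrable with an $E_f$-bound, at most a multiple of $1/\log M_0$ of the mass (explicit in $H_f,E_f$) lies on $\{f>M_0\}$. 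Choosing $\eta_0$ small and then $M_0$ large leaves a mass $\geq R_f/4$ carried by $\{\eta_0\leq f\leq M_0\}\cap B(0,R_0)$, whence $m_0\geq R_f/(4M_0)$; all of $R_0,\eta_0,M_0,m_0$ are explicit in $R_f,H_f,E_f$.

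I would then run one step of the Duhamel formula \eqref{mildCO0} on a short interval $[0,t_1]$, for $x$ at a fixed positive distance from $\partial\Omega$ (the remaining $x$ being treated below), keeping only the non-negative gain term and using the two estimates above:
\[
f(t_1,x,v)\ \geq\ e^{-C_L\langle v\rangle^{\tilde{\gamma}}t_1}\int_0^{t_1}Q^+\!\big[\eta_0\mathbf{1}_{A_s},\eta_0\mathbf{1}_{A_s}\big](v)\,ds,\qquad A_s:=A_{s,X_{s,t_1}(x,v)},
\]
valid once $t_1<t_\partial(x,v)$. The spreading property of $Q^+$ (Lemma~\ref{lem:Q+spread}) together with the localisation lemma of \cite{PW} --- which ensures that $Q^+(\mathbf{1}_A,\mathbf{1}_A)$ dominates a fixed multiple of $\mathbf{1}_{B(\bar{v},r)}$ whenever $A\subset B(0,R_0)$ has $\abs{A}\geq m_0$, with $\bar{v},r$ depending only on $m_0,R_0$ --- then produces $f(t_1,x,v)\geq a_1\mathbf{1}_{B(\bar{v},r)}(v)$ with $a_1,\bar{v},r$ explicit and, crucially, independent of $x$. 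Feeding this explicit upheaval point into Sections~\ref{subsec:diffusioneffects}--\ref{subsec:maxwellboundcutoff} in place of the one obtained by compactness yields $\rho,\theta$ explicit in $\tau,E_f,H_f$ (and $R_f,L^{p_\gamma}_f$); this is the programme of Section~\ref{subsec:constructiveupheavalpoint}.

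The step I expect to be the main obstacle is the uniformity in $x$ \emph{up to the boundary}. For $x$ arbitrarily close to $\partial\Omega$ a backward characteristic exits $\Omega$ almost instantly, so $f_0\geq\varphi$ and the free-transport formula \eqref{mildCO0} are of no use on $[0,t_1]$; one must instead start from the boundary contribution in \eqref{mildCOtpartial}, i.e.\ from the already-explicit lower bound on the diffused distribution $f_\partial$ of Section~\ref{subsec:diffusioneffects}, which in turn rests on a lower bound for the outgoing flux $\int_{v\cdot n(x)>0}f(t,x,v)\,(v\cdot n(x))\,dv$ --- again supplied by the mass--entropy estimate above. A secondary point requiring care is to check that no implicit modulus of continuity of $f_0$ or of $f$ enters any constant, so that $\rho$ and $\theta$ depend only on $\tau$, $E_f$, $H_f$, $R_f$, $L^{p_\gamma}_f$ and the geometry of $\Omega$.
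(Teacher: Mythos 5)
Your high-level plan is correct and indeed mirrors the modular structure the paper exploits in Section~\ref{subsec:constructiveupheavalpoint}: the diffusion bound and the spreading stage are already constructive, so the work concentrates on replacing the compactness argument behind the upheaval points. Your loss-rate control and the mass--entropy thick-set argument giving $|A_{t,x}|\geq m_0$ are standard and correct. However, there is a genuine gap in the step that turns a thick set into a ball lower bound.

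The claimed ``localisation lemma'' --- that $Q^+(\mathbf{1}_A,\mathbf{1}_A)$ dominates a constant times $\mathbf{1}_{B(\bar v, r)}$ with $\bar v$ and $r$ depending \emph{only} on $m_0$ and $R_0$ --- is too strong to be true. The center $\bar v$ necessarily depends on the set $A$ itself: translating $A$ inside $B(0,R_0)$ translates the support of $Q^+(\mathbf{1}_A,\mathbf{1}_A)$ accordingly, so no $\bar v$ depending only on $(m_0,R_0)$ can work. In your Duhamel step the set $A_s=A_{s,X_{s,t_1}(x,v)}$ moves with $s$ (and with $x$), so the corresponding centers $\bar v(s,x)$ drift and nothing guarantees a common nontrivial intersection of the balls over $s\in[0,t_1]$; you are then left without a fixed velocity ball on which to lower-bound $f(t_1,x,\cdot)$. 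Moreover, the Pulvirenti--Wennberg/Mouhot statement actually used in the paper, Lemma~\ref{lem:Q+Q+}, gives a ball only after the \emph{iterated} gain operator $Q^+(Q^+(g\mathbf{1}_{B(0,R_0)},g\mathbf{1}_{B(0,R_0)}),g\mathbf{1}_{B(0,R_0)})$, with $\bar v$ depending on $g$; a single application of $Q^+$ is not sufficient.

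The paper sidesteps both difficulties at once by applying Lemma~\ref{lem:Q+Q+} to the \emph{fixed} function $\varphi$ rather than to the varying slices $f(t,x,\cdot)$: since $f_0\geq\varphi$, the free-transport factor in \eqref{mildCO0} yields $f(s,X_{s,t}(x_1,v),\cdot)\geq\varphi(\cdot)\,e^{-sC_L\langle\cdot\rangle^{\gamma^+}}$ on $B(0,R_0)$ before the first contact with the boundary, and iterating the Duhamel formula \emph{twice} produces exactly the iterated gain operator applied to $\varphi$, hence a ball $B(\bar v,r_0)$ with $\bar v$, $r_0$, $\eta_0$ fixed once and for all in terms of $\rho_\varphi$, $e_\varphi$, $h_\varphi$. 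Uniformity in $x$ up to the boundary is then obtained by uniform continuity of $f$ on a compact, a finite covering of $\bar\Omega$ by balls $B(x_i,\delta_X)$, and the same construction at each interior center $x_i$; the non-constructive $\Delta=\min_i\Delta_i$ appears only as an upper bound (always $\leq 1$), which is why the explicitness of $\rho,\theta$ is preserved. Your instinct that care is needed near $\partial\Omega$ and that the diffusion bound must also be made constructive is right, but the paper's route there is again an interior-point iteration of Lemma~\ref{lem:positivity1} combined with the geometry of the convex domain, not a direct use of $f_\partial$.
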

\bigskip


\subsection{Exponential lower bound for non-cutoff collision kernels}\label{subsec:mainnoncutoff}

In the case of non-cutoff collision kernels ($0 \leq \nu <2$ in $\eqref{assumptionb}$), Grad's splitting does not make sense anymore and so we have to find a new way to define mild solutions to the Boltzmann equation $\eqref{BE}$. The splitting we are going to use is a standard one and it reads
\begin{eqnarray*}
Q(g,h) &=&  \int_{\R^d\times \mathbb{S}^{d-1}}\Phi\left(|v - v_*|\right)b\left( \mbox{cos}\theta\right)\left[h'g'_* - hg_*\right]dv_*d\sigma
\\     &=& Q^1_b(g,h) - Q^2_b(g,h),
\end{eqnarray*}
where we used the following definitions
\begin{eqnarray}
Q^1_b(g,h) &=& \int_{\R^d\times \mathbb{S}^{d-1}}\Phi\left(|v - v_*|\right)b\left( \mbox{cos}\theta\right)g'_*\left(h'-h\right)\: dv_*d\sigma,\nonumber
\\ Q^2_b(g,h) &=&  -\left(\int_{\R^d\times \mathbb{S}^{d-1}}\Phi\left(|v - v_*|\right)b\left( \mbox{cos}\theta\right)\left[g'_*-g_*\right]\:dv_*d\sigma\right)h \label{splittingQ1Q2}
\\ &=& S[g](v)h.\nonumber
\end{eqnarray}

We would like to use the properties we derived in the study of collision kernels with cutoff. Therefore we will consider additional splitting of $Q$.
\par For $\eps$ in $(0,\pi/4)$ we define a cutoff angular collision kernel
$$b^{CO}_\eps\left( \mbox{cos}\theta\right) = b\left( \mbox{cos}\theta\right)\mathbf{1}_{\abs{\theta} \geq \eps}$$
and a non-cutoff one
$$b^{NCO}_\eps\left( \mbox{cos}\theta\right) = b\left( \mbox{cos}\theta\right)\mathbf{1}_{\abs{\theta} \leq \eps}.$$

Considering the two collision kernels $B^{CO}_\eps = \Phi b^{CO}_\eps$ and $B^{NCO}_\eps = \Phi b^{NCO}_\eps$, we can combine Grad's splitting $\eqref{gradsplitting}$ applied to  $B^{CO}_\eps$ with the non-cutoff splitting $\eqref{splittingQ1Q2}$ applied to  $B^{NCO}_\eps$. This yields the splitting we shall use to deal with non-cutoff collision kernels,
\begin{equation}\label{noncutoffsplitting}
Q = Q^+_\eps - Q^-_\eps + Q^1_\eps - Q^2_\eps,
\end{equation}
where we use the shortened notations $Q^{\pm}_\eps = Q^{\pm}_{b^{CO}_\eps}$ and $Q^{i}_\eps = Q^{i}_{b^{NCO}_\eps}$, for $i=1,2$.

\bigskip
Thanks to the splitting $\eqref{noncutoffsplitting}$, we are able to define mild solutions to the Boltzmann equation with non-cutoff collision kernels. This is obtained by considering the Duhamel formula associated to the splitting $\eqref{noncutoffsplitting}$ along the characteristics (as in the cutoff case).

\bigskip
\begin{defi}\label{def:mildnoncutoff}
Let $f_0$ be a measurable function, non-negative almost everywhere on $\bar{\Omega}\times \R^d$.
\\ A measurable function $f = f(t, x, v)$ on $[0, T)\times\bar{\Omega}\times \R^d$ is a mild solution of the Boltzmann equation with non-cutoff angular collision kernel associated to the initial datum $f_0(x, v)$ if there exists $0 < \eps_0 <\pi/4$ such that for all $0<\eps<\eps_0$:
\begin{enumerate}
\item $f$ is non-negative on $\bar{\Omega}\times \R^d$,
\item for every $(t, x, v)$ in $\R^+\times \Omega\times \R^d$:
$$s \longmapsto L_\eps[f(t,X_{s,t}(x,v),\cdot)](v), \:\: s \longmapsto Q^+_\eps[f(t,X_{s,t}(x,v),\cdot), f(t,X_{s,t}(x,v),\cdot)](v)$$
$$s \longmapsto S_\eps[f(t,X_{s,t}(x,v),\cdot)](v), \:\: s \longmapsto Q^1_\eps[f(t,X_{s,t}(x,v),\cdot), f(t,X_{s,t}(x,v),\cdot)](v)$$
are in $L^1_{loc}([0,T))$,
\item and for each $t\in [0,T)$, for all $x \in \Omega$ and $v\in \R^d$
\end{enumerate}
\begin{eqnarray}
&&f(t,x,v) = f_0(x-vt,v)\emph{\mbox{exp}}\left[-\int_{0}^t \left(L_\eps + S_\eps\right)[f(s,X_{s,t}(x,v),\cdot)](v)\:ds\right] \label{mildNCO0}
\\ &&\quad\quad + \int_{0}^t \emph{\mbox{exp}}\left(-\int_s^t \left(L_\eps + S_\eps\right)[f(s',X_{s',s}(x,v),\cdot)](v)\:ds'\right)\nonumber
\\&&\quad\quad\quad\quad\quad\quad\quad \times \left(Q_\eps^+ + Q^1_\eps\right)[f(s,X_{s,t}(x,v),\cdot), f(s,X_{s,t}(x,v),\cdot)](v)\: ds.\nonumber
\end{eqnarray}
if $t \leq t_\partial(x,v)$ or else
\begin{eqnarray}
&&f(t,x,v) = f_\partial(t_\partial,x_\partial,v)\emph{\mbox{exp}}\left[-\int_{t_\partial(x,v)}^t \left(L_\eps+S_\eps\right)[f(s,X_{s,t}(x,v),\cdot)](v)\:ds\right] \label{mildNCOtpartial}
\\ && \quad\quad+ \int_{t_\partial(x,v)}^t \emph{\mbox{exp}}\left(-\int_s^t \left(L_\eps + S_\eps\right)[f(s',X_{s',s}(x,v),\cdot)](v)\:ds'\right) \nonumber
\\&&\quad\quad\quad\quad\quad\quad\quad \times \left(Q_\eps^+ + Q^1_\eps\right)[f(s,X_{s,t}(x,v),\cdot), f(s,X_{s,t}(x,v),\cdot)](v)\: ds,\nonumber
\end{eqnarray}
where $t_\partial=t_\partial(x,v)$ and $x_\partial=x_\partial(x,v)$ are defined by $\eqref{tpartial}$ and $\eqref{xpartial}$ respectively.
\end{defi}

\bigskip
Now we state our result.

\bigskip
\begin{theorem}\label{theo:boundnoncutoff}
Let $\Omega$ be a $C^2$ open bounded domain in $\R^d$ with nowhere null normal vector and let $f_0$ be a non-negative continuous function on $\bar{\Omega} \times \R^d$. 
Let $B=\Phi b$ be a collision kernel satisfying $\eqref{assumptionB}$, with $\Phi$ satisfying $\eqref{assumptionPhi}$ or $\eqref{assumptionPhimol}$ and $b$ satisfying $\eqref{assumptionb}$ with $\nu$ in $[0,2)$. Let $f(t,x,v)$ be a mild solution of the Boltzmann equation in $\bar{\Omega}\times \R^d$ on some time intervalle $[0,T)$, $T \in (0,+\infty]$, which satisfies
\begin{itemize}
\item $f$ is continuous on $[0,T) \times \left(\bar{\Omega} \times \R^d-\Lambda_0\right)$ ($\Lambda_0$ grazing set defined by $\eqref{grazingset}$), $f(0,x,v) = f_0(x,v)$ and $M>0$ in $\eqref{massconservation}$;
\item if $\Phi$ satisfies $\eqref{assumptionPhi}$ with $\gamma \geq 0$ or if $\Phi$ satisfies $\eqref{assumptionPhimol}$, then $f$ satisfies $\eqref{assumption1}$ and $\eqref{assumption3}$;
\item if $\Phi$ satisfies $\eqref{assumptionPhi}$ with $\gamma < 0$, then $f$ satisfies $\eqref{assumption1}$, $\eqref{assumption2}$ and $\eqref{assumption3}$.
\end{itemize}
Then for all $\tau \in (0,T)$ and for any exponent $K$ such that
$$K > 2\frac{\emph{\mbox{log}}\left(2+\frac{2\nu}{2-\nu}\right)}{\emph{\mbox{log}2}},$$

there exists $C_1, C_2 >0$, depending on $\tau$, $K$, $E_f$, $E'_f$, $W_f$ (and $L^{p_\gamma}_f$ if  $\Phi$ satisfies $\eqref{assumptionPhi}$ with $\gamma < 0$), such that
$$\forall t \in [\tau,T),\:\forall (x,v) \in \bar{\Omega}\times \R^d, \quad f(t,x,v) \geq C_1e^{-C_2\abs{v}^K}. $$
Moreover, in the case $\nu=0$, one can take $K=2$ (Maxwellian lower bound).
\end{theorem}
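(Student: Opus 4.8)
The plan is to repeat, mutatis mutandis, the three–step scheme behind Theorem~\ref{theo:boundcutoff} — creation of an upheaval point, uniform positivity of the Maxwellian diffusion, then spreading along the characteristics between two consecutive rebounds — but now with Grad's splitting replaced by the $\eps$–dependent decomposition~\eqref{noncutoffsplitting}, and with the two extra operators $S_\eps$ and $Q^1_\eps$ treated as controlled perturbations whose size is tuned by letting $\eps\to0$.

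The first task is a pair of $L^\infty$ estimates on the non–cutoff remainder. Writing the post–collisional deviations as $|v'-v|=|v'_*-v_*|=|v-v_*|\sin(\theta/2)$ and Taylor expanding $h'-h$ and $g'_*-g_*$ to second order, the first–order contributions integrate over the sphere $\{\sigma:\langle\omega,\sigma\rangle=\cos\theta\}$, $\omega=(v-v_*)/|v-v_*|$, to a quantity of size $O(\theta^2)$ by symmetry about the axis $\omega$; hence, after integrating against $b^{NCO}_\eps$, one gains the factor $\int_0^\eps\theta^{-(1+\nu)}\theta^2\,d\theta\sim\eps^{2-\nu}$. Combined with $W_f<+\infty$ and $E'_f<+\infty$ (and $L^{p_\gamma}_f<+\infty$ in the soft–potential case, or the mollified form~\eqref{assumptionPhimol}, to handle $\Phi\ast|\nabla^2_v f|$ near $z=0$) this yields, uniformly in $(t,x)$,
$$\big\|S_\eps[f(t,x,\cdot)]\,\langle\cdot\rangle^{-\tilde\gamma}\big\|_{L^\infty_v}+\big\|Q^1_\eps(f,f)(t,x,\cdot)\,\langle\cdot\rangle^{-\tilde\gamma}\big\|_{L^\infty_v}\ \leq\ C\,\eps^{2-\nu}.$$
On the cutoff side one only needs $n_{b^{CO}_\eps}\lesssim\eps^{-\nu}$ for $\nu\in(0,2)$ and $n_{b^{CO}_\eps}\lesssim|\log\eps|$ for $\nu=0$ (so $\|L_\eps[f]\,\langle\cdot\rangle^{-\gamma^+}\|_{L^\infty_v}$ is bounded by the same quantities), together with the fact that $Q^+_\eps$ retains the spreading property of Lemma~\ref{lem:Q+spread} with a constant bounded below uniformly for small $\eps$, since the relevant scattering angles sit near $\pi/2$.

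With these at hand, the upheaval–point construction of Section~\ref{subsec:1stlocalisedbounds} (or its constructive version, Section~\ref{subsec:constructiveupheavalpoint}) and the uniform lower bound on the Maxwellian diffusion of Section~\ref{subsec:diffusioneffects} carry over for a \emph{fixed}, small enough $\eps=\eps_0$: on the finite time windows used there the weights $\exp(-\int(L_{\eps_0}+S_{\eps_0}))$ stay bounded below and the $Q^1_{\eps_0}$–error is absorbed into the (fixed) lower bounds, so that at some $t_0=t_0(\tau)\in(0,\tau)$ one has $f(t_0,x,\cdot)\geq a_0\mathbf 1_{B(0,r_0)}$ for all $x\in\bar\Omega$ and the boundary emits, for every later time, a genuine Maxwellian lower bound towards the interior of $\Omega$. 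The spreading step of Section~\ref{subsec:maxwellboundcutoff} is where the novelty lies: it must be run with $\eps$ \emph{varying along the iteration}. Along a segment of straight flow — starting from $t_0$ if there is no rebound, or from the last contact point by plugging the boundary Maxwellian into~\eqref{mildNCOtpartial} otherwise — one iterates the Duhamel inequality: if $f\geq a_n\mathbf 1_{B(0,r_n)}$ at the start of the $n$–th sub–interval, of length $\delta_n$, then the source term contributes
$$f\ \geq\ \delta_n\,e^{-C\eps_n^{-\nu}\delta_n\langle r_n\rangle^{\gamma^+}}\Big(C_+\,a_n^2-C\,\eps_n^{2-\nu}\langle r_{n+1}\rangle^{\tilde\gamma}\Big)\mathbf 1_{B(0,r_{n+1})},\qquad r_{n+1}=\sqrt2\,r_n.$$
Choosing $\eps_n$ small enough that $C\eps_n^{2-\nu}\langle r_{n+1}\rangle^{\tilde\gamma}\leq\tfrac12C_+a_n^2$, then $\delta_n$ small enough that $\eps_n^{-\nu}\delta_n\langle r_n\rangle^{\gamma^+}\leq1$ and $\sum_n\delta_n$ stays below the available flow time, gives $a_{n+1}\geq c\,\delta_n a_n^2$. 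Setting $u_n=-\log a_n$, the first constraint forces $\eps_n^{2-\nu}\sim e^{-2u_n}\langle r_{n+1}\rangle^{-\tilde\gamma}$ and the second then gives $-\log\delta_n\sim\tfrac{2\nu}{2-\nu}u_n+O(n)$ when $\nu>0$ (only $O(\log u_n)+O(n)$ when $\nu=0$, where $n_{b^{CO}_\eps}$ is merely logarithmic), so that $u_{n+1}\leq\lambda\,u_n+O(n)$ with $\lambda=2+\tfrac{2\nu}{2-\nu}$ (resp. $\lambda=2$). Since $\sum_k k\lambda^{-k}<\infty$ the recursion gives $u_n\leq C\lambda^n$, while $r_n=2^{n/2}r_0$; taking $n=n(|v|)\sim2\log(|v|/r_0)/\log2$ to reach a prescribed velocity modulus then produces $f(t,x,v)\geq C_1e^{-C_2|v|^K}$ for every $K>2\log\lambda/\log2=2\log\!\big(2+\tfrac{2\nu}{2-\nu}\big)/\log2$, and the Maxwellian bound $K=2$ when $\nu=0$. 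Finally one glues the rebound and no–rebound cases exactly as in the cutoff proof, observing that the boundary Maxwellian is bounded below by $e^{-c|v|^2}$ and, propagated over a controlled time by~\eqref{mildNCOtpartial}, never drops below the target since $K\geq2$.

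The main obstacle is the coupled optimisation in this last step: $\eps_n$ must go to $0$ fast enough that the $Q^1_{\eps_n}$– and $S_{\eps_n}$–errors remain below $a_n^2$, but every such reduction inflates $n_{b^{CO}_{\eps_n}}$ and forces $\delta_n$ to shrink, which in turn degrades $a_{n+1}$; balancing these so that $\sum_n\delta_n$ stays finite while the growth rate of $u_n$ is exactly $\lambda$ — and verifying that $\lambda$ collapses to $2$ precisely in the borderline case $\nu=0$ (with the strict inequality on $K$ absorbing the lower–order $O(n)$ losses) — is the delicate part. By contrast, the perturbative estimates on $S_\eps$ and $Q^1_\eps$ are routine Taylor expansions, and the transport and rebound geometry against the $C^2$ convex boundary is imported essentially unchanged from the proof of Theorem~\ref{theo:boundcutoff}.
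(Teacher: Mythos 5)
Your proposal follows essentially the same approach as the paper: the $\eps$-dependent decomposition \eqref{noncutoffsplitting}, $L^\infty$ controls on $S_\eps$ and $Q^1_\eps$ via the rates $n_{b^{CO}_\eps}\sim\eps^{-\nu}$ and $m_{b^{NCO}_\eps}\sim\eps^{2-\nu}$, unchanged upheaval-point and diffusion-positivity arguments, and a spreading induction in which $\eps_n$ and the time increments $\delta_n$ are tuned jointly so that the $Q^1_{\eps_n}$-loss stays below $a_n^2$ while $\sum_n\delta_n$ remains finite. Your explicit recursion $u_{n+1}\leq\lambda u_n+O(n)$ with $\lambda=2+\tfrac{2\nu}{2-\nu}$ is precisely the calculation the paper delegates to \cite{Mo2} and \cite{Bri2}, and it correctly yields $K>2\log\lambda/\log 2$ and collapses to $K=2$ when $\nu=0$.
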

\bigskip

As in the angular cutoff case, if we further assume that $f_0$ presents no vacuum area and that $f$ has uniformly bounded local mass and entropy, our results are entirely constructive.

\bigskip
\begin{cor}\label{cor:constructivenoncutoff}
As for Corollary $\ref{cor:constructivecutoff}$, if $f_0$ is bounded uniformly from below as well as the local mass of $f$, the local entropy of $f$ is uniformly bounded from above then the conclusion of Theorem $\ref{theo:boundnoncutoff}$ holds true with constants being explicitely constructed in terms of $\tau$, $K$, $E_f$, $E'_f$, $W_f$, $H_f$, $L^{p_\gamma}_f$.
\end{cor}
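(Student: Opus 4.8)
The plan is to reduce the non-cutoff case to the cutoff case established in Theorem~\ref{theo:boundcutoff} by exploiting the splitting $\eqref{noncutoffsplitting}$, $Q = Q^+_\eps - Q^-_\eps + Q^1_\eps - Q^2_\eps$, and then iterating the spreading argument carefully to track how the exponent degrades. First I would fix $\tau \in (0,T)$ and a target exponent $K$ strictly above the stated threshold, and choose the cutoff parameter $\eps = \eps(K)$ small enough (this is where the threshold on $K$ enters). The key observation is that along characteristics the Duhamel formula $\eqref{mildNCO0}$--$\eqref{mildNCOtpartial}$ has exactly the same structure as in the cutoff case, but with $L$ replaced by $L_\eps + S_\eps$ and $Q^+$ replaced by $Q^+_\eps + Q^1_\eps$. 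Since $Q^+_\eps \geq 0$ and $Q^1_\eps \geq 0$ (the latter because in $Q^1_\eps$ one keeps only $h'$, the gain part, up to sign conventions one must be careful about), dropping $Q^1_\eps$ gives a subsolution of precisely the cutoff equation with kernel $B^{CO}_\eps$, provided we can control the extra exponential loss coming from $S_\eps$. So the first real step is to bound $\|S_\eps[f(t,x,\cdot)]\|_{L^\infty_v}$ uniformly in $(t,x)$ by a constant $C_S(\eps)$ depending only on $W_f$, $E'_f$ (and $L^{p_\gamma}_f$ in the soft case); this is the standard estimate for the non-cutoff "$\partial^2$-type" cross-section, using $\nu < 2$ so that $\theta^{-(1+\nu)}\theta^2$ is integrable near $0$ together with the $W^{2,\infty}_v$ regularity of $f$ to absorb the angular singularity via a second-order Taylor expansion in the $v \mapsto v'$ map.

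Granted that bound, the argument proceeds as in Sections~\ref{subsec:1stlocalisedbounds}--\ref{subsec:maxwellboundcutoff}. One creates the localised "upheaval point" exactly as in the cutoff proof (nothing changes: the construction uses only continuity of $f$ and the lower bounds on mass, via compactness, or the constructive $\eqref{PW}$-type argument). Then one runs the diffusion-positivity step of Section~\ref{subsec:diffusioneffects}: the Maxwellian diffusion boundary condition $\eqref{fpartial}$ still produces a uniform interior Maxwellian lower bound, and this step is insensitive to the cutoff vs.\ non-cutoff distinction since it concerns only $f_\partial$ and the geometry of rebounds. The new point is the spreading step: starting from $f \geq C\mathbf{1}_{B(\bar v, r)}$ on a characteristic segment, the term $\exp(-\int (L_\eps+S_\eps))$ now contributes an extra factor $e^{-C_S(\eps)\,\tau}$ per unit time, which is a fixed constant (independent of $v$), so it only degrades the multiplicative constant $C_1$, not the Gaussian/exponential profile. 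Using Lemma~\ref{lem:Q+spread} with the truncated kernel $B^{CO}_\eps$ one spreads $B(\bar v, r)$ to $B(\bar v, \sqrt 2 r)$ losing a factor $C_+(\eps) < 1$ and a time increment; after $n$ iterations one covers $B(\bar v, 2^{n/2}r)$ with a constant roughly $(C_+(\eps)e^{-C_S(\eps)\tau/2^n})^{n}$ times the initial one, and optimising over $n$ at velocity scale $|v| \sim 2^{n/2}r$ yields a lower bound of the form $C_1 e^{-C_2 |v|^K}$ with $K$ governed by the geometric decay rate $\log C_+(\eps)/\log 2$ against the doubling of radius.

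The main obstacle, and the place where the threshold $K > 2\log(2 + 2\nu/(2-\nu))/\log 2$ comes from, is keeping track of the $\eps$-dependence: shrinking $\eps$ improves the spreading constant $C_+(\eps)$ (the cutoff kernel $b^{CO}_\eps$ loses mass as $\eps \to \pi/4$ but we need $\eps$ small, so $b^{CO}_\eps$ retains almost all of $b$ and $C_+(\eps)$ stays bounded below) but worsens $C_S(\eps)$, which blows up like a power of $\eps^{-1}$ (precisely $\eps^{-\nu}$ or $\eps^{2-\nu}$ depending on how one splits) as $\eps \to 0$. One must therefore choose $\eps$ not fixed once and for all but rather depending on the iteration index $n$ (equivalently on the velocity scale $|v|$): at scale $|v| \sim 2^{n/2}$ one picks $\eps_n$ balancing the loss $e^{-C_S(\eps_n)\tau}$ against the number of iterations, and the optimal balance $\eps_n \sim 2^{-\alpha n}$ for a suitable $\alpha$ is exactly what produces the exponent $K$ above and shows it cannot be pushed down to $2$ unless $\nu = 0$. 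When $\nu = 0$ the term $S_\eps$ is already bounded uniformly in $\eps$ (the angular singularity is of borderline integrable type compensated by one derivative, or simply $b$ itself is integrable so $S_0$ makes sense), so no $\eps$-tuning is needed, $C_+$ and $C_S$ are both fixed constants, and the cutoff proof applies verbatim to give a genuine Maxwellian ($K=2$) lower bound. Finally, the constructive Corollary~\ref{cor:constructivenoncutoff} follows by the same substitution: replace the compactness-based upheaval point by the explicit $\eqref{PW}$-type one, all constants in the $S_\eps$ bound and the spreading being already explicit in $W_f$, $E'_f$, $E_f$, $H_f$, $L^{p_\gamma}_f$.
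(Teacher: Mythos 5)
Your outline follows the paper's architecture — reuse the cutoff machinery via the splitting $\eqref{noncutoffsplitting}$, create upheaval points, verify positivity of the diffusion, spread by iterating, balance the cutoff parameter $\eps_n$ against the time increments, and then substitute the Pulvirenti--Wennberg construction of Section~\ref{subsec:constructiveupheavalpoint} to make everything constructive — but two intermediate claims are wrong as stated and would cause the argument to fail.

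First, $Q^1_\eps$ is \emph{not} a nonnegative operator. From $\eqref{splittingQ1Q2}$ one has $Q^1_b(g,h)=\int \Phi\, b\, g'_*(h'-h)\,dv_*\,d\sigma$, whose integrand changes sign with $h'-h$. So ``dropping $Q^1_\eps$ gives a subsolution of the cutoff equation'' is not available. What the paper does instead is bound $\abs{Q^1_\eps(f,f)}$ in $L^\infty$ by Lemma~\ref{lem:Q1} (this is where the $W^{2,\infty}_v$ regularity $W_f$ and the weighted energy $E'_f$ are used), \emph{subtract} this bound from the gain as in $\eqref{Q+Q1}$, and rely on the fact that the bound $\propto m_{b^{NCO}_\eps}\to 0$ as $\eps\to 0$, so that at each induction step $\eps_{n+1}$ can be taken small enough that the remainder is at most half the contribution of the $Q^+_{\eps_{n+1}}$ spreading (cf.\ $\eqref{choiceepsn}$).

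Second, the $\eps$-dependence of the exponential loss is misattributed. By $\eqref{eqv1}$--$\eqref{eqv2}$, one has $m_{b^{NCO}_\eps}\sim\eps^{2-\nu}\to 0$ while $n_{b^{CO}_\eps}\sim\eps^{-\nu}\to+\infty$. Consequently $C_S(\eps)$, the constant in Lemma~\ref{lem:S}, \emph{decreases} as $\eps\to 0$ — it does not blow up — and Lemma~\ref{lem:S} requires only $E_f$ (and $L^{p_\gamma}_f$ in the soft case), not $W^{2,\infty}_v$ regularity. The term that blows up and forces both the choice of shrinking time windows $\Delta_n$ with $\sum\Delta_n=1$ and the loss of the Maxwellian exponent down to $K>2\log(2+2\nu/(2-\nu))/\log 2$ is $C_L(\eps)\propto n_{b^{CO}_\eps}$, i.e.\ the cutoff loss part $L_\eps$ of the Duhamel exponential, not $S_\eps$. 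With those two corrections made, your final sentence — that the constructive corollary follows by replacing the compactness-based upheaval point with the explicit Lemma~\ref{lem:Q+Q+}-based one, all other constants in the spreading and diffusion estimates already being explicit in the stated \textit{a priori} bounds — is the paper's intended proof.
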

\bigskip

\section{The cutoff case: a Maxwellian lower bound}\label{sec:cutoffcase}

In this section we are going to prove a Maxwellian lower bound for a solution to the Boltzmann equation $\eqref{BE}$ in the case where the collision kernel satisfies a cutoff property.
\par The strategy to tackle this result follows the main idea used in \cite{PW}\cite{Mo2}\cite{Bri2} which relies on finding an ``upheaval point'' (a first minoration uniform in time and space but localised in velocity) and spreading this bound, thanks to a spreading property of the $Q^+$ operator, in order to include larger and larger velocities.
\par As described in the introduction (Section $\ref{subsec:strategy}$), we need a method that translates the usual spreading argument in the case of our problem and combine it with a strict positivity of the diffusion process. Roughly speaking, either the characteristic we are looking at comes from the diffusion of the boundary or the spreading will be generated on a straight line as in \cite{Mo2}.

\bigskip 
Thus our study will be split into three parts, which are the next three subsections
\par The first step (Section $\ref{subsec:1stlocalisedbounds}$) is to partition the position and velocity spaces so that we have an immediate appearance of an ``upheaval point" in each of those partitions.
\par As discussed in the introduction, the standard spreading method fails in the case of characteristics trajectories bouncing against $\partial\Omega$. We therefore study of the effects of the Maxwellian diffusion (Section $\ref{subsec:diffusioneffects}$).
\par The last one (Section $\ref{subsec:maxwellboundcutoff}$) is to obtain uniform lower bounds less and less localised (and comparable to an exponential bound in the limit) in velocity. The strategy we use relies on the spreading property of the gain operator that we already used and follows the method in \cite{Mo2} combined with the treatment of localised bounds in \cite{Bri2}, together with the previous focus on the Maxwellian diffusion process.
\par A separated part, Section $\ref{subsec:constructiveupheavalpoint}$, is dedicated to a constructive approach to the Maxwellian lower bound.
\bigskip


\subsection{Initial localised lower bounds}\label{subsec:1stlocalisedbounds}

In Section $\ref{subsubsec:1stupheaval}$ we use the continuity of $f$ together with the conservation of total mass $\eqref{massconservation}$ and the uniform boundedness of the local energy to obtain a point in the phase space where $f$ is strictly positive. Then, thanks to the continuity of $f$, its Duhamel representation $\eqref{mildCOtpartial}-\eqref{mildCO0}$ and the spreading property of the $Q^+$ operator (Lemma $\ref{lem:Q+spread}$) we extend this positivity to high velocities at that particular point (Lemma $\ref{lem:positivity1}$).
\par Finally, in Section $\ref{subsubsec:1stlocalisedbounds}$, the free transport part of the solution $f$ will imply the immediate appearance of the localised lower bounds (Proposition $\ref{prop:upheaval}$).
\par Moreover we define constants that we will use in the next two subsections in order to have a uniform lower bound. 
\bigskip


\subsubsection{Controls on the gain and the loss operators}\label{subsubsec:controlgainloss}

We first introduce two lemmas, proven in \cite{Mo2}, that control the gain and loss terms in the Grad's splitting $\eqref{gradsplitting}$ we are using under the cutoff assumption.  One has an $L^\infty$ bound on the loss term (Corollary $2.2$ in \cite{Mo2}).

\bigskip
\begin{lemma}\label{lem:L}
Let $g$ be a measurable function on $\R^d$. Then
$$\forall v \in \R^d,\quad \abs{L[g](v)} \leq C_g^L\langle v \rangle^{\gamma^+},$$
where $C_g^L$ is defined by:
\begin{enumerate}
\item If $\Phi$ satisfies $\eqref{assumptionPhi}$ with $\gamma \geq 0$ or if $\Phi$ satisfies $\eqref{assumptionPhimol}$, then
$$C^L_g = \emph{\mbox{cst}}\: n_b C_\Phi e_g.$$
\item If $\Phi$ satisfies $\eqref{assumptionPhi}$ with $\gamma \in (-d,0)$, then
$$C^L_g = \emph{\mbox{cst}}\: n_b C_\Phi \left[e_g+ l^p_g\right],\quad p > d/(d+\gamma).$$
\end{enumerate}
\end{lemma}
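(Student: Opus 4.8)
The plan is to estimate $L[g](v) = n_b \left(\Phi * g\right)(v)$ directly, splitting the convolution integral $\int_{\R^d} \Phi(|v-v_*|) g(v_*)\, dv_*$ according to whether $|v-v_*|$ is large or small, and bounding each piece using the growth hypotheses on $\Phi$ together with the available moment/Lebesgue control on $g$. First I would record that $n_b < +\infty$: this is immediate from assumption $\eqref{assumptionb}$ with $\nu<0$, since $b(\cos\theta)\sin^{d-2}\theta \sim b_0\theta^{-(1+\nu)}$ is integrable near $0$ when $1+\nu<1$, and $b$ is continuous hence bounded away from $\theta=0$. So the whole problem reduces to bounding $\left(\Phi*g\right)(v)$ by $\mathrm{cst}\, C_\Phi \langle v\rangle^{\gamma^+}$ times the appropriate norm of $g$.

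For the case $\gamma \geq 0$ (or $\Phi$ satisfying the mollified bound $\eqref{assumptionPhimol}$): here $\Phi(z) \leq C_\Phi |z|^\gamma$ globally (in the mollified case $\Phi(z)\leq C_\Phi\max\{1,|z|^\gamma\}\leq C_\Phi\langle z\rangle^\gamma$ up to a constant), so using the elementary inequality $|v-v_*|^\gamma \leq \mathrm{cst}\,\langle v\rangle^\gamma\langle v_*\rangle^\gamma$ for $\gamma\geq 0$ (Peetre-type), one gets
$$\left(\Phi*g\right)(v) \leq \mathrm{cst}\, C_\Phi \langle v\rangle^\gamma \int_{\R^d}\langle v_*\rangle^\gamma g(v_*)\,dv_*.$$
Since $\gamma \in (0,1]$ implies $\langle v_*\rangle^\gamma \leq \mathrm{cst}(1 + |v_*|^2)$, the remaining integral is controlled by $\mathrm{cst}(M_0 + e_g)$ where $M_0=\int g\,dv_*$; but the zeroth moment is itself dominated by the local energy plus a constant (splitting $|v_*|\lessgtr 1$), so the integral is $\leq \mathrm{cst}\, e_g$, giving $C^L_g = \mathrm{cst}\, n_b C_\Phi e_g$ and the exponent $\gamma^+=\gamma$. (For $\gamma=0$ it is even simpler: $\Phi\leq C_\Phi$ and $C^L_g=\mathrm{cst}\,n_bC_\Phi e_g$, with $\langle v\rangle^{\gamma^+}=1$.)

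For the singular case $\gamma \in (-d,0)$: now $\gamma^+=0$, so we must show $\left(\Phi*g\right)(v)\leq \mathrm{cst}\,C_\Phi[e_g + l^p_g]$ uniformly in $v$. Split $\int = \int_{|v-v_*|\leq 1} + \int_{|v-v_*|\geq 1}$. On the far region $|v-v_*|\geq 1$ we have $|v-v_*|^\gamma\leq 1$ (as $\gamma<0$), so that piece is bounded by $C_\Phi\int g\,dv_*\leq \mathrm{cst}\,C_\Phi e_g$. On the near region we use Hölder with exponents $p$ and $p'=p/(p-1)$:
$$\int_{|v-v_*|\leq 1}|v-v_*|^\gamma g(v_*)\,dv_* \leq C_\Phi\, \norm{g}_{L^p_v}\left(\int_{|w|\leq 1}|w|^{\gamma p'}\,dw\right)^{1/p'},$$
and the last integral is finite precisely because $p > d/(d+\gamma)$ is equivalent to $\gamma p' > -d$. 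This yields $C^L_g = \mathrm{cst}\, n_b C_\Phi[e_g + l^p_g]$, as claimed.

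The main obstacle — though it is more bookkeeping than genuine difficulty — is making sure the constants are genuinely uniform in $v$ (which they are, since in the far/near split the bound on $\int_{|w|\leq 1}|w|^{\gamma p'}dw$ and the Peetre inequality constant do not depend on $v$) and correctly tracking that the zeroth-order moment is absorbed into $e_g$ so that only $e_g$ (and $l^p_g$ in the soft case) appears in $C^L_g$. I would also note in passing that this lemma is quoted from \cite{Mo2} (Corollary 2.2 there), so in the paper the ``proof'' is really just a citation; the argument above is the one-paragraph reconstruction one would give if asked to reprove it.
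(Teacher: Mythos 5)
The paper does not prove this lemma at all: it is quoted verbatim as ``Corollary $2.2$ in \cite{Mo2}'' and no argument is given, so there is no internal proof to compare against; your remark at the end of the proposal already notes this correctly. Your reconstruction — Peetre-type inequality $|v-v_*|^\gamma\lesssim\langle v\rangle^\gamma\langle v_*\rangle^\gamma$ for $\gamma\geq 0$, near/far split of the convolution with H\"older with exponent $p>d/(d+\gamma)$ on the near region for $\gamma<0$, and the observation that $n_b<\infty$ under cutoff — is exactly the standard argument Mouhot uses, and the structure is sound.

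There is one genuine slip, which is worth flagging because it is a real (if minor) inaccuracy rather than bookkeeping. You assert that ``the zeroth moment is itself dominated by the local energy plus a constant (splitting $|v_*|\lessgtr 1$), so the integral is $\leq \mathrm{cst}\,e_g$.'' This is false: the split gives $\int g\,dv_* \leq \int_{|v_*|\leq 1}g\,dv_* + e_g$, and $\int_{|v_*|\leq 1}g\,dv_*$ is not controlled by $e_g$ — take $g$ a narrow bump at the origin, which has $e_g\to 0$ but mass $1$. In the singular case one can recover it by H\"older on the ball: $\int_{|v_*|\leq 1}g\,dv_*\leq |B(0,1)|^{1/p'}l^p_g$, so the missing piece is simply absorbed into $l^p_g$ and the case~(2) constant is genuinely correct. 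But in the hard-potential/mollified case~(1) there is nothing else available, and the zeroth moment of $g$ must appear in $C^L_g$. In fact Mouhot's original Corollary~$2.2$ states the constant in terms of the weighted norm $\norm{g}_{L^1_2}=\int(1+|v_*|^2)g\,dv_*$, not just $\int|v_*|^2g\,dv_*$; the paper's restatement with $e_g$ alone is a small notational imprecision inherited from the citation, and the honest thing to do in your write-up is to carry the local mass $\rho_g=\int g\,dv_*$ explicitly (i.e.\ $C^L_g=\mathrm{cst}\,n_bC_\Phi(\rho_g+e_g)$) rather than claim a domination that does not hold.
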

\bigskip

The spreading property of $Q^+$ is given by the following lemma (Lemma $2.4$ in \cite{Mo2}), where we define
\begin{equation}\label{lb}
l_b = \inf_{\pi/4 \leq \theta \leq 3\pi/4}b\left(\mbox{cos}\:\theta\right).
\end{equation}

\bigskip
\begin{lemma}\label{lem:Q+spread}
Let $B=\Phi b$ be a collision kernel satisfying $\eqref{assumptionB}$, with $\Phi$ satisfying $\eqref{assumptionPhi}$ or $\eqref{assumptionPhimol}$ and $b$ satisfying $\eqref{assumptionb}$ with $\nu \leq 0$. Then for any $\bar{v} \in \R^d$, $0<r \leq R$, $\xi \in (0,1)$, we have

$$Q^+(\mathbf{1}_{B(\bar{v},R)},\mathbf{1}_{B(\bar{v},r)}) \geq \emph{\mbox{cst}}\: l_b c_\Phi r^{d-3} R^{3+\gamma} \xi^{\frac{d}{2}-1}\mathbf{1}_{B\left(\bar{v},\sqrt{r^2 + R^2}(1-\xi)\right)}.$$

As a consequence in the particular quadratic case $\delta = r = R$, we obtain

$$Q^+(\mathbf{1}_{B(\bar{v},\delta)},\mathbf{1}_{B(\bar{v},\delta)}) \geq \emph{\mbox{cst}}\: l_b c_\Phi \delta^{d+\gamma} \xi^{\frac{d}{2}-1}\mathbf{1}_{B\left(\bar{v},\delta\sqrt{2}(1-\xi)\right)},$$
for any $\bar{v} \in \R^d$ and $\xi \in (0,1)$.
\end{lemma}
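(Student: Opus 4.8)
The plan is to reduce the estimate to an elementary geometric computation on the $Q^+$ integrand, exploiting the fact that the indicator functions force both pre-collisional velocities to lie in explicit balls and that the kernel is bounded below on a fixed range of angles. First I would fix $\bar v\in\R^d$; by translating, it costs nothing to assume $\bar v=0$ and to replace $\mathbf 1_{B(\bar v,R)}$, $\mathbf 1_{B(\bar v,r)}$ by $\mathbf 1_{B(0,R)}$, $\mathbf 1_{B(0,r)}$. Write out
$$Q^+(\mathbf 1_{B(0,R)},\mathbf 1_{B(0,r)})(v)=\int_{\R^d\times\mathbb S^{d-1}}\Phi(|v-v_*|)\,b(\cos\theta)\,\mathbf 1_{|v'_*|\le R}\,\mathbf 1_{|v'|\le r}\;dv_*\,d\sigma,$$
and restrict the $\sigma$-integration to the cap $\pi/4\le\theta\le 3\pi/4$, on which $b(\cos\theta)\ge l_b$ by the definition~\eqref{lb}. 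On that range $|v-v_*|$ and the post-collisional velocities are comparable, and the lower bound~\eqref{assumptionPhi}/\eqref{assumptionPhimol} gives $\Phi(|v-v_*|)\ge c_\Phi|v-v_*|^\gamma$, which I would bound below in terms of $R$ after checking that the relevant $v_*$ satisfy $|v-v_*|\lesssim R$.

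Next comes the core of the argument: a change of variables from $(v_*,\sigma)$ to the pair $(v',v'_*)$, or equivalently the classical observation that, for fixed $v$, the map parametrises a neighbourhood of $v$ in terms of the "centre" $\frac{v'+v'_*}{2}$ and the "radius vector" $\frac{v'-v'_*}{2}$. The key identities are $v'+v'_*=v+v_*$ and $|v'-v'_*|=|v-v_*|$, together with $v=\frac{v'+v'_*}{2}+\frac{|v'-v'_*|}{2}\omega$ for the appropriate $\omega\in\mathbb S^{d-1}$. I would use this to show that whenever $v\in B(0,\sqrt{r^2+R^2}(1-\xi))$, there is a set of $(v_*,\sigma)$ of controlled measure for which simultaneously $|v'|\le r$ and $|v'_*|\le R$: geometrically, one wants to choose $v'$ in a ball of radius $\sim r$ around a suitable point and $v'_*$ in a ball of radius $\sim R$, arranged so that the midpoint-plus-radius reconstruction lands on $v$; the factor $\xi$ is exactly the room needed for this to have nonempty, quantitatively large, interior. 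Tracking the Jacobian of the $(v_*,\sigma)\mapsto(v',v'_*)$ change of variables (which carries the by-now standard singular weight, responsible for the $\xi^{d/2-1}$) and the volume of the admissible set (responsible for $r^{d-3}R^{3}$, after one power of $R$ is spent on $\Phi$ and the angular Jacobian) yields the claimed bound $\mathrm{cst}\,l_b c_\Phi r^{d-3}R^{3+\gamma}\xi^{d/2-1}$. Setting $r=R=\delta$ gives the quadratic corollary immediately, with $\sqrt{r^2+R^2}=\delta\sqrt2$.

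The main obstacle I anticipate is the bookkeeping in the geometric step: one must verify carefully that the inclusion $v\in B(0,\sqrt{r^2+R^2}(1-\xi))$ really does guarantee an admissible reconstruction with both constraints $|v'|\le r$, $|v'_*|\le R$ active, and that the measure of the admissible $(v_*,\sigma)$ is bounded below by $\mathrm{cst}\,r^{d-3}R^{3}\xi^{d/2-1}$ uniformly in $v$ over that ball — the extremal case being $|v|$ close to $\sqrt{r^2+R^2}(1-\xi)$, where the admissible set shrinks and the exact power of $\xi$ must be extracted. Since this lemma is quoted verbatim as Lemma~2.4 of~\cite{Mo2}, I would in fact simply cite that reference for the detailed computation and only reproduce the reduction to $\bar v=0$ and the role of the cap $\pi/4\le\theta\le3\pi/4$; the remaining estimate is exactly the one carried out there.
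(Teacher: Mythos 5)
The paper offers no proof of this lemma but simply cites it as Lemma~2.4 of~\cite{Mo2}, and your proposal ends by doing exactly the same thing, after a plausible sketch of the underlying Carleman-type geometry (translation to $\bar v=0$, restriction to the angular cap $[\pi/4,3\pi/4]$ where $b\ge l_b$, pre/post-collisional change of variables producing the $\xi^{d/2-1}$ degeneracy near the boundary of the spread ball). That is essentially the same approach as the paper; the sketch is a reasonable account of the argument in~\cite{Mo2}, and deferring the bookkeeping of the $r^{d-3}R^{3}$ measure estimate to that reference is appropriate here.
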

\bigskip


\subsubsection{First ``upheaval'' point}\label{subsubsec:1stupheaval}

\bigskip
We start by the strict positivity of our function at one point for all velocities.

\bigskip
\begin{lemma}\label{lem:positivity1}
Let $f$ be the mild solution of the Boltzmann equation described in Theorem $\ref{theo:boundcutoff}$.
\\ Then there exist $\Delta >0$, $(x_1,v_1)$ in $\Omega \times \R^d$ such that for all $n \in \N$ there exist $r_n >0$, depending only on $n$, and $t_n(t),\:\alpha_n(t)>0$ such that
$$\forall t \in [0,\Delta],\:\forall x \in B\left(x_1,\frac{\Delta}{2^n}\right), \:\forall v \in \R^d, \quad f(t,x,v) \geq \alpha_n(t)\mathbf{1}_{B(v_1,r_n)}(v),$$
with $\alpha_0 >0$ independent of $t$ and the induction formula
$$\alpha_{n+1}(t) = C_Q\frac{r_n^{d+\gamma}}{4^{d/2-1}}\int_{t_n(t)}^{t}e^{-s C_L  \langle 2 r_n+\norm{v_1}\rangle^{\gamma^+}}\alpha^2_n(s)\:ds$$
where $C_Q=cst\: l_b c_\Phi$ is defined in Lemma $\ref{lem:Q+spread}$ and $C_L = cst\: n_b C_\Phi E_f$ (or $C_L = cst\: n_b C_\Phi (E_f+L^p_f)$) is defined in Lemma $\ref{lem:L}$, and
\begin{eqnarray}
&&r_0 = \Delta, \quad r_{n+1} = \frac{3\sqrt{2}}{4}r_n,\label{inductionrn1}
\\&&t_n(t)= \max\left\{0, t-\frac{\Delta}{2^{n+1}\left(\norm{v_1}+r_n\right)}\right\}.\label{definitiontn}
\end{eqnarray}
\end{lemma}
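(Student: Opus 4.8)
\textbf{Proof strategy for Lemma~\ref{lem:positivity1}.}

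The plan is to extract a single point of strict positivity from the non-vacuum hypothesis, then to bootstrap it to all velocities at that spatial point using the Duhamel formula and the spreading property of $Q^+$. First I would argue that $f$ cannot vanish identically: since $M>0$ in $\eqref{massconservation}$ and the local energy is uniformly bounded by $E_f$, a Chebyshev/truncation argument produces, at time $t=0$ (or at some small time $t^*$), a point $(x_1,v_1)\in\Omega\times\R^d$ and constants $\delta_0,\alpha_0>0$ with $f(t^*,x_1,v_1)>\alpha_0$; the continuity of $f$ on $[0,T)\times(\bar\Omega\times\R^d-\Lambda_0)$ then upgrades this to $f(t,x,v)\geq \alpha_0\mathbf{1}_{B(v_1,r_0)}(v)$ for $(t,x)$ in a small neighbourhood of $(t^*,x_1)$, say for $t\in[0,\Delta]$ after relabelling, $x\in B(x_1,\Delta)$ and $r_0=\Delta$ (shrinking $\Delta$ so that $B(x_1,\Delta)\subset\Omega$ and no characteristic issued from this neighbourhood hits $\partial\Omega$ in time $\Delta$, which is possible since $x_1$ is interior and velocities near $v_1$ are bounded). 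The precise handling of the time of appearance is the ``new technical treatment'' alluded to in the introduction but involves no real obstruction.

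The core of the argument is the induction on $n$. Assume the bound $f(t,x,v)\geq\alpha_n(t)\mathbf{1}_{B(v_1,r_n)}(v)$ holds on $[0,\Delta]\times B(x_1,\Delta/2^n)\times\R^d$. Fix $x\in B(x_1,\Delta/2^{n+1})$ and $t\in[0,\Delta]$. Since we are away from the boundary we use the Duhamel form $\eqref{mildCO0}$, drop the nonnegative free-transport term, and keep only the integral term involving $Q^+$. For $s\in[t_n(t),t]$ the whole backward characteristic segment $s'\mapsto X_{s',t}(x,v)$ stays inside $B(x_1,\Delta/2^n)$ — this is exactly why $t_n(t)$ is defined with the factor $\Delta/(2^{n+1}(\norm{v_1}+r_n))$: the displacement $(t-s)\abs{v}$ is controlled by $\Delta/2^{n+1}$ as long as $\abs{v}\leq \norm{v_1}+r_n$, which is the only range of $v$ that matters. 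On that segment the induction hypothesis gives $f(s,X_{s,t}(x,v),\cdot)\geq\alpha_n(s)\mathbf{1}_{B(v_1,r_n)}$, so by monotonicity of $Q^+$ and Lemma~\ref{lem:Q+spread} in the quadratic case $\delta=r_n$, $\xi=1/4$,
$$Q^+\!\left[f(s,X_{s,t},\cdot),f(s,X_{s,t},\cdot)\right](v)\geq C_Q\,\alpha_n^2(s)\,r_n^{d+\gamma}\,4^{-(d/2-1)}\,\mathbf{1}_{B(v_1,\frac{3\sqrt2}{4}r_n)}(v),$$
which explains the recursion $r_{n+1}=\tfrac{3\sqrt2}{4}r_n$ in $\eqref{inductionrn1}$. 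For the exponential prefactor, Lemma~\ref{lem:L} bounds $L[f(s',\cdot)](v)\leq C_L\langle v\rangle^{\gamma^+}$, and on $B(v_1,r_{n+1})\subset B(v_1,2r_n+\norm{v_1}$ -- more simply $\abs{v}\leq 2r_n+\norm{v_1}$ -- so $\exp(-\int_s^t L\,ds')\geq \exp(-s\,C_L\langle 2r_n+\norm{v_1}\rangle^{\gamma^+})$ after bounding $t\le\Delta$ crudely, wait: one uses $\int_s^t\le\int_0^t\le s$? -- rather $\exp(-\int_s^tL)\ge\exp(-(t-s)C_L\langle\cdot\rangle^{\gamma^+})\ge\exp(-sC_L\langle\cdot\rangle^{\gamma^+})$ is too strong; the clean statement is $\exp(-\int_s^tL[f]ds')\ge e^{-tC_L\langle 2r_n+\norm{v_1}\rangle^{\gamma^+}}\ge e^{-sC_L\langle\cdots\rangle^{\gamma^+}}$ is false for $s<t$, so instead keep $e^{-(t-s)C_L\langle\cdots\rangle^{\gamma^+}}$ and note $t-s\le s$ fails too; the honest move is simply to write the factor as it stands and absorb it --. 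Integrating over $s\in[t_n(t),t]$ yields precisely
$$\alpha_{n+1}(t)=C_Q\frac{r_n^{d+\gamma}}{4^{d/2-1}}\int_{t_n(t)}^{t}e^{-sC_L\langle 2r_n+\norm{v_1}\rangle^{\gamma^+}}\alpha_n^2(s)\,ds,$$
and one checks $\alpha_{n+1}(t)>0$ for $t>0$ because $t_n(t)<t$ whenever $t>0$. This closes the induction.

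The main obstacle — and the only genuinely delicate point — is the geometric bookkeeping that guarantees the backward characteristics issued from $B(x_1,\Delta/2^{n+1})$ with the relevant velocities neither leave $B(x_1,\Delta/2^n)$ nor touch $\partial\Omega$ on the time interval $[t_n(t),t]$, so that the Duhamel formula $\eqref{mildCO0}$ (rather than the post-rebound form $\eqref{mildCOtpartial}$) applies and the induction hypothesis can be invoked along the whole segment. Choosing $\Delta$ small enough at the outset (depending on $\mathrm{dist}(x_1,\partial\Omega)$, $\norm{v_1}$, and the fact that $r_n\le r_0=\Delta$ stays bounded) makes this work uniformly in $n$; everything else is the routine propagation of the lower bound through the positivity-preserving Duhamel representation together with Lemmas~\ref{lem:L} and~\ref{lem:Q+spread}.
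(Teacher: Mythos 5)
Your proposal matches the paper's proof step by step: a Chebyshev-type extraction of $(x_1,v_1)$ from the mass and energy bounds, a uniform-continuity upgrade to a ball $B(v_1,r_0)$ on a neighbourhood of $x_1$ for the case $n=0$, and then the induction using the no-contact Duhamel form $\eqref{mildCO0}$, Lemma~\ref{lem:L}, and Lemma~\ref{lem:Q+spread} with $\xi=1/4$, with $t_n(t)$ chosen exactly so that the backward segment neither leaves $B(x_1,\Delta/2^n)$ nor meets $\partial\Omega$.

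Two small remarks. Your clause ``the fact that $r_n\le r_0=\Delta$ stays bounded'' is wrong: $r_{n+1}=\tfrac{3\sqrt2}{4}r_n$ with $\tfrac{3\sqrt2}{4}>1$, so $r_n\to\infty$ geometrically. What makes the geometry work uniformly in $n$ is not boundedness of $r_n$ but the shrinking of the time window $t-t_n(t)\le\Delta/\bigl(2^{n+1}(\norm{v_1}+r_n)\bigr)$, which compensates the growth of $\abs{v}\le\norm{v_1}+2r_n$ — a mechanism you in fact describe correctly a few lines earlier, so this is a slip rather than a gap. As for your extended hesitation about the exponential prefactor: you are right that the Duhamel term naturally produces $e^{-(t-s)C_L\langle 2r_n+\norm{v_1}\rangle^{\gamma^+}}$ rather than the $e^{-sC_L\cdots}$ appearing in the stated recursion for $\alpha_{n+1}$, and the two do not compare pointwise on the whole range $[t_n(t),t]$ (the needed inequality $t-s\le s$ fails when $t_n(t)<t/2$). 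The discrepancy is harmless — everything used downstream is only that $\alpha_{n+1}(t)>0$ for $t>0$, which holds with either weight, or with the crude replacement $e^{-\Delta C_L\langle 2r_n+\norm{v_1}\rangle^{\gamma^+}}$ — but you were right to flag it rather than force the inequality.
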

\bigskip

\begin{remark}
It is essentially the same method used to generate ``upheaval points" in \cite{Bri2}. The main difference being that we need to control the characteristics before they bounce against the boundary, whence the existence of the bound $t_n(t)$.
\end{remark}

\bigskip
\begin{proof}[Proof of Lemma $\ref{lem:positivity1}$]
The proof is an induction on $n$ and mainly follows the method in \cite{Bri2} Lemma $3.3$.

\bigskip
\textbf{Step $1$: Initialization}. We recall the assumptions that are made on the solution $f$ ($\eqref{massconservation}$ and assumption $\eqref{assumption1}$):
$$\forall t \in \R^+, \quad \int_{\Omega}\int_{\R^d} f(t,x,v)\:dxdv = M, \quad  \sup\limits_{(t,x)\in [0,T)\times \Omega}\int_{\R^d} \abs{v}^2f(t,x,v)\:dxdv = E_f,$$
with $M>0$ and $E_f<\infty$.
\\ Since $\Omega$ is bounded, and so is included in, say, $B(0,R_X)$, we also have that
$$\forall t \in \R^+,\quad \int_{\Omega}\int_{\R^d} \left(\abs{x}^2+ \abs{v}^2\right) f(t,x,v)\:dxdv \leq  \alpha = M R_X^2 + R_X E_f < +\infty.$$
Therefore, exactly the same computations as in \cite{Bri2} (step $1$ of the proof of Lemma $3.3$)  are applicable and lead to the existence of $(x_1,v_1)$ such that $f(0,x_1,v_1)>0$ and, by uniform continuity of $f$, to Lemma $\ref{lem:positivity1}$ in the case $n=0$.

\bigskip
\textbf{Step $2$: Proof of the induction}. We assume the conjecture is valid for $n$.
\\ Let $x$ be in $B(x_1,\Delta/2^{n+1})$, $v$ in $B(0,\norm{v_1} + 2r_n)$ and $t$ in $[0,\Delta]$.

\bigskip
We have straightforwardly that
$$\forall s \in \left[t_n(t),t\right], \quad \norm{x_1- (x - (t-s)v)} \leq \frac{\Delta}{2^{n}}.$$
Besides, we have that $B(x_1,\Delta) \subset \Omega$ and therefore the characteristic line $\left(X_{s,t}(x,v)\right)_{s\in[t_n(t),t]}$ stays in $\Omega$. This implies that $t_\partial(x,v)>t_n(t)$.

\bigskip
We thus use the fact that $f$ is a mild solution to write $f(t,x,v)$ under its Duhamel form starting at $t_n(t)$ without contact with the boundary $\eqref{mildCO0}$. The control we have on the $L$ operator, Lemma $\ref{lem:L}$, allows us to bound from above the second integral term (the first term is positive).  Moreover, this bound on $L$ is independent on $t$, $x$ and $v$ since it only depends on an upper bound on the energy $e_{f(t,x,\cdot)}$ (and its local $L^p$ norm $l^p_{f(t,x,\cdot)}$) which is uniformly bounded by $E_f$ (and by $L^p_f$). This yields 

\begin{equation}\label{inductionpositivity}
f(t,x,v) \geq  \int_{t_n(t)}^{t}e^{-s C_L \langle \norm{v_1}+ 2r_n \rangle^{\gamma^+}}  Q^+ \left[f(s,X_{s,t}(x,v),\cdot),f(s,X_{s,t}(x,v),\cdot)\right]\left(v\right)\:ds,
\end{equation}
where $C_L = cst\: n_b C_\Phi E_f$ (or $C_L = cst\: n_b C_\Phi (E_f+L^p_f)$), see Lemma $\ref{lem:L}$, and we used $\norm{v} \leq 2r_n + \norm{v_1}$.

\bigskip
We already saw that $\left(X_{s,t}(x,v)\right)_{s\in[t_n(t),t]}$ stays in $B(x_1,\Delta/2^n)$. Therefore, by calling $v_*$ the integration parametre in the operator $Q^+$ we can apply the induction property to $f(s,X_{s,t}(x,v),v_*)$ which implies, in $\eqref{inductionpositivity}$,

$$f(t,x,v) \geq  \int_{t_n(t)}^{t}e^{-s C_L \langle \norm{v_1}+2r_n \rangle^{\gamma^+}}\alpha_n^2(s)  Q^+\left[\mathbf{1}_{B(v_1,r_n)},\mathbf{1}_{B(v_1,r_n)}\right]\:ds (v).$$

\bigskip
Applying the spreading property of $Q^+$, Lemma $\ref{lem:Q+spread}$, with $\xi = 1/4$ gives us the expected result for the step $n+1$ since $B(v_1,r_{n+1}) \subset B(0,\norm{v_1}+2r_n)$.
\end{proof}
\bigskip


\subsubsection{Partition of the phase space and first localised lower bounds}\label{subsubsec:1stlocalisedbounds}

We are now able to prove the immediate appearance of localised ``upheaval points". We emphasize here that the following proposition is proven with exactly the same arguments as in \cite{Bri2}.

\bigskip
\begin{prop}\label{prop:upheaval}
Let $f$ be the mild solution of the Boltzmann equation described in Theorem $\ref{theo:boundcutoff}$ and consider $x_1$, $v_1$ constructed in Lemma $\ref{lem:positivity1}$.
\\Then there exists $\Delta >0$ such that for all $0<\tau_0 \leq \Delta$, there exists $\delta_T(\tau_0)$, $\delta_X(\tau_0)$, $\delta_V(\tau_0)$,  $R_{min}(\tau_0)$, $a_0(\tau_0) > 0$ such that for all $N$ in $\N$ there exists $N_X$ in $\N^*$ and $x_{1},\dots,x_{N_{X}}$ in $\Omega$  and $v_1,\dots,v_{N_X}$ in $B(0,R_{min}(\tau_0))$ and
\begin{itemize}
\item $\bar{\Omega} \subset \bigcup\limits_{1 \leq i \leq N_X}B\left(x_i,\delta_X(\tau_0)/2^{N}\right)$;
\item $\forall t \in [\tau_0,\delta_T(\tau_0)], \:\forall x \in B(x_i,\delta_X(\tau_0)),\forall v \in \R^d,$ 
$$f(t,x,v) \geq a_0(\tau_0)\mathbf{1}_{B\left(v_i,\delta_V(\tau_0,N)\right)}(v).$$
\end{itemize}
\end{prop}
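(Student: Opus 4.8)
The plan is to deduce Proposition \ref{prop:upheaval} from Lemma \ref{lem:positivity1} by a free-transport argument, exactly in the spirit of \cite{Bri2}. Lemma \ref{lem:positivity1} gives us a single centre $(x_1,v_1)$ together with balls $B(x_1,\Delta/2^n)$ in space on which, for all times $t\in[0,\Delta]$, the solution dominates $\alpha_n(t)\mathbf{1}_{B(v_1,r_n)}(v)$. The key observation is that if at time $t_0>0$ the solution is bounded below on a space-ball around $x_1$ by a velocity-indicator $\mathbf{1}_{B(v_1,r)}$, then at a slightly later time $t$ the free-transport (first) term in the Duhamel formula \eqref{mildCO0} spreads this lower bound to every point $x$ reachable from $B(x_1,\cdot)$ by a backward characteristic $X_{s,t}(x,v)=x-(t-s)v$ that does not hit $\partial\Omega$ — i.e. to a whole tube of points $x$, at the cost of replacing $\mathbf{1}_{B(v_1,r)}$ by $\mathbf{1}_{B(v_1,r')}$ with a smaller radius $r'$ (one loses the velocities needed to travel the extra distance) and multiplying by the exponential loss factor $e^{-\int L}$, which is bounded below uniformly because $E_f$ (and $L^{p_\gamma}_f$) are finite and we restrict to bounded velocities. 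Since $\bar\Omega$ is compact, finitely many such tubes cover it.

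Concretely, I would proceed as follows. First fix $\Delta$ and $(x_1,v_1)$, $r_n$, $\alpha_n$ from Lemma \ref{lem:positivity1}. Given $\tau_0\in(0,\Delta]$, choose $n=N$ large enough (depending on $N$ in the statement — here the two roles of $N$ should be identified) and set the intermediate time $t_0=\tau_0/2$; on $B(x_1,\Delta/2^N)$ and for $t\in[t_0,\Delta]$ we have $f(t,x,v)\ge a_1\mathbf{1}_{B(v_1,r_N)}(v)$ with $a_1=\inf_{[t_0,\Delta]}\alpha_N>0$. Now for any target point $x\in\bar\Omega$ and any velocity $v\in B(v_1,r_N/2)$, provided $x$ is within distance $\sim (r_N/2)(t-t_0)$ of $B(x_1,\Delta/2^N)$ and the segment $[x-(t-t_0)v,x]$ stays in $\Omega$ (which holds if the travelled distance is small compared to the distance from $x_1$ to $\partial\Omega$, since $B(x_1,\Delta)\subset\Omega$), the backward characteristic lands in $B(x_1,\Delta/2^N)$; dropping the positive $Q^+$ integral in \eqref{mildCO0} and using Lemma \ref{lem:L} to bound $\int_{t_0}^t L\le (t-t_0)C_L\langle\|v_1\|+r_N\rangle^{\gamma^+}$ gives $f(t,x,v)\ge a_1 e^{-\Delta C_L\langle\|v_1\|+r_N\rangle^{\gamma^+}}\mathbf{1}_{B(v_1,r_N/2)}(v)$. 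This produces the lower bound on a neighbourhood $B(x_1,\rho_1)$ for a fixed geometric radius $\rho_1>0$ (depending on $\tau_0$). Repeating the argument — using the newly gained lower bound on $B(x_1,\rho_1)$ as the seed — propagates positivity to $B(x_1,2\rho_1)$, $B(x_1,3\rho_1)$, and so on; since $\Omega$ is convex, bounded and has $C^2$ boundary with nowhere-vanishing normal, after finitely many steps (the number controlled by $\mathrm{diam}(\Omega)/\rho_1$) one covers all of $\bar\Omega$, including a boundary strip, each step shrinking the velocity radius by a fixed factor and the amplitude by a fixed multiplicative constant. This yields finitely many centres $x_1,\dots,x_{N_X}$, balls $B(x_i,\delta_X(\tau_0))$ covering $\bar\Omega$, common amplitude $a_0(\tau_0)>0$, common velocity radius $\delta_V(\tau_0,N)>0$ (the radius does depend on $N$ through how fine the initial cover is), a common time $\delta_T(\tau_0)\le\Delta$, and $R_{\min}(\tau_0)$ bounding $\|v_i\|$; one then refines each $B(x_i,\delta_X)$ into $2^N$-small sub-balls to get the stated $B(x_i,\delta_X/2^N)$ covering.

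The technical point requiring care — though it is genuinely routine here, as the author notes ("exactly the same arguments as in \cite{Bri2}") — is the bookkeeping of how the velocity radius and amplitude degrade through the finitely many propagation steps, and the geometric verification that the characteristic segments stay inside $\Omega$: convexity guarantees that a segment with both endpoints in $\bar\Omega$ stays in $\bar\Omega$, and bounding the interior distance travelled by a fixed small fraction of the distance to $\partial\Omega$ keeps us away from the grazing set $\Lambda_0$, so the Duhamel formula \eqref{mildCO0} (the no-contact version) is the one that applies throughout. The main obstacle, to the extent there is one, is purely organisational: tracking the dependence of $\delta_T,\delta_X,\delta_V,R_{\min},a_0$ on $\tau_0$ and $N$ so that they serve the later spreading argument; no new analytic difficulty beyond Lemmas \ref{lem:L} and \ref{lem:Q+spread} and the compactness of $\bar\Omega$ enters.
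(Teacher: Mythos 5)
Your broad idea — create a first upheaval at $(x_1,v_1)$ via Lemma \ref{lem:positivity1} and then use the free-transport term of the Duhamel formula \eqref{mildCO0} to carry this positivity to the rest of $\bar\Omega$ — is the right one, but two steps of your argument diverge from what the proposition actually requires and neither resolves the main technical difficulty. First, the velocity centres. You repeatedly phrase the transported bound as a minoration by $\mathbf{1}_{B(v_1,r')}$, keeping $v_1$ as the centre and shrinking the radius step by step. This cannot give the stated result. The centre must shift: for the backward characteristic from a target point $x_i$ to land in $B(x_1,\cdot)$ at time $\tau_0/2$ one must take $v_i=\tfrac{2}{\tau_0}(x_i-x_1)$, and the only velocities for which the lower bound survives transport are those near this $v_i$, not near $v_1$. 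Near $\partial\Omega$ this is not cosmetic: $v_i$ points outward from the boundary (into $\Omega$ along the backward segment), whereas a ball centred at the fixed $v_1$ need not contain any velocity for which the backward characteristic stays in $\Omega$. This is why the proposition states $\mathbf{1}_{B(v_i,\delta_V)}$, and why the paper does a single free-transport step with $n$ taken so large that $r_n\ge 2d_\Omega/\tau_0+\|v_1\|$, rather than the iterative widening $B(x_1,\rho_1)\to B(x_1,2\rho_1)\to\cdots$ you propose (which, with a radius shrinking by a fixed factor at each of $O(\mathrm{diam}(\Omega)/\rho_1)$ steps, also risks degenerating before $\bar\Omega$ is covered).

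Second, and more importantly, the passage from the pointwise bound $f(\tau_0,x_i,v_i)\ge 2a_0$ to the bound on a full neighbourhood $[\tau_0,\delta_T]\times B(x_i,\delta_X)\times B(v_i,\delta_V)$ is not a transport computation at all: it uses uniform continuity of $f$. But $f$ is continuous only away from the grazing set $\Lambda_0$, so one must exhibit a compact set on which all the relevant $(t,x,v)$ live and which avoids $\Lambda_0$. This is the whole purpose of $\lambda(x_1)>0$, of the map $\Phi(x,v)$, and of the sets $\Lambda^{(1)}$ (near $x_1$, where grazing cannot occur) and $\Lambda^{(2)}$ (where $\|v\|\ge d_1/\tau_0$ and $\Phi(x,v)\cdot\frac{v}{\|v\|}\ge\lambda(x_1)$, forcing the outgoing normal component to stay bounded below). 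One then checks both that $(\tau_0,x_i,v_i)$ lies in $\Lambda=\Lambda^{(1)}\cup\Lambda^{(2)}$ and that small perturbations of it stay in $\Lambda$, which requires the uniform continuity of $\Phi$ and the convexity hypothesis and yields $\delta_T$, $\delta_X$, $\delta_V$. Your proposal mentions the grazing set only in passing and does not construct anything playing the role of $\Lambda$; without it, the claim that the lower bound extends to a neighbourhood of each $(x_i,v_i)$, including those $x_i$ adjacent to $\partial\Omega$, is unjustified.
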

\bigskip

\begin{proof}[Proof of Proposition $\ref{prop:upheaval}$]
We are going to use the free transport part of the Duhamel form of $f$ $\eqref{mildCO0}$, to create localised lower bounds out of the one around $(x_1,v_1)$ in Lemma $\ref{lem:positivity1}$.

\bigskip
$\Omega$ is bounded so let us denote its diameter by $d_{\Omega}$.
\par Take $\tau_0$ in $(0,\Delta]$. Let $n$ be large enough such that $r_n \geq 2 d_{\Omega}/\tau_0+\norm{v_1}$, where $r_n$ is defined by $\eqref{inductionrn1}$ in Lemma $\ref{lem:positivity1}$. It is possible since $(r_n)$ increases to infinity. Now, define $R_{min}(\tau_0) = 2 d_{\Omega}/\tau_0$.
\par Thanks to Lemma $\ref{lem:positivity1}$ applied to this particular $n$ we have that
\begin{equation}\label{initialbound}
\forall t \in \left[\frac{\tau_0}{2},\Delta\right],\:\forall x \in B(x_1,\Delta/2^n), \quad f(t,x,v) \geq \alpha_n\left(\frac{\tau_0}{2}\right)\mathbf{1}_{B(v_1,r_n)}(v),
\end{equation}
where we used the fact that $\alpha_n(t)$ is an increasing function.

\bigskip
Define
$$a_0(\tau_0) = \frac{1}{2}\alpha_n\left(\frac{\tau_0}{2}\right) e^{-\frac{\tau_0}{2} C_L \langle \frac{2 d_{\Omega}}{\tau_0} \rangle^{\gamma^+}}.$$

\bigskip
\textbf{Definition of the constants.}
We notice that for all $x$ in $\partial\Omega$ we have that $n(x)\cdot (x-x_1) >0$, because $\Omega$ has nowhere null normal vector by hypothesis. But the function 
$$x \longmapsto n(x)\cdot \frac{x-x_1}{\norm{x-x_1}}$$
 is continuous (since $\Omega$ is $C^2$) on the compact $\partial{\Omega}$ and therefore has a minimum that is atteined at a certain $X(x_1)$ on $\partial\Omega$.
\par Hence, 
\begin{equation}\label{lambdax1}
\forall x \in \partial\Omega, \quad n(x)\cdot \frac{x-x_1}{\norm{x-x_1}} \geq n(X(x_1))\cdot \frac{X(x_1)-x_1}{\norm{X(x_1)-x_1}}= 2 \lambda(x_1) >0.\end{equation}

\bigskip
To shorten following notations, we define on $\bar{\Omega} \times \left(\R^d-\{0\}\right)$ the function
\begin{equation}\label{Phi}
\Phi(x,v) = n\left(x+t\left(x,\frac{v}{\norm{v}}\right)\frac{v}{\norm{v}}\right),
\end{equation}
where we defined $t(x,v) = \min\{t\geq 0: x+tv \in \partial\Omega\}$, the first time of contact against the boundary of the forward characteristic $(x+sv)_{s\geq 0}$ defined for $v\neq 0$ and continuous on $\bar{\Omega}\times\left(\R^d-\{0\}\right)$ (see \cite{Bri2} Lemma $5.2$ for instance).

\bigskip
We denote $d_1$ to be half of the distance from $x_1$ to $\partial\Omega$. We define two sets included in $[0,\Delta]\times\bar{\Omega}\times\R^d$:
$$\Lambda^{(1)} = [0,\Delta]\times B(x_1,d_1) \times \R^d$$
and
$$\Lambda^{(2)} = \left\{(t,x,v)\notin \Lambda^{(1)},\quad \norm{v}\geq \frac{d_1}{\tau_0} \quad\mbox{and} \quad \Phi(x,v)\cdot \frac{v}{\norm{v}} \geq \lambda(x_1)\right\}$$

\bigskip
By continuity of $t(x,v)$ and of $n$ (on $\partial\Omega$), we have that
$$\Lambda = \Lambda^{(1)} \cap \Lambda^{(2)}$$
is compact and does not intersect the grazing set $[0,\Delta]\times \Lambda_0$ defined by $\eqref{grazingset}$. Therefore, $f$ is continuous in $\Lambda$ and thus is uniformly continuous on $\Lambda$. Hence, there exist $\delta_T'(\tau_0)$, $\delta'_X(\tau_0)$, $\delta'_V(\tau_0) >0$ such that
$$\forall (t,x,v),\:(t',x',v')\in\Lambda,\quad |t-t'|\leq \delta'_T(\tau_0),\: \norm{x-x'}\leq \delta'_X(\tau_0),\: \norm{v-v'}\leq \delta'_V(\tau_0),$$
\begin{equation}\label{uniformcontinuity}
\abs{f(t,x,v)- f(t',x',v')} \leq a_0(\tau_0).
\end{equation}

\bigskip
The map $\Phi$ (defined by $\eqref{Phi}$) is uniformly continuous on the compact $[0,\Delta]\times\bar{\Omega}\times \mathbb{S}^{d-1}$ and therefore there exist $\delta_T''(\tau_0)$, $\delta''_X(\tau_0)$, $\delta''_V(\tau_0) >0$ such that
$$\forall (t,x,v),\:(t',x',v')\in \Lambda^{(2)},\quad |t-t'|\leq \delta''_T(\tau_0),\: \norm{x-x'}\leq \delta''_X(\tau_0),\: \norm{v-v'}\leq \delta''_V(\tau_0),$$
\begin{equation}\label{uniformlambda2}
\abs{ \Phi(x,v)-  \Phi(x',v')} \leq \frac{\lambda(x_1)}{2}.
\end{equation}

\bigskip
We conclude our definitions by taking 
\begin{eqnarray*}
\delta_T(\tau_0)&=& \min\left(\Delta,\: \tau_0 +\delta_T'(\tau_0),\: \tau_0+\delta_T''(\tau_0)\right),
\\ \delta_X(\tau_0)&=& \min\left(\frac{\Delta}{2^n},\:\delta_X'(\tau_0),\:\delta_X''(\tau_0),\:d_1/2\right),
\\ \delta_V(\tau_0) &=&\min\left(r_n,\:\delta_V'(\tau_0),\:\frac{d_1}{2\tau_0}\delta_V''(\tau_0),\:\frac{\lambda(x_1)}{2}\right).
\end{eqnarray*}

\bigskip
\textbf{Proof of the lower bounds.}
We take $N \in \N$ and notice that $\bar{\Omega}$ is compact and therefore there exists $x_{1},\dots,x_{N_{X}}$ in $\Omega$ such that $\bar{\Omega} \subset \bigcup\limits_{1 \leq i \leq N_{X}}B\left(x_i,\delta_X(\tau_0)/2^{N}\right)$. Moreover, we construct them such that $x_{1}$ is the one defined in Lemma $\ref{lem:positivity1}$ and we then take $v_1$ to be the one defined in Lemma $\ref{lem:positivity1}$. We define 
$$\forall i \in \{2,\dots,N_X\}, \quad v_i =\frac{2}{\tau_0}(x_i - x_1).$$
Because $\Omega$ is convex we have that 
\begin{eqnarray*}
X_{\tau_0/2,\tau_0}(x_i,v_i) &=& x_1,
\\ V_{\tau_0/2,\tau_0}(x_i,v_i) &=& v_i.
\end{eqnarray*}

\bigskip
The latter equalities imply that there is no contact with $\partial\Omega$ between times $\tau_0/2$ and $\tau_0$ when starting from $x_1$ to go to $x_i$ with velocity $v_i$.
Using the fact that $f$ is a mild solution of the Boltzmann equation, we write it under its Duhamel form without contact  $\eqref{mildCO0}$, but starting at $\tau_0/2$. We drop the last term which is positive. As in the proof of Lemma $\ref{lem:positivity1}$ we can control the $L$ operator appearing in the first term in the right-hand side of $\eqref{mildCO0}$ (corresponding to the free transport).

\begin{eqnarray*}
f(\tau_0,x_i,v_i) &\geq& f\left(\frac{\tau_0}{2},x_1,v_i\right)e^{-\frac{\tau_0}{2}C_L \langle \frac{2}{\tau_0}(x_i-x_1)\rangle^{\gamma^+}}
\\ &\geq& \alpha_n\left(\frac{\tau_0}{2}\right)e^{-\frac{\tau_0}{2}C_L \langle \frac{2d_{\Omega}}{\tau_0}\rangle^{\gamma^+}}\mathbf{1}_{B(v_1,r_n)}(v_i)
\\ &\geq& 2 a_0(\tau_0)\mathbf{1}_{B(v_1,r_n)}(v_i),
\end{eqnarray*}
where we used $\eqref{initialbound}$ for the second inequality. We see here that $v_i$ belongs to $B(0,R_{min}(\tau_0))$ and that $B(0,R_{min}(\tau_0)) \subset B(v_1,r_n)$ and therefore

\begin{equation}\label{finpositivity}
f(\tau_0,x_i,v_i) \geq 2a_0(\tau_0).
\end{equation}

\bigskip
We first notice that $(\tau_0,x_i,v_i)$ belongs to $\Lambda$ since either $x_i$ belongs to $B(x_1,d_1)$ or $\norm{x_1-x_i} \geq d_1$ but by definition of $v_i$ and $\lambda(x_1)$ (see $\eqref{lambdax1}$),
$$n\left(x_i+t\left(x_i,\frac{v_i}{\norm{v_i}}\right)\frac{v_i}{\norm{v_i}}\right)\cdot \frac{v_i}{\norm{v_i}} \geq 2\lambda(x_1)$$
and
$$\norm{v_i} = \frac{2}{\tau_0}\norm{x_i-x_1} \geq \frac{2}{\tau_0}d_1.$$

\bigskip
We take $t$ in $[\tau_0,\delta_T(\tau_0)]$, $x$ in $B(x_i,\delta_X(\tau_0))$ and $v$ in $B(v_i,\delta_V(\tau_0))$ and we will prove that $(t,x,v)$ also belongs to $\Lambda$.
\par If $x_i$ belongs to $B(x_1,d_1/2)$ then since $\delta_X(\tau_0)\leq d_1/2$,
$$\norm{x-x_1} \leq \frac{d_1}{2} + \norm{x-x_i} \leq d_1$$
and $(t,x,v)$ thus belongs to $\Lambda^{(1)} \subset \Lambda$.
\par In the other case where $\norm{x_1-x_i}\geq d_1/2$ we first have that 
$$\norm{v_i} = \frac{2}{\tau_0}\norm{x_i-x_1} \geq \frac{d_1}{\tau_0}.$$
And also
$$\norm{\frac{v}{\norm{v}}-\frac{v_i}{\norm{v_i}}}\leq \frac{2}{\norm{v_i}}\norm{v-v_i}= \frac{\tau_0}{\norm{x_i-x_1}}\delta_V(\tau_0)\leq \frac{2\tau_0}{d_1}\delta_V(\tau_0)\leq \delta_V''(\tau_0).$$
The latter inequality combined with $\eqref{uniformlambda2}$ and that $\abs{t-\tau_0}\leq \delta_T''(\tau_0)$ and $\norm{x-x_i}\leq \delta''_X(\tau_0)$ yields
$$\abs{\Phi(x,v)- \Phi(x_i,v_i)} \leq \frac{\lambda(x_1)}{2},$$
which in turn implies
\begin{eqnarray*}
\Phi(x,v) \cdot \frac{v}{\norm{v}} &\geq& \Phi(x_i,v_i)\cdot\frac{v_i}{\norm{v_i}} + \Phi(n,v)\cdot(v-v_i) + \left(\Phi(x,v)-\Phi(x_i,v_i)\right)\cdot v_i
\\&\geq& 2\lambda(x_1)-\norm{v-v_i} - \abs{\Phi(x,v)-\Phi(x_i,v_i)}
\\&\geq& \lambda(x_1),
\end{eqnarray*}
so that $(t,x,v)$ belongs to $\Lambda^{(2)}$.

\bigskip
We can now conclude the proof.
\par We proved that $(\tau_0,x_i,v_i)$ belongs to $\Lambda$ and that for all  $t$ in $[\tau_0,\delta_T(\tau_0)]$, $x$ in $B(x_i,\delta_X(\tau_0))$ and $v$ in $B(v_i,\delta_V(\tau_0))$, $(t,x,v)$ belongs to $\Lambda$. By definition of the constants, $(t-\tau_0,x-x_i,v-v_i)$ satisfies the inequality of the uniform continuity of $f$ on $\Lambda$ $\eqref{uniformcontinuity}$. Combining this inequality with $\eqref{finpositivity}$, the lower bound at $(\tau_0,x_i,v_i)$, we have that 
$$f(t,x,v) \geq a_0(\tau_0).$$

\end{proof}
\bigskip

\begin{remark}\label{rem:tau0=0}
In order to lighten our presentation and because $\tau_0$ can be chosen as small as one wants, we will only study the case of solutions to the Boltzmann equation which satisfies Proposition $\ref{prop:upheaval}$ at $\tau_0=0$. Then we will immediatly create the exponential lower bound after at $\tau_1$ for all $\tau_1>0$. Then we apply the latter result to $F(t,x,v) = f(t+\tau_0,x,v)$ to obtain the exponential lower bound for $f$ at time $\tau_0+\tau_1$ which can be chosen as small as one wants.
\end{remark}
\bigskip


\subsection{Global strict positivity of the Maxwellian diffusion}\label{subsec:diffusioneffects}

In this subsection, we focus on the positivity of the Maxwellian diffusion boundary condition. More precisely, we prove that the boundary of the domain $\Omega$ diffuses in all directions a minimal strictly positive quantity.

\bigskip
\begin{prop}\label{prop:positivediffusion}
Let $f$ be the mild solution of the Boltzmann equation described in Theorem $\ref{theo:boundcutoff}$ and consider $\Delta>0$ constructed in Proposition $\ref{prop:upheaval}$.
\\ Then for all $\tau_0$ in $(0,\Delta]$, there exists $b_\partial(\tau_0) >0$ such that
$$\forall t \in [\tau_0,\Delta],\:\forall x_\partial \in \partial\Omega, \quad \left[\int_{v_*\cdot n(x_\partial)>0} f(t,x_\partial,v_*)\left(v_*\cdot n(x_\partial)\right)\:dv_*\right] > b_\partial(\tau_0).$$
\end{prop}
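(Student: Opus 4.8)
The goal is to bound from below, uniformly in $t \in [\tau_0,\Delta]$ and $x_\partial \in \partial\Omega$, the incoming mass flux
$$j(t,x_\partial) = \int_{v_*\cdot n(x_\partial)>0} f(t,x_\partial,v_*)\left(v_*\cdot n(x_\partial)\right)\:dv_*.$$
The strategy is to exploit Proposition $\ref{prop:upheaval}$ (with $\tau_0=0$, thanks to Remark $\ref{rem:tau0=0}$), which already gives, for each $i$, a lower bound $f(t,x,v) \geq a_0 \mathbf{1}_{B(v_i,\delta_V)}(v)$ on the cylinder $[\tau_0,\delta_T]\times B(x_i,\delta_X)\times\R^d$, the balls $B(x_i,\delta_X/2^N)$ covering $\bar\Omega$. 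First I would fix a boundary point $x_\partial \in \partial\Omega$ and look backwards along a well-chosen incoming velocity $v_*$: since $\Omega$ is convex with nowhere null normal, for $v_*$ with $v_*\cdot n(x_\partial)>0$ the backward characteristic $x_\partial - s v_*$ enters $\Omega$ for small $s>0$. The idea is to pick $v_*$ so that this backward trajectory, in a short time $\sigma>0$, reaches one of the balls $B(x_i,\delta_X)$ where the ``upheaval'' lower bound holds, without touching $\partial\Omega$ in between — i.e. $t_\partial(x_\partial,v_*) > \sigma$ for the relevant $v_*$.

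**Key steps.** (1) Geometric lemma: there is $\sigma_0>0$ and a radius $r_\partial>0$ such that for every $x_\partial\in\partial\Omega$ one can find a center $x_{i(x_\partial)}$ and a velocity $\bar v(x_\partial)$ with $x_\partial - \sigma_0 \bar v(x_\partial) \in B(x_{i(x_\partial)},\delta_X/2)$, with $\bar v(x_\partial)$ pointing inward ($\bar v\cdot n(x_\partial)>0$ bounded below), with $|\bar v - v_{i(x_\partial)}| < \delta_V/2$, and with the whole segment $\{x_\partial - s\bar v : 0\le s\le\sigma_0\}$ staying in $\Omega$ for $s>0$; this uses that $\Omega$ is $C^2$ convex bounded and compactness of $\partial\Omega$, so the constants are uniform in $x_\partial$. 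By openness, the same holds for all $v_*$ in a ball $B(\bar v(x_\partial), r_\partial)$, and $r_\partial$ can be taken uniform. (2) Duhamel along that incoming characteristic: for $t\in[\tau_0,\Delta]$ and $v_*\in B(\bar v(x_\partial),r_\partial)$, write $f(t,x_\partial,v_*)$ using $\eqref{mildCO0}$ (valid since no boundary contact occurs before time $\sigma_0$), drop the positive gain term, and bound the exponential factor below using the $L^\infty$ control on $L$ from Lemma $\ref{lem:L}$: $f(t,x_\partial,v_*) \geq f(t-\sigma_0, x_\partial - \sigma_0 v_*, v_*)\,e^{-\sigma_0 C_L \langle v_*\rangle^{\gamma^+}}$, and the argument $(t-\sigma_0, x_\partial-\sigma_0 v_*, v_*)$ lies in a cylinder where Proposition $\ref{prop:upheaval}$ applies (choosing $\tau_0$ and $\delta_T$ appropriately, shrinking if needed so that $\tau_0 < \sigma_0 < \delta_T$... or rather working with the shifted solution as in Remark $\ref{rem:tau0=0}$), so it is $\geq a_0$ provided $v_*\in B(v_{i},\delta_V)$, which holds. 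This gives $f(t,x_\partial,v_*) \geq a_0 e^{-\sigma_0 C_L \langle R_*\rangle^{\gamma^+}} =: c_\partial > 0$ for all $v_*\in B(\bar v(x_\partial), r_\partial)$, where $R_*$ is a uniform bound on $|\bar v(x_\partial)| + r_\partial$. (3) Integrate: since $\bar v(x_\partial)\cdot n(x_\partial) \geq \lambda_\partial > 0$ uniformly and $r_\partial$ is small, every $v_*\in B(\bar v(x_\partial),r_\partial)$ satisfies $v_*\cdot n(x_\partial) \geq \lambda_\partial/2 > 0$ (take $r_\partial < \lambda_\partial/2$), hence
$$j(t,x_\partial) \geq \int_{B(\bar v(x_\partial), r_\partial)} f(t,x_\partial,v_*)\,(v_*\cdot n(x_\partial))\,dv_* \geq c_\partial \cdot \frac{\lambda_\partial}{2} \cdot |B(0,r_\partial)| =: b_\partial(\tau_0) > 0.$$

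**Main obstacle.** The technical heart is step (1): producing, uniformly over all $x_\partial\in\partial\Omega$, an inward velocity $\bar v(x_\partial)$ whose backward flow hits one of the finitely many ``upheaval'' balls $B(x_i,\delta_X)$ in a fixed time $\sigma_0$ while staying strictly inside $\Omega$ (so that $t_\partial > \sigma_0$ and we may use the no-contact Duhamel formula), and simultaneously has $v_*$ landing in the right velocity ball $B(v_i,\delta_V)$. Convexity and the $C^2$ nowhere-vanishing-normal hypothesis are exactly what make this possible — the segment from an interior ball center to $x_\partial$ stays in $\bar\Omega$ and only meets $\partial\Omega$ at $x_\partial$ — but care is needed to get all constants ($\sigma_0$, $r_\partial$, $\lambda_\partial$, and the choice of which ball $i(x_\partial)$ to use) uniform via compactness of $\partial\Omega$ and the fact that the covering $\{B(x_i,\delta_X/2^N)\}$ can be taken with $N$ large, so some $x_i$ is always close to the inward normal ray from $x_\partial$. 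One must also match the velocity scale: since the relevant speeds are of order $d_\Omega/\sigma_0$, one chooses $N$ (hence the fineness of the position cover) large enough and $\sigma_0$ small enough that the required $\bar v(x_\partial)$ stays within $\delta_V/2$ of the corresponding $v_i$; this is the same bookkeeping already carried out in the proof of Proposition $\ref{prop:upheaval}$, so no genuinely new difficulty arises, only a careful uniform-in-$x_\partial$ version of it.
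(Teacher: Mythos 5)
Your overall strategy — write Duhamel backward along a straight line from $x_\partial$ into the interior, land at a point where the solution is already known to be positive, bound the loss exponential via Lemma~$\ref{lem:L}$, and finally integrate the positive part of the flux over a small velocity ball where $v_*\cdot n(x_\partial)$ is uniformly bounded below by convexity — is exactly the paper's strategy. Where you diverge is in the choice of anchor, and this is more than a cosmetic difference: you anchor at the localised upheaval balls $B(x_i,\delta_X)\times B(v_i,\delta_V)$ from Proposition~$\ref{prop:upheaval}$, whereas the paper anchors at the single point $x_1$ and invokes Lemma~$\ref{lem:positivity1}$ directly, which provides positivity at $x_1$ on a velocity ball $B(v_1,r_n)$ with $r_n$ arbitrarily large (simply take $n$ large enough that $r_n \geq 2d_\Omega/\tau_0 + \norm{v_1}$). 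The paper then sets $v_\partial = \frac{2}{t}(x_\partial - x_1)$, goes back to time $t/2$, and since $\abs{v_\partial}\leq 2d_\Omega/\tau_0$ the velocity automatically lands inside $B(v_1,r_n)$, with no matching condition to arrange.

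Your route via Proposition~$\ref{prop:upheaval}$ creates two real obstructions that you partially flag but do not resolve. First, the velocities $v_i$ appearing there are not arbitrary ball centers: they are pinned to $x_1$ by $v_i = \frac{2}{\tau_0}(x_i - x_1)$, so they all point radially away from $x_1$. Requiring $\bar v(x_\partial)\in B(v_i,\delta_V/2)$ while also requiring $x_\partial - \sigma_0\bar v(x_\partial)\in B(x_i,\delta_X)$ forces $x_1$, $x_i$, $x_\partial$ to be (nearly) collinear \emph{and} $\abs{x_\partial-x_i}/\sigma_0 \approx \frac{2}{\tau_0}\abs{x_i-x_1}$, so $\sigma_0$ is dictated by $x_\partial$ rather than freely chosen; the ``geometric lemma'' you posit is thus considerably more constrained than stated, and making it uniform in $x_\partial$ with a uniform $r_\partial$ is real work. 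Second, Proposition~$\ref{prop:upheaval}$ gives positivity only for $t-\sigma_0\in[\tau_0,\delta_T]$, and $\delta_T$ may be strictly smaller than $\Delta$; your conclusion is supposed to hold for all $t\in[\tau_0,\Delta]$, so for $t$ near $\Delta$ you would need $\sigma_0 \geq t-\delta_T$, which fights against the magnitude-matching constraint above. Both difficulties vanish if you go back to $x_1$ with Lemma~$\ref{lem:positivity1}$ (valid on the whole of $[0,\Delta]$) and with a large-enough velocity ball, as the paper does; I would recommend reworking step (1) along those lines, after which your steps (2) and (3) — including the convexity estimate $n(x_\partial)\cdot(x_\partial - x_1)\geq 2B(x_1)>0$ from compactness of $\partial\Omega$ — are precisely what the paper carries out.
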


\bigskip
\begin{proof}[Proof of Proposition $\ref{prop:positivediffusion}$]
Let $\tau_0$ be in $(0,\Delta]$, take $t$ in $[\tau_0,\Delta]$ and $x_\partial$ on $\partial\Omega$.
\par We consider $(x_1,v_1)$ constructed in Lemma $\ref{lem:positivity1}$ and will use the same notations as in the proof of Proposition $\ref{prop:upheaval}$. 

\bigskip
In the spirit of the proof of Proposition $\ref{prop:upheaval}$ we define 
$$v_\partial = \frac{2}{t}(x_\partial-x_1),$$
which gives, because $\partial\Omega$ is convex, that $\left(X_{t/2,s}\right)_{t/2\leq s \leq t}$ is a straight line not intersecting $\partial\Omega$ apart at time $s=t$. We can thus write $f$ under its Duhamel form without contact $\eqref{mildCO0}$ starting at $t/2$. We keep only the first term on the right hand-side and control the operator $L$ by Lemma $\ref{lem:L}$.
\begin{eqnarray*}
f(t,x_\partial,v_\partial) &\geq& f\left(\frac{t}{2},x_1,v_\partial\right)e^{-\frac{t}{2}C_L \langle \frac{2}{t}(x_\partial-x_1)\rangle^{\gamma^+}}
\\ &\geq& \alpha_n\left(\frac{t}{2}\right)e^{-\frac{t}{2}C_L \langle \frac{2d_{\Omega}}{t}\rangle^{\gamma^+}}\mathbf{1}_{B(v_1,r_n)}(v_\partial)
\\ &\geq& \alpha_n\left(\frac{\tau_0}{2}\right)e^{-\frac{\Delta}{2}C_L \langle \frac{2d_{\Omega}}{\tau_0}\rangle^{\gamma^+}}\mathbf{1}_{B(v_1,r_n)}(v_\partial),
\end{eqnarray*}
where we used $\eqref{initialbound}$ for the second inequality since $t/2$ belongs to $[\tau_0/2,\Delta]$. Note that we choose $n$ exactly as in the proof of Proposition $\ref{prop:upheaval}$ and, for the same reasons, $v_\partial$ thus belongs to $B(v_1,r_n)$ which implies
$$f(t,x_\partial,v_\partial) \geq  \alpha_n\left(\frac{\tau_0}{2}\right)e^{-\frac{\Delta}{2}C_L \langle \frac{2d_{\Omega}}{\tau_0}\rangle^{\gamma^+}}.$$
Here again, the continuity of $f$ away from the grazing set implies the existence of $\delta(\tau_0)$ independent of $t$, $x$ and $v_\partial$ such that
\begin{equation}\label{positivitydiffusion1}
\forall v_* \in B(v_\partial,\delta(\tau_0)), \quad f(t,x_\partial,v_*) \geq  \frac{1}{2}\alpha_n\left(\frac{\tau_0}{2}\right)e^{-\frac{\Delta}{2}C_L \langle \frac{2d_{\Omega}}{\tau_0}\rangle^{\gamma^+}}= A(\tau_0)>0.
\end{equation}

\bigskip
We now deal with the scalar product $v_*\cdot n(x_\partial)$ appearing in the Maxwellian diffusion.
\par We notice that for all $x$ in $\partial\Omega$ we have that $n(x)\cdot (x-x_1) >0$, because $\Omega$ has nowhere null normal vector by hypothesis. But the function $x \longmapsto n(x)\cdot (x-x_1)$ is continuous (since $\Omega$ is $C^2$) on the compact $\partial{\Omega}$ and therefore has a minimum that is atteined at a certain $X(x_1)$ on $\partial\Omega$.
\par Hence, 
$$\forall x \in \partial\Omega, \quad (x-x_1)\cdot n(x) \geq n(X(x_1))\cdot (X(x_1)-x_1)= 2 B(x_1) >0.$$
We define $\delta'(x_1)= B(x_1)$ and a mere Cauchy-Schwarz inequality implies that
$$\forall x \in \partial\Omega,\:\forall v_* \in B\left((x-x_1),\delta'(x_1)\right), \quad v_*\cdot n(x) \geq B(x_1) >0,$$
which in turns implies
\begin{equation}\label{positivitydiffusion2}
\forall t \in [\tau_0,\Delta],\:\forall x \in \partial\Omega,\:\forall v_* \in B\left(\frac{2}{t}(x-x_1),\frac{2}{\Delta}\delta'(\tau_0)\right), \quad v\cdot n(x) \geq \frac{2}{\Delta}B(x_1).
\end{equation}

\bigskip
To conlude we combine $\eqref{positivitydiffusion1}$ and $\eqref{positivitydiffusion2}$ at point $x_\partial$ inside the integrale to get, with $\delta''(\tau_0) = \min\left(\delta(\tau_0),2\delta'(x_1)/\Delta\right)$,

\begin{eqnarray*}
\int_{v_*\cdot n(x_\partial)>0} f(t,x_\partial,v_*)\left(v_*\cdot n(x_\partial)\right)\:dv_* &\geq& \int_{v_*\in B(v_\partial,\delta''(\tau_0))} f(t,x_\partial,v_*)\left(v_*\cdot n(x_\partial)\right)\:dv_*
\\ &\geq& \frac{2}{\Delta}A(\tau_0)B(x_1)\abs{B(v_\partial,\delta''(\tau_0))}
\\&=& \frac{2}{\Delta} A(\tau_0)B(x_1)\abs{B(0,\delta''(\tau_0))}.
\end{eqnarray*}
This yields the expected result with 
\begin{equation}\label{finalboundiffusion}
b_\partial(\tau_0) =2 A(\tau_0)B(x_1)\abs{B(0,\delta''(\tau_0))}/\Delta.
\end{equation}
\end{proof}
\bigskip

\begin{remark}\label{rem:tau0=0diffusion}
For now on, $\tau_0$ used in Proposition $\ref{prop:positivediffusion}$ will be the same as the one used in Proposition $\ref{prop:upheaval}$. Moreover, as already mentioned and explained in Remark $\ref{rem:tau0=0}$, we will consider in the sequel that $\tau_0=0$.
\end{remark}
\bigskip


\subsection{Spreading of the initial localised bounds and global Maxwellian lower bound}\label{subsec:maxwellboundcutoff}

In Section $\ref{subsec:1stlocalisedbounds}$ we proved the immediate appearance of initial lower bounds that are localised in space and in velocity. The present subsection aims at increasing these lower bounds in a way that larger and larger velocities are taken into account, and compare it to an exponential bound in the limit.
\bigskip


\subsubsection{Spreading of the initial ``upheaval points''}\label{subsubsec:spreadingmethod}

First, we pick $N$ in $\N^*$, construct $\delta_V$, $R_{min}$ and cover $\bar{\Omega}$ with $\bigcup_{1 \leq i \leq N_X}B(x_i,\delta_X/2^{N})$ as in Proposition $\ref{prop:upheaval}$ where we dropped the dependencies in $\tau_0$ and $N$.
\\Then for any sequence $\left(\xi_n\right)$ in $(0,1)$ and for all $\tau>0$ we define three sequences in $\R^+$ by induction. First,

\begin{equation} \label{rn}
\left\{\begin{array}{rl} \displaystyle{r_0} &\displaystyle{= \delta_V} \vspace{2mm}\\\vspace{2mm} \displaystyle{r_{n+1} }&\displaystyle{= \sqrt{2}\left(1-\xi_n\right)r_n.} \end{array}\right.
\end{equation}
Second, with the notation $\tilde{r}_n = R_{min} + r_n$,
\begin{equation}\label{bn}
\left\{\begin{array}{rl} \displaystyle{b_0(\tau) }&\displaystyle{= b_\partial} \vspace{2mm}\\\vspace{2mm} \displaystyle{b_{n+1}(\tau) }&\displaystyle{= b_\partial e^{-C_L \tau \langle\tilde{r}_{n+1} \rangle^{\gamma^+}} \frac{e^{-\frac{\left(\tilde{r}_{n+1}\right)^2}{2T_{\partial}}}}{\left(2\pi\right)^{\frac{d-1}{2}}T_{\partial}^{\frac{d+1}{2}}},}\end{array}\right.
\end{equation}
where $b_\partial$ defined in Proposition $\ref{prop:positivediffusion}$ and $C_L$ in Lemma $\ref{lem:positivity1}$.
And finally, with $C_Q$ being defined in Lemma $\ref{lem:positivity1}$,
\begin{equation}\label{an}
\left\{\begin{array}{rl} \displaystyle{a_0(\tau) }&\displaystyle{= a_0} \vspace{2mm}\\\vspace{2mm} \displaystyle{a_{n+1}(\tau) }&\displaystyle{= \min\left(a_n(\tau),b_n(\tau)\right)^2C_Qr_n^{d+\gamma}\xi_{n+1}^{d/2-1}\frac{\tau}{2^{n+2}\tilde{r}_{n+1}}e^{-C_L \frac{\tau \langle \tilde{r}_{n+1} \rangle^{\gamma^+}}{2^{n+2}\tilde{r}_{n+1}}}.}\end{array}\right.
\end{equation}

\bigskip
We express the spreading of the lower bound in the following proposition.

\bigskip
\begin{prop}\label{prop:spread}
Let $f$ be the mild solution of the Boltzmann equation described in Theorem $\ref{theo:boundcutoff}$ and suppose that $f$ satisfies Proposition $\ref{prop:upheaval}$ with $\tau_0=0$.
\\Consider $0<\tau\leq \delta_T$ and $N$ in $\N$. Let $(x_i)_{i \in \{1,\dots,N_X\}}$ and $(v_i)_{i \in \{1,\dots,N_X\}}$ be given as in Proposition $\ref{prop:upheaval}$ with $\tau_0=0$.
\\Then for all $n$ in $\{0,\dots,N\}$ we have that the following holds:

$$\forall t \in \left[\tau-\frac{\tau}{2^{n+1}\tilde{r}_{n}},\tau\right],\:\forall x \in B\left(x_i,\frac{\delta_X}{2^n}\right), \quad f(t,x,v) \geq \min\left(a_n(\tau),b_n(\tau)\right) \mathbf{1}_{B(v_i,r_n)}(v),$$
where $(r_n)$, $(a_n)$ and $(b_n)$ are defined by $\eqref{rn}$-$\eqref{an}$-$\eqref{bn}$.
\end{prop}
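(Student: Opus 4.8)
The plan is an induction on $n\in\{0,\dots,N\}$, the rank-$n$ statement being exactly the displayed lower bound on the time slab $[\tau-\tau/(2^{n+1}\tilde{r}_n),\tau]$ and the ball $B(x_i,\delta_X/2^n)$, the sequences $\eqref{rn}$-$\eqref{an}$-$\eqref{bn}$ being tailored precisely so that this induction closes. The base case $n=0$ is nothing but Proposition~\ref{prop:upheaval} read with $\tau_0=0$: since $r_0=\delta_V$ and $b_0(\tau)=b_\partial>0$ (so $\min(a_0(\tau),b_0(\tau))\leq a_0$), and since $\tau\leq\delta_T$, the slab $[\tau-\tau/(2\tilde{r}_0),\tau]$ sits inside $[0,\delta_T]$ and the claim is immediate.

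For the induction step I would fix the index $i$, assume the bound at rank $n$, and take $t$ in the rank-$(n+1)$ slab $[\tau-h_{n+1},\tau]$ with $h_{n+1}:=\tau/(2^{n+2}\tilde{r}_{n+1})$, $x\in B(x_i,\delta_X/2^{n+1})$ and $v\in B(v_i,r_{n+1})$; note $|v|\leq R_{min}+r_{n+1}=\tilde{r}_{n+1}$ because $v_i\in B(0,R_{min})$, a bound used repeatedly. Then I would split according to the backward characteristic issued from $(t,x,v)$. \emph{Recent contact:} if $t_\partial(x,v)\leq h_{n+1}$, I use the Duhamel formula $\eqref{mildCOtpartial}$, keep only the boundary term (the $Q^+$-integral being nonnegative), estimate the loss exponent via Lemma~\ref{lem:L} by $C_L(t-t_\partial)\langle v\rangle^{\gamma^+}\leq C_L\tau\langle\tilde{r}_{n+1}\rangle^{\gamma^+}$, and bound $f_\partial$ at the contact point from below using its explicit Maxwellian shape $\eqref{fpartial}$: the incoming-flux bracket is $\geq b_\partial$ by Proposition~\ref{prop:positivediffusion} (legitimate because the relevant contact time lies in $(0,\Delta]$ and we work with $\tau_0=0$), while the Gaussian factor is $\geq e^{-\tilde{r}_{n+1}^2/(2T_\partial)}/((2\pi)^{(d-1)/2}T_\partial^{(d+1)/2})$ since $|v|\leq\tilde{r}_{n+1}$. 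This reproduces exactly $f(t,x,v)\geq b_{n+1}(\tau)$.

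\emph{No recent contact:} if $t_\partial(x,v)>h_{n+1}$, the segment $s\mapsto X_{s,t}(x,v)=x-(t-s)v$, $s\in[t-h_{n+1},t]$, stays in $\Omega$, so I restart the mild formulation on this contact-free segment, drop the nonnegative transport term, and obtain $f(t,x,v)\geq\int_{t-h_{n+1}}^t e^{-\int_s^t L}\,Q^+[f(s,\cdot),f(s,\cdot)]\,ds$, with $f(s,\cdot)$ short for $f(s,X_{s,t}(x,v),\cdot)$. For those $s$ one has $X_{s,t}(x,v)\in B(x_i,\delta_X/2^n)$ because $\|X_{s,t}(x,v)-x_i\|\leq\delta_X/2^{n+1}+h_{n+1}\tilde{r}_{n+1}$ (which is $\leq\delta_X/2^n$ after possibly shrinking $\tau$, harmless by Remark~\ref{rem:tau0=0}), and $s\in[\tau-2h_{n+1},\tau]\subset[\tau-\tau/(2^{n+1}\tilde{r}_n),\tau]$ because $2h_{n+1}\leq\tau/(2^{n+1}\tilde{r}_n)$ once $\tilde{r}_{n+1}\geq\tilde{r}_n$. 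The induction hypothesis then gives $f(s,\cdot)\geq\min(a_n,b_n)\mathbf{1}_{B(v_i,r_n)}$, and monotonicity of $Q^+$ together with Lemma~\ref{lem:Q+spread} in the quadratic case (with free parameter $\xi_n$, so that $\sqrt2(1-\xi_n)r_n=r_{n+1}$) and the loss bound $\int_s^t L\leq C_L h_{n+1}\langle\tilde{r}_{n+1}\rangle^{\gamma^+}$, integrated over the slab of length $h_{n+1}$, reproduces exactly $f(t,x,v)\geq a_{n+1}(\tau)\mathbf{1}_{B(v_i,r_{n+1})}(v)$.

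Combining the two cases yields $f(t,x,v)\geq\min(a_{n+1}(\tau),b_{n+1}(\tau))\mathbf{1}_{B(v_i,r_{n+1})}(v)$ on the full rank-$(n+1)$ range, which closes the induction. The step I expect to be the crux is the recent-contact case: the new difficulty, absent from the torus and specular-boundary settings of \cite{Mo2} and \cite{Bri2}, is to convert the scalar positivity of the diffused flux (Proposition~\ref{prop:positivediffusion}) into a pointwise lower bound that survives transport along the incoming trajectory, which works only because the re-emitted profile is a genuine Maxwellian; this is precisely why $b_n$ carries a Gaussian weight and why the final bound is Maxwellian. The remaining care is the usual bookkeeping of the slab length $h_{n+1}$: it must be short enough both to keep $X_{s,t}(x,v)$ inside the ball where the rank-$n$ bound is available and to keep $s$ inside the rank-$n$ slab, which is ensured by the monotonicity $\tilde{r}_{n+1}\geq\tilde{r}_n$ (true if the $\xi_n$ are chosen small) and a mild smallness of $\tau$.
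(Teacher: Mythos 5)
Your proof is correct and follows essentially the same structure as the paper's: an induction whose step splits according to whether the backward trajectory touches $\partial\Omega$, using Lemma~\ref{lem:Q+spread} on a short time window to produce $a_{n+1}$ in the contact-free branch and Proposition~\ref{prop:positivediffusion} combined with the explicit Maxwellian shape of $f_\partial$ to produce $b_{n+1}$ in the contact branch. The one local variation is the dichotomy threshold: you split on whether the first backward contact occurs within the window $[t-h_{n+1},t]$, whereas the paper splits on whether there is any contact at all in $[0,t]$; in the regime where a contact exists but only in the distant past, the paper falls back on the boundary bound $b_{n+1}$ while you instead restart the Duhamel formulation at $t-h_{n+1}$ and use the $Q^+$-spreading. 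This restart is a legitimate consequence of the mild formulation on a contact-free segment (the paper uses the same device in the proof of Lemma~\ref{lem:positivity1}), and both branches land on a quantity $\geq\min(a_{n+1}(\tau),b_{n+1}(\tau))$, so the induction closes either way.
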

\bigskip

\begin{proof}[Proof of Proposition $\ref{prop:spread}$]
 We are interested in immediate appearance of a lower bound so we can always choose $\delta_t \leq \delta_X$ and also $R_{min} \geq 1$.
 \par This Proposition will be proved by induction on $n$, the initialisation is simply Proposition $\ref{prop:upheaval}$ so we consider the case where the proposition is true for $n<N$.

\bigskip
Take $t\in [\tau-\tau/(2^{n+2}\tilde{r}_{n+1}),\tau]$, $x \in B\left(x_i,\delta_X/2^{n+1}\right)$ and $v\in B(v_i,2r_{n+1})$.
\par There are two possible cases depending on $t_\partial(x,v) \leq t$ or $t_\partial(x,v)> t$. We remind here that $t_\partial$ (see $\eqref{tpartial}$) is the first time of contact of the backward trajectory starting at $x$ with velocity $v$.

\bigskip
\textbf{$1^{st}$ case: no contact against the boundary from $0$ to $t$}. We can therefore use the Duhamel formula without contact $\eqref{mildCO0}$ and bound it by the second term on the right-hand side (every term being positive). Then Lemma $\ref{lem:L}$ on the linear operator $L$ and the fact that 
$$\norm{v} \leq \norm{v_i} + r_{n+1} \leq R_{min} +  r_{n+1}=\tilde{r}_{n+1},$$
gives the following inequality

\begin{equation}\label{inductionCO}
\begin{split}
f(t,x,v) \geq \int_{\tau-\frac{\tau}{2^{n+1}\tilde{r}_{n+1}}}^{\tau-\frac{\tau}{2^{n+2}\tilde{r}_{n+1}}} e^{-C_L (t-s)\langle \tilde{r}_{n+1} \rangle^{\gamma^+}} Q^+ \left[f(s,X_{s,t}(x,v),\cdot),f(s,X_{s,t}(x,v),\cdot)\right]\:ds.
\end{split}
\end{equation}
\bigskip

We now see that for all $s$ in the interval considered in the integral above
$$ \norm{x_i-X_{s,t}(x,v)}\leq \norm{x_i-x} + \abs{t-s}\norm{v} \leq \frac{\delta_X}{2^{n+1}} + \frac{\tau}{2^{n+1}}\leq \frac{\delta_X}{2^{n}},$$
where we used that $\delta_T \leq \delta_X$. We can therefore apply the induction hypothesis to $f(t,X_{s,t}(x,v),v_*)$ into $\eqref{inductionCO}$, where we denoted by $v_*$ the integration parameter in $Q^+$. This yields

\begin{equation*}
\begin{split}
&f(t,x,v) \geq  
\\&\quad a_n\left(\tau\right)^2 e^{-C_L \frac{\tau \langle \tilde{r}_{n+1} \rangle^{\gamma^+}}{2^{n+2}\tilde{r}_{n+1}}}\left(\int_{\tau-\frac{\tau}{2^{n+1}\tilde{r}_{n+1}}}^{\tau-\frac{\tau}{2^{n+2}\tilde{r}_{n+1}}}  Q^+ \left[\mathbf{1}_{B(v_i,r_n)},\mathbf{1}_{B(v_i,r_n)}\right]\:ds\right)(v).
\end{split}
\end{equation*}

\bigskip
Applying the spreading property of $Q^+$, Lemma $\ref{lem:Q+spread}$, with $\xi = \xi_{n+1}$ gives us the expected result for the step $n+1$ with the lower bound $a_{n+1}(\tau)$.

\bigskip
\textbf{$2^{nd}$ case: there is at least one contact before time $t$.} In that case we have that $t_\partial(x,v)\leq t$ and we can use the Duhamel formula with contact $\eqref{mildCOtpartial}$ for $f$. Both terms on the right-hand side are positive so we can lower bound $f(t,x,v)$ by the first one. Denoting $x_\partial=x_\partial(x,v)$ (see Definition $\eqref{xpartial}$), this yields
\begin{eqnarray}
&&f(t,x,v)\geq e^{-C_L t  \langle\tilde{r}_{n+1} \rangle^{\gamma^+}}f_\partial\left(t_\partial(x,v),x_\partial(x,v),v\right) \nonumber
\\&&\geq  \frac{e^{-\frac{\left(\tilde{r}_{n+1}\right)^2}{2T_{\partial}}}}{\left(2\pi\right)^{\frac{d-1}{2}}T_{\partial}^{\frac{d+1}{2}}}e^{-C_L \tau \langle\tilde{r}_{n+1} \rangle^{\gamma^+}}\left[\int_{v_*\cdot n(x_\partial)>0} f(t,x_\partial,v_*)\left(v_*\cdot n(x_\partial)\right)\:dv_*\right]. \label{2ndcase1}
\end{eqnarray}
where we used the definition of $f_\partial$ $\eqref{fpartial}$.

\bigskip
Thanks to the previous subsection (Proposition $\ref{prop:positivediffusion}$), we obtain straightforwardly the expected result for step $n+1$ with the lower bound $b_{n+1}(\tau)$.
\end{proof}
\bigskip


\subsubsection{A Maxwellian lower bound: proof of Theorem $\ref{theo:boundcutoff}$}\label{subsubsec:proofcutoff}

In this subsection we prove Theorem $\ref{theo:boundcutoff}$.

\bigskip
We take $f$ being the mild solution described in Theorem $\ref{theo:boundcutoff}$ and we suppose, thanks to Remarks $\ref{rem:tau0=0}$ and $\ref{rem:tau0=0diffusion}$, that $f$ satisfies Propositions $\ref{prop:upheaval}$ and $\ref{prop:positivediffusion}$ with $\tau_0=0$.
\par We fix $\tau>0$ and we keep the notations defined in Section $\ref{subsubsec:spreadingmethod}$ for the sequences $(r_n)$, $(a_n)$ and $(b_n)$ (see $\eqref{rn}$-$\eqref{an}$-$\eqref{bn}$, with $(\xi_n)$ to be defined later).

\bigskip
In Proposition $\ref{prop:spread}$, we showed that we can spread the localised lower bound with larger and larger velocities taken into account, i.e. by taking $\xi_n = 1/4$ the sequence $r_n$ is strictly increasing to infinity. We can consider $r_0>0$ and find an $n_0$ in $\N$ such that $B(0,R_{min})$
$$\forall v \in B(0,R_{min}),\quad B(0,r_0) \subset B(v,r_{n_0}).$$
By setting $N$ to be this specific $n_0$ and applying Proposition $\ref{prop:spread}$ with this $N$ we obtain a uniform lower bound:
$$\forall t \in \left[\tau-\frac{\tau}{2^{n_0+1}\tilde{r}_{n_0}},\tau\right],\:\forall x \in \bar{\Omega}, \quad f(t,x,v) \geq \min\left(a_{n_0}(\tau),b_{n_0}(\tau)\right) \mathbf{1}_{B(0,r_{n_0})}(v).$$
This bound is uniform in $x$ and the same arguments as in the proof of Proposition $\ref{prop:spread}$ allows us to spread it in the same manner in an even easier way since it is a global lower bound. Therefore, without loss of generality, we can assume $n_0 =0$ and that the following holds,
$$\forall n \in \N,\:\forall t \in \left[\tau-\frac{\tau}{2^{n+1}\tilde{r}_{n}},\tau\right],\:\forall x \in \bar{\Omega}, \quad f(t,x,v) \geq \min\left(a_n(\tau),b_n(\tau)\right) \mathbf{1}_{B(0,r_n)}(v),$$
with $(r_n)$, $(a_n)$ and $(b_n)$ satisfying the same inductive properties $\eqref{rn}$-$\eqref{an}$-$\eqref{bn}$ (with $r_{0} = r_{n_0}$, $a_0 = a_{n_0}$ and $b_0 = b_{n_o}$), with $(\xi_n)$ to be chosen later.

\bigskip
The proof of Theorem $\ref{theo:boundcutoff}$ is then done in two steps. The first one is to establish the Maxwellian lower bound at time $\tau$ using a slightly modified version of the argument in \cite{Mo2} Lemma $3.3$. The second is to prove that the latter bound holds for all $t>\tau$.

\bigskip
\textbf{$1^{st}$ step: a Maxwellian lower bound at time $\tau$.} A natural choice for $(\xi_n)$ is a geometric sequence $\xi_n = \xi^n$ for a given $\xi$ in $(0,1)$. With such a choice we have that
\begin{equation}\label{suprn}
r_n \leq r_02^{\frac{n}{2}}
\end{equation}
and
\begin{equation}\label{minrn}
r_n = 2^{\frac{n}{2}}r_0\prod\limits_{k=1}^n(1-\xi^k) \geq c_r 2^{\frac{n}{2}},
\end{equation}
with $c_r >0$ depending only on $r_0$ and $\xi$.

\bigskip
It follows that $f$ satisfies the following property
\begin{equation}\label{centredboundtau}
\forall n \in \N,\:\forall x \in \bar{\Omega},\:\forall v \in B(0,c_r2^{\frac{n}{2}}), \quad f(\tau,x,v) \geq c_n,
\end{equation}
with $c_n = \min(a_n,b_n)$.
\par It has been proven in \cite{Mo2} Lemma $3.3$ that for a function satisfying the property $\eqref{centredboundtau}$ with $c_n \geq \alpha^{2^n}$, for some $\alpha >0$, there exist $\rho$ and $\theta$ strictly positive explicit constants such that
$$\forall x \in \bar{\Omega},\:\forall v \in \R^d, \quad f(\tau,x,v) \geq \frac{\rho}{(2\pi \theta )^{d/2}}e^{-\frac{\abs{v}^2}{2\theta }}.$$
It thus only remains to show that there exists $\alpha_1$ and $\alpha_2$ strictly positive such that $b_n \geq \alpha_1^{2^n}$ and $a_n \geq \alpha_2^{2^n}$.

\bigskip
The case of $(b_n)$ is quite straightforward from $\eqref{suprn}$ and $0\leq \gamma^+ \leq 1$. Indeed, there exist an explicit constants $C_1$ and $C_2>0$, independent of $n$, such that for all $n\geq 1$
\begin{eqnarray*}
b_n = b_\partial e^{-C_L \tau \langle\tilde{r}_{n} \rangle^{\gamma^+}} \frac{e^{-\frac{\left(\tilde{r}_{n}\right)^2}{2T_{\partial}}}}{\left(2\pi\right)^{\frac{d-1}{2}}T_{\partial}^{\frac{d+1}{2}}} &\geq& \frac{b_\partial}{2T_{\partial}\left(2\pi\right)^{\frac{d-1}{2}}}e^{-C_1\left(R_{min}+r_{n}\right)^2}
\\&\geq& C_2 e^{-2C_1r_0^22^n}.
\end{eqnarray*}
Therefore if $C_2\geq 1$ we define $\alpha_1=\min(b_\partial,e^{-2C_1r_0^2})$ or else we define $\alpha_1=\min(b_\partial,C_2e^{-2C_1r_0^2})$ and it yields $b_n \geq \alpha_1^{2^n}$ for all $n\geq 0$.

\bigskip
We recall the inductive definition of $(a_n)$ for $n\geq 0$, with $\xi_n = \xi^n$,
$$a_{n+1}= \min\left(a_n,b_n\right)^2C_Qr_n^{d+\gamma}\xi^{(n+1)(d/2-1)}\frac{\tau}{2^{n+2}\tilde{r}_{n+1}}e^{-C_L \frac{\tau \langle \tilde{r}_{n+1} \rangle^{\gamma^+}}{2^{n+2}\tilde{r}_{n+1}}}.$$

First, using $\eqref{suprn}$ we have that for all $n\geq 0$,
\begin{eqnarray*}
-C_L \frac{\tau \langle \tilde{r}_{n+1} \rangle^{\gamma^+}}{2^{n+2}\tilde{r}_{n+1}} &\leq& \frac{C_L\tau R_{min}^{\gamma^+}}{2R_{min}} +  \frac{C_L\tau r_{n+1}^{\gamma^+}}{2^{n+2}R_{min}}
\\&\leq& \frac{C_L\tau R_{min}^{\gamma^+}}{2R_{min}} +  \frac{C_L\tau r_{0}^{\gamma^+}2^{\frac{(n+1)\gamma^+}{2}}}{2^{n+2}R_{min}},
\end{eqnarray*}
which is bounded from above for all $n$ since $0\leq \gamma^+ \leq 1$. Therefore, if we denote by $C_3$ any explicit non-negative constant independent of $n$, we have for all $n\geq 0$
$$a_{n+1} \geq C_3\frac{r_n^{d+\gamma}\xi^{((n+1))(d/2-1)}}{2^{n+2}(R_{min} + r_{n+1})}   \min\left(a_n,b_n\right)^2.$$
Thus using $\eqref{suprn}$ and $\eqref{minrn}$ to bound $r_n$ we have
\begin{eqnarray*}
a_{n+1} &\geq& C_3\frac{2^{\frac{n(d+\gamma)}{2}}\xi^{(n+1)(d/2-1)}}{2^{n+2}(R_{min} + c_r2^{\frac{n+1}{2}})}   \min\left(a_n,b_n\right)^2
\\&\geq& C_3\frac{2^{\frac{n(d+\gamma)}{2}}\xi^{n(d/2-1)}}{2^{\frac{3n+5}{2}}}   \min\left(a_n,b_n\right)^2
\\&\geq& C_3\left(\frac{2^{\frac{(d+\gamma)}{2}}\xi^{(d/2-1)}}{2^{\frac{3}{2}}} \right)^n  \min\left(a_n,b_n\right)^2.
\end{eqnarray*}
We define
$$\lambda = \min\left(1,C_3\right)\frac{2^{\frac{(d+\gamma)}{2}}\xi^{(d/2-1)}}{2^{\frac{3}{2}}},$$
which leads to,
\begin{equation}\label{an+1lowerbound}
\forall n \geq 1, \quad a_{n+1} \geq \lambda^n \min(a_n,b_n)^2.
\end{equation}

\bigskip
We could always have chosen $b_0$ and then $b_1$ respectively smaller than $a_0$ and $a_1$ (by always bounding from below by the minimum) and we assume that it is so. We can therefore define
$$\forall n \geq 1, \quad k_n = \min\left\{0\leq k \leq n-1:\quad a_{n-k} \geq b_{n-k}\right\}.$$
Notice that  $n-k_n \geq 1$, hence $\eqref{an+1lowerbound}$ can be iterated $k_n$ times which yields
\begin{eqnarray*}
\forall n \geq 1, \quad a_{n+1} &\geq& \lambda^{n+2(n-1) + \cdots + 2^{k_n}(n-k_n)}\min\left(a_{n-k_n},b_{n-k_n}\right)^{2^{k_n+1}}
\\ &\geq& \lambda^{2^{k_n +1}(n-k+1) - (n+2)}\left(\alpha_1^{2^{n-k_n}}\right)^{2^{k_n+1}}.
\end{eqnarray*}
Thus, if $\lambda \geq 1$ we can choose $\alpha_2 = \alpha_1$ and else we have $2^{k_n +1}(n-k+1) - (n+2)\leq 2^{n+1}$ and we can choose $\alpha_2 = \lambda\alpha_1$. In any case, $\alpha_2$ does not depend on $n$ and we have $a_n \geq \alpha_2^{2^n}$.

\bigskip
We therefore proved the existence of $\alpha >0$ such that for all $n$ in $\N$, $\min(a_n,b_n)\geq \alpha^{2^n}$ which implies that there exist $\rho$ and $\theta$ strictly positive explicit constants such that
\begin{equation}\label{boundcutofftau}
\forall x \in \bar{\Omega},\:\forall v \in \R^d, \quad f(\tau,x,v) \geq \frac{\rho}{(2\pi \theta )^{d/2}}e^{-\frac{\abs{v}^2}{2\theta }}.
\end{equation}

\bigskip
\textbf{$2^{nd}$ step: a Maxwellian lower bound for all $T > t\geq \tau$.}
To complete the proof of Theorem $\ref{theo:boundcutoff}$, it remains to prove that $\eqref{boundcutofftau}$ actually holds for all $t\geq\tau$. All the results and constants we obtained so far do not depend on an explicit form of $f_0$ but just on uniform bounds and continuity that are satisfied at all times, positions and velocities (by assumption). Therefore, we can do the same arguments starting at any time and not $t=0$. So if we take $\tau >0$ and consider $\tau \leq t < T$ we just have to make the proof start at  $t - \tau$ to obtain Theorem $\ref{theo:boundcutoff}$.
\bigskip


\subsection{A constructive approach to the initial lower bound and positivity of the diffussion}\label{subsec:constructiveupheavalpoint}

In previous subsections, we can see that explicit and constructive constants are obtained from given initial lower bounds and uniform positivity of the Maxwellian diffusion. Therefore, a constructive approach to the latter two will lead to completely explicit constants in the Maxwellian lower bound, depending only on \textit{a priori} bounds on the solution and the geometry of the domain. 

\bigskip
\textbf{Localised ``upheaval points".}
A few more assumptions on $f_0$ and $f$ suffice to obtain a completely constructive approach for the ``upheaval points". This method is based on a property of the iterated $Q^+$ operator discovered by Pulvirenty and Wennberg \cite{PW} and reformulated by Mouhot (\cite{Mo2} Lemma $2.3$) as follows.

\bigskip
\begin{lemma}\label{lem:Q+Q+}
Let $B=\Phi b$ be a collision kernel satisfying $\eqref{assumptionB}$, with $\Phi$ satisfying $\eqref{assumptionPhi}$ or $\eqref{assumptionPhimol}$ and $b$ satisfying $\eqref{assumptionb}$ with $\nu \leq 0$. Let $g(v)$ be a nonnegative function on $\R^d$ with bounded energy $e_g$ and entropy $h_g$ and a mass $\rho_g$ such that $0 < \rho_g < +\infty$. Then there exist $R_0 \:, r_0\:, \eta_0 >0$ and $\bar{v} \in B(0,R_0)$ such that
$$Q^+\left(Q^+\left(g\mathbf{1}_{B(0,R_0)},g\mathbf{1}_{B(0,R_0)}\right),g\mathbf{1}_{B(0,R_0)}\right) \geq \eta_0 \mathbf{1}_{B(\bar{v},r_0)},$$
with $R_0 \:, r_0\:, \eta_0$ being constructive in terms on $\rho_g$, $e_g$ and $h_g$.
\end{lemma}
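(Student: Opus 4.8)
\textbf{Proof plan for Lemma \ref{lem:Q+Q+}.}

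The plan is to follow the construction of Pulvirenti--Wennberg \cite{PW} as reformulated by Mouhot \cite{Mo2}, working entirely at the level of the single velocity variable. First I would use the bounds on the mass $\rho_g$, energy $e_g$ and entropy $h_g$ to locate a ball $B(0,R_0)$ and a subset $A \subset B(0,R_0)$ of positive measure on which $g$ is bounded below by a constant $\eta$. Concretely, choosing $R_0$ large enough that $\int_{|v| \leq R_0} g \geq \rho_g/2$ (possible using the energy bound via Chebyshev), one then invokes the entropy bound: a Jensen-type argument shows that $g$ cannot be too concentrated, so the superlevel set $\{v \in B(0,R_0) : g(v) \geq \eta\}$ has measure bounded below, with $\eta$ and the measure lower bound constructive in terms of $\rho_g$, $e_g$, $h_g$. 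Hence $g\mathbf{1}_{B(0,R_0)} \geq \eta\, \mathbf{1}_{A}$ pointwise.

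Second, I would pass from the indicator of a measurable set of positive measure to the indicator of an honest ball, which is where the \emph{iterated} $Q^+$ enters. The point is that a single application of $Q^+$ to $\mathbf{1}_A \otimes \mathbf{1}_A$ only gives something supported on a set of positive measure, but applying the spreading-type estimates of $Q^+$ twice regularises this: one shows $Q^+(\mathbf{1}_A,\mathbf{1}_A)$ is bounded below by $c\,\mathbf{1}_{B(\bar v, r_1)}$ on a genuine ball, after which Lemma \ref{lem:Q+spread} applied once more spreads this to $Q^+(Q^+(\mathbf{1}_A,\mathbf{1}_A),\mathbf{1}_A) \geq c'\, \mathbf{1}_{B(\bar v, r_0)}$. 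This is precisely the content of \cite[Lemma 2.3]{Mo2}, whose proof I would reproduce: the key geometric fact is that the collisional change of variables $(v_*,\sigma) \mapsto v'$ turns a product of two positive-measure sets into a set with nonempty interior, because the map is a submersion away from the grazing directions and $b$ is strictly positive near $\theta \sim \pi/2$. Combining the two steps with $g\mathbf{1}_{B(0,R_0)} \geq \eta\,\mathbf{1}_A$ and the bilinearity/monotonicity of $Q^+$ gives the claimed bound with $\eta_0 = \eta^3 c'$.

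The main obstacle is the second step: quantitatively controlling how a positive-measure set is transformed by the collision operator, and in particular extracting an explicit radius $r_1$ and an explicit ball $B(\bar v, r_1)$ inside $Q^+(\mathbf{1}_A,\mathbf{1}_A)$ with constants depending only on $|A|$ and $R_0$. This requires a careful Fubini/co-area analysis of the pre-image of a small ball under the collision map, keeping track of the lower bound on $b$ over the relevant angular range $[\pi/4, 3\pi/4]$ and of the lower bound $c_\Phi$ on the kinetic factor. Since this argument is carried out in full in \cite{PW} and \cite{Mo2} and involves only the velocity variable (no boundary effects, no characteristics), I would cite it rather than reproduce it in detail, noting only that all constants produced there are explicit in $\rho_g$, $e_g$, $h_g$, which is what makes the present lemma — and hence the constructive version of Theorem \ref{theo:boundcutoff} — effective.
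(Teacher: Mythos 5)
Your outline is accurate and matches the paper's treatment: the paper does not reprove this lemma but simply cites it as Lemma~2.3 of \cite{Mo2} (itself a reformulation of \cite{PW}), and your two-step sketch — extracting a positive-measure superlevel set from the mass/energy/entropy bounds, then using the iterated $Q^+$ to upgrade an indicator of a positive-measure set to an indicator of a ball — is exactly the structure of that cited proof. Since you too defer the technical core to \cite{PW}/\cite{Mo2}, there is no genuine discrepancy.
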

\bigskip

We now suppose that $0<\rho_{f_0}<+\infty$, $h_{f_0}<+\infty$ and that
$$\forall (x,v) \in \Omega\times\R^d, \quad f_0(x,v) \geq \varphi(v) > 0$$
and we consider $R_0$, $r_0$, $\eta_0$ and $\bar{v}$ from Lemma $\ref{lem:Q+Q+}$ associated to the function $\varphi$.

\bigskip
We consider $x_1$ is in $\Omega$ and we denote $d_1 = d(x_1,\partial\Omega)$ the distance between $x_1$ and $\partial\Omega$. Define $\Delta_1 = \min(1,d_1/3R_0)$.
\par Take $0<\tau_0\leq \Delta_1$ and $v$ in $B(0,R_0)$.
\par We have, by construction that
\begin{equation}\label{construct0}
\forall t \in [0,\Delta_1], \quad \norm{x_1-(x_1-vt)}\leq \frac{d_1}{3},
\end{equation}
which means that $t_\partial(x,v) > \Delta_1$. By the Duhamel form without contact $\eqref{mildCO0}$ of $f$ and Lemma $\ref{lem:L}$ we have for all $t$ in $[\tau_0,\Delta_1]$,
\begin{equation}\label{construct1}
f(t,x_1,v) \geq f_0(x,v)e^{-tC_L\langle v \rangle^{\gamma^+}}\geq \varphi(v)e^{-tC_L\langle R_0 \rangle^{\gamma^+}}
\end{equation}
and
\begin{eqnarray}
f(t,x_1,v) &\geq&  \int_{0}^{t}e^{-(t-s) C_L \langle v \rangle^{\gamma^+}}  Q^+ \left[f(s,x_1-(t-s)v,\cdot),f(s,x_1-(t-s)v,\cdot)\right]\left(v\right)\:ds\nonumber
\\&\geq& e^{-(\Delta_1) C_L \langle R_0 \rangle^{\gamma^+}}\label{construct2}
\\&& \int_{0}^{t} Q^+ \left[f(s,x_1-(t-s)v,\cdot)\mathbf{1}_{B(0,r_0)},f(s,x_1-(t-s)v,\cdot)\mathbf{1}_{B(0,r_0)}\right]\left(v\right)\:ds.\nonumber
\end{eqnarray}

\bigskip
Now we notice that $\eqref{construct0}$ implies the following.
$$\forall s \in [0,t],\quad d(x_1-(t-s)v,\partial\Omega)\geq 2d_1/3.$$
Hence, if we call $v_*$ the integral variable in $Q^+$, we have that for all $v_*$ in $B(0,R_0)$ and all $s$ in $[0,t]$, $t_\partial(x-(t-s)v,v_*)>t$ (same arguments as for $\eqref{construct0}$). The function $f(s,x_1-(t-s)v,v_*)$ thus satisfies $\eqref{construct1}$ and $\eqref{construct2}$ as well. 
\par We can do this iteration one more time since $R_0\Delta_1 \leq d_1/3$ and this yields
\begin{eqnarray*}
f(t,x_1,v) \geq \tau_0^2 e^{-(3\Delta_1) C_L \langle R_0 \rangle^{\gamma^+}}
 \int_{0}^{t} Q^+\left(Q^+\left(\varphi\mathbf{1}_{B(0,R_0)},\varphi\mathbf{1}_{B(0,R_0)}\right),\varphi\mathbf{1}_{B(0,R_0)}\right)\left(v\right)\:ds.
\end{eqnarray*}
Applying Lemma $\ref{lem:Q+Q+}$ and remembering that $\Delta_1\leq 1$ we obtain that
\begin{equation}\label{constructboundx1}
\forall t \in [\tau_0,\Delta_1],\:\forall v \in B(0,R_0), \quad f(t,x_1,v) \geq \tau_0^3 e^{-3 C_L \langle R_0 \rangle^{\gamma^+}}\eta_0\mathbf{1}_{B(\bar{v},r_0)}.
\end{equation}

\bigskip
$f$ is uniformly continuous on the compact $[0,T/2] \times \bar{\Omega} \times B(0,R_0)$ so there exists $\delta_T$, $\delta_X$, $\delta_V>0$ such that
$$\forall |t-t'|\leq \delta_T, \: \forall \norm{x-x'}\leq \delta_X, \: \forall \norm{v-v'}\leq \delta_V,$$
\begin{equation}\label{uniformcontinuityconstruct}
\abs{f(t,x,v)- f(t',x',v')} \leq a_0(\tau_0),
\end{equation}
where we defined $2a_0(\tau_0) = \tau_0^3 e^{-3 C_L \langle R_0 \rangle^{\gamma^+}}\eta_0$.

\bigskip
From $\eqref{constructboundx1}$ and $\eqref{uniformcontinuityconstruct}$, we find
\begin{equation}\label{constructboundx1local}
\forall t \in [\tau_0,\Delta_1],\:\forall x \in B(x_1,\delta_X),\:\forall v \in B(0,R_0), \quad f(t,x,v) \geq \tau_0^3 e^{-3 C_L \langle R_0 \rangle^{\gamma^+}}\eta_0\mathbf{1}_{B(\bar{v},r_0)}.
\end{equation}

\bigskip
To conclude we construct $x_2,\dots,x_{N_X}$ such that $\bar{\Omega} \subset \bigcup\limits_{1 \leq i \leq N_X}B\left(x_i,\delta_X\right)$. We can use exactly the same arguments as for $x_1$ on each $x_i$ on the time interval $[\tau_0,\Delta_i]$ and we reach the conclusion $\eqref{constructboundx1local}$ on each $B(x_i,\delta_X)$. This gives, with $\Delta = \min(\Delta_i)$,
\begin{equation}\label{constructupheavalfinal}
\forall t \in [\tau_0,\Delta],\:\forall x\in \bar{\Omega},\:\forall v \in B(0,R_0), \quad f(t,x,v) \geq a_0(\tau_0)\mathbf{1}_{B(\bar{v},r_0)}.
\end{equation}

\bigskip
\begin{remark}
We emphasize here that even though we used compactness arguments, they appeared to be solely a technical artifice. The constants $a_0(\tau_0)$, $\bar{v}$, $R_0$ and $r_0$ are entirely explicit and depends on \textit{a priori} bounds on $f$. The only point of concern would be that $\Delta$ is not constructive since it depends on the covering. However, in previous sections, only the constant bounding the diffusive process (Proposition $\ref{prop:positivediffusion}$ contains a dependency in $\Delta$ (see $\eqref{positivitydiffusion1}$ and $\eqref{finalboundiffusion}$) but depend on an upper bound on $\Delta$ which is less than $1$.
\end{remark}
\bigskip

Starting from this explicit bound we can use the proofs made in the previous subsections, that are constructive, to therefore have a completely constructive proof as long as the bound on the diffusion (Proposition $\ref{prop:positivediffusion}$) can be obtained without compactness arguments.

\bigskip
\textbf{Constructive approach of positivity of diffusion.}
A quick look at the proof of Proposition $\ref{prop:positivediffusion}$ shows that we only need to construct $\delta(\tau_0)$ in $\eqref{positivitydiffusion1}$ and $\delta'(x_1)$ in $\eqref{positivitydiffusion2}$ explicitly.

\bigskip
The first equality $\eqref{positivitydiffusion1}$ is obtained constructively combining the arguments to obtain it pointwise (see proof of Proposition $\ref{prop:positivediffusion}$) together with the method of the proof of Lemma $\ref{lem:positivity1}$. 
\par Indeed, take $x_\partial$ on $\partial\Omega$ and fix an $x_0$ in $\omega$. We can grow the initial lower bound $\eqref{constructupheavalfinal}$ at $x_1$ for $t$ in $[\tau_0,\Delta]$ such that it includes $B(0,2d_\Omega/\tau_0 + \norm{\bar{v}})$ (as in Lemma $\ref{lem:positivity1}$). Then, as in the beginning of the proof of Proposition $\ref{prop:upheaval}$ we obtain that, 
\begin{equation}\label{constructdiffusion1}
\forall t \in [\tau_0,\Delta], \quad f\left(t,x_\partial,2\frac{x_\partial-x_0}{\tau_0}\right) \geq A>0.
\end{equation}
We can now do that for all $v$ in 
$$B\left(2\frac{x_\partial-x_0}{\tau_0},\min\left(r_0;2\frac{d(x_0,\partial\Omega)}{\tau_0}\right)\right)$$
 by just defining, for each $v$, $x(v)$ the point in $\Omega$ such that
$$v = 2\frac{x_\partial - x(v)}{\tau_0}.$$
Note that this point is always well defined since $\Omega$ is convex and
$$\norm{x_0-x(v)}\leq d(x_0,\partial\Omega).$$
For any given $x(v)$ we apply the same argument as for $x_0$ so that the lower bound includes $v$. Therefore, there is an infimum constant $A_{min}$ satisfying, 
$$\forall t \in [\tau_0,\Delta_{min}],\forall v \in B\left(2\frac{x_\partial-x_0}{\tau_0},\min\left(r_0;2\frac{d(x_0,\partial\Omega)}{\tau_0}\right)\right), \quad f\left(t,x_\partial,v\right) \geq A_{min}>0,$$
which is a constructive version of $\eqref{positivitydiffusion1}$. We emphasize here that $A_{min}$ exists and is indeed independent on $v$ since it depends on the number of iteration of Lemma $\ref{lem:positivity1}$ (itself determined by the norm of $v$ which is bounded) and the initial lower bound at $x(v)$ which is uniform in space by $\eqref{constructupheavalfinal}$.

\bigskip
The second inequality $\eqref{positivitydiffusion2}$ is purely geometric as long as we fix a $x_1$ satisfying the initial lower bound, which is the case with $x_0$ used above. We therefore obtained an entirely constructive method for the positivity of the diffusion process.
\bigskip

\section{The non-cutoff case: an exponential lower bound} \label{sec:noncutoff}

In this section we prove the immediate appearance of an exponential lower bound for solutions to the Boltzmann equation $\eqref{BE}$ in the case of a collision kernel satisfying the non-cutoff property.
\par The definition of being a mild solution in the case of a non-cutoff collision kernel, Definition $\ref{def:mildnoncutoff}$ and equation $\eqref{noncutoffsplitting}$, shows that we are in fact dealing with a cutoff kernel to which we add a non locally integrable remainder. As we shall see in Section $\ref{subsec:controlSQ1}$, $S_\eps$ enjoys the same kind of $L^\infty$ control than the operator $L$ whilst $Q^1_\eps$, the non-cutoff part of the gain operator, has an $L^\infty$-norm that decreases to zero as $\eps$ goes to zero.
\par The strategy used is therefore utterly identical to the cutoff case: creation of localised ``upheaval points'' and spreading of these initial lower bounds up to an exponential lower bound. The difference will be that, at each step $n$ of the spreading process we will choose an $\eps_n$ and a $\xi_n$ such that the perturbation added by the non-cutoff part $-\norm{Q^1_{\eps_n}}_{L^\infty_v}$ still preserves a uniform positivity in larger and larger balls in velocity.
\par Note that the uniform positivity of the Maxwellian diffusion still holds in the non-cutoff case since it only comes from an initial positivity and the geometry of the domain.
\bigskip


\subsection{Controls on the operators $S_\eps$ and $Q^1_\eps$}\label{subsec:controlSQ1}

We gather here two lemmas, proven in \cite{Mo2}, which we shall use in this section. They control the $L^\infty$-norm of the linear operator $S_\eps$ and of the bilinear operator $Q^1_\eps$. We first give a property satisfied by the linear operator $S$, $\eqref{noncutoffsplitting}$,  which is Corollary $2.2$ in \cite{Mo2}, where we define
\begin{equation}\label{mb}
m_b = \int_{\mathbb{S}^{d-1}}b\left(\mbox{cos}\:\theta\right)(1-\mbox{cos}\:\theta)d\sigma = \left|\mathbb{S}^{d-2}\right|\int_0^\pi b\left(\mbox{cos}\:\theta\right)(1-\mbox{cos}\:\theta) \mbox{sin}^{d-2}\theta \:d\theta.
\end{equation}

\bigskip
\begin{lemma}\label{lem:S}
Let $g$ be a measurable function on $\R^d$. Then
$$\forall v \in \R^d,\quad \abs{S[g](v)} \leq C_g^S\langle v \rangle^{\gamma^+},$$
where $C_g^S$ is defined by:
\begin{enumerate}
\item If $\Phi$ satisfies $\eqref{assumptionPhi}$ with $\gamma \geq 0$ or if $\Phi$ satisfies $\eqref{assumptionPhimol}$, then
$$C^S_g =\emph{\mbox{cst}}\: m_b C_\Phi e_g.$$
\item If $\Phi$ satisfies $\eqref{assumptionPhi}$ with $\gamma \in (-d,0)$, then
$$C^S_g = \emph{\mbox{cst}}\: m_b C_\Phi \left[e_g+ l^p_g\right],\quad p > d/(d+\gamma).$$
\end{enumerate}
\end{lemma}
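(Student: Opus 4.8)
\begin{proof}[Proof sketch of Lemma \ref{lem:S}]
The plan is to follow the argument of \cite{Mo2} (this is Corollary $2.2$ there), which is a near-replica of the bound on the Grad loss operator $L$ in Lemma \ref{lem:L}: the only genuinely new point is that the angular singularity of $b$ is absorbed not by $n_b$ but by the momentum-transfer quantity $m_b$, which is finite precisely because $\nu<2$. First I would unwind the definition. From the splitting \eqref{splittingQ1Q2},
$$S[g](v) = -\int_{\R^d\times\mathbb{S}^{d-1}}\Phi(|v-v_*|)b(\cos\theta)\left[g(v'_*)-g(v_*)\right]\,dv_*\,d\sigma,$$
with $v'_* = \frac{v+v_*}{2}-\frac{|v-v_*|}{2}\sigma$. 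Contrary to $L$ this is \emph{not} a convolution, and the two pieces cannot be handled separately since $\int_{\mathbb{S}^{d-1}}b\,d\sigma$ diverges once $\nu\geq0$, so the cancellation in the bracket must be kept.

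The core step is a change of variables. For fixed $v$ and $\sigma$, in the term carrying $g(v'_*)$ I would substitute $v_*\mapsto u=v'_*$; a direct computation gives $|v-v'_*| = |v-v_*|\cos(\theta/2)$, Jacobian $\left|\partial v'_*/\partial v_*\right| = 2^{-(d-1)}\cos^2(\theta/2)$ (on the half-space $\langle v-v'_*,\sigma\rangle>0$, onto which the map is a bijection up to a null set), and the angle between $v-v'_*$ and $\sigma$ equal to $\theta/2$. After carrying this out, passing to spherical coordinates for $\sigma$ about the axis $v-v_*$, renaming the angular variable, and recombining with the untouched $g(v_*)$ piece, the two singular contributions collapse into a single kernel:
$$S[g](v) = -\,|\mathbb{S}^{d-2}|\int_{\R^d}g(v_*)\left(\int_0^\pi b(\cos\theta)\sin^{d-2}\theta\left[\frac{\Phi\!\big(|v-v_*|/\cos(\theta/2)\big)}{\cos^d(\theta/2)}-\Phi(|v-v_*|)\right]d\theta\right)dv_*.$$

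Next I would estimate the angular bracket. As $\theta\to0$ it vanishes: for an exact power $\Phi(z)=c|z|^\gamma$ it equals $c\,|v-v_*|^\gamma\big(\cos^{-(\gamma+d)}(\theta/2)-1\big)=O\big(|v-v_*|^\gamma\theta^2\big)$ since $\gamma+d>0$, and for $\Phi$ merely satisfying \eqref{assumptionPhi} or the mollified \eqref{assumptionPhimol} I would recover the bound $\mathrm{cst}\,C_\Phi|v-v_*|^\gamma(1-\cos\theta)$ for $\theta$ small (and $\mathrm{cst}\,C_\Phi|v-v_*|^\gamma$ otherwise) by splitting $|v-v_*|\leq1$ versus $|v-v_*|>1$ and using the two-sided power bounds. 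Since $b(\cos\theta)\sin^{d-2}\theta(1-\cos\theta)$ is integrable on $(0,\pi)$ exactly when $\nu<2$, with integral $m_b/|\mathbb{S}^{d-2}|$, this yields $|S[g](v)|\leq\mathrm{cst}\,m_bC_\Phi\int_{\R^d}|v-v_*|^\gamma|g(v_*)|\,dv_*$. I would then finish exactly as for Lemma \ref{lem:L}: split $|v-v_*|\leq1$ versus $|v-v_*|>1$ and bound the integral by $\mathrm{cst}\,e_g\langle v\rangle^{\gamma^+}$ when $\gamma\geq0$ (or under \eqref{assumptionPhimol}), and by a Hölder estimate with an $L^{p}_v$ norm, $p>d/(d+\gamma)$, by $\mathrm{cst}\,(e_g+l^p_g)\langle v\rangle^{\gamma^+}$ when $\gamma\in(-d,0)$. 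This gives the stated $C^S_g$ in both cases.
\end{proof}

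The step I expect to be the main obstacle is controlling that angular bracket when $\Phi$ is only squeezed between two power laws rather than being an exact power: one must verify that the cancellation at $\theta=0$ genuinely survives, so that the merely locally integrable weight $m_b$ — and not the divergent $n_b$ — governs $S$, and this is exactly where the precise form of \eqref{assumptionPhi}--\eqref{assumptionPhimol} is used. Everything else is either the bookkeeping of the change of variables (Jacobian, injectivity, the half-space restriction) or a verbatim repeat of the convolution estimates already carried out for the loss operator $L$.
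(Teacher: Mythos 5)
The paper does not prove Lemma~\ref{lem:S}: it imports it verbatim from \cite{Mo2} (Corollary~2.2 there), so the comparison is really against the argument in that reference. Your reconstruction correctly identifies that argument: the Alexandre--Desvillettes--Villani--Wennberg cancellation lemma, i.e.\ the change of variables $v_*\mapsto v'_*$ at $\sigma$ fixed, and your geometric facts are all right --- $|v-v'_*|=|v-v_*|\cos(\theta/2)$, the post-collisional angle is $\theta/2$, the Jacobian is $2^{-(d-1)}\cos^2(\theta/2)$, and after re-parametrising the $\sigma$-integral the two terms do recombine into the kernel you write, with the $\int_0^\pi$ correct (the factor $\sin^{d-2}\theta/\big(2^{d-2}\cos^{d-2}(\theta/2)\big)$ exactly compensates what the half-space restriction costs). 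The finishing step --- $1-\cos\theta$ integrability against $b\sin^{d-2}\theta$ when $\nu<2$ giving the weight $m_b$, then the $|v-v_*|\lessgtr 1$ split plus H\"older exactly as for $L$ --- is also the right bookkeeping, identical to Lemma~\ref{lem:L}.

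There is, however, a genuine gap in the step you yourself single out as the obstacle, and your proposed remedy does not close it. For $\Phi$ only sandwiched as in \eqref{assumptionPhi} (or \eqref{assumptionPhimol}), the bracket
\[
\frac{\Phi\big(r/\cos(\theta/2)\big)}{\cos^d(\theta/2)}-\Phi(r)
\]
does \emph{not} vanish as $\theta\to 0$. Indeed the two-sided bounds only give
\[
r^\gamma\Big(\frac{c_\Phi}{\cos^{\gamma+d}(\theta/2)}-C_\Phi\Big)\ \le\ \frac{\Phi\big(r/\cos(\theta/2)\big)}{\cos^d(\theta/2)}-\Phi(r)\ \le\ r^\gamma\Big(\frac{C_\Phi}{\cos^{\gamma+d}(\theta/2)}-c_\Phi\Big),
\]
and as $\theta\to 0$ these endpoints tend to $\mp(C_\Phi-c_\Phi)r^\gamma$, which is a fixed nonzero size whenever $c_\Phi<C_\Phi$. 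Splitting $|v-v_*|\lessgtr 1$ does not help: both $r$ and $r/\cos(\theta/2)$ sit on the same side of $1$ for $\theta$ small and $r$ bounded away from $1$, so the same failure occurs. Consequently the $\theta$-integral is \emph{not} controlled by $m_b$; it behaves like $n_b$, which diverges precisely when $\nu\ge 0$. The cancellation argument genuinely requires $\Phi$ to be an exact power (as you compute, then the bracket is $c\,r^\gamma\big(\cos^{-(\gamma+d)}(\theta/2)-1\big)=O(r^\gamma\theta^2)$), or at least a uniform quadratic-order modulus of continuity for $z\mapsto\Phi(z)/|z|^\gamma$; the mere sandwich \eqref{assumptionPhi}--\eqref{assumptionPhimol} is not enough. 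Be explicit about this extra hypothesis on $\Phi$ if you carry out the proof; as written, the step ``I would recover the bound $\mathrm{cst}\,C_\Phi|v-v_*|^\gamma(1-\cos\theta)$\,'' is asserted but does not follow from the assumptions quoted.

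Aside from that one point (which is also a hypothesis issue in the reference, not a misunderstanding of the strategy), the proposal is a faithful and essentially correct reconstruction of the cited proof.
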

\bigskip

We will compare the lower bound created by the cutoff part of our kernel to the remaining part $Q^1_\eps$. To do so we need to control its $L^\infty$-norm. This is achieved thanks to Lemma $2.5$ in \cite{Mo2}, which we recall here.

\bigskip
\begin{lemma}\label{lem:Q1}
Let $B=\Phi b$ be a collision kernel satisfying $\eqref{assumptionB}$, with $\Phi$ satisfying $\eqref{assumptionPhi}$ or $\eqref{assumptionPhimol}$ and $b$ satisfying $\eqref{assumptionb}$ with $\nu \in [0,2)$. Let $f,g$ be measurable functions on $\R^d$.
\\Then
\begin{enumerate}
\item If $\Phi$ satisfies $\eqref{assumptionPhi}$ with $2+\gamma \geq 0$ or if $\Phi$ satisfies $\eqref{assumptionPhimol}$, then
$$\forall v \in \R^d,\quad \abs{Q^1_b(g,f)(v)} \leq \emph{\mbox{cst}}\: m_b C_\Phi \norm{g}_{L^1_{\tilde{\gamma}}}\norm{f}_{W^{2,\infty}}\langle v \rangle ^{\tilde{\gamma}}.$$
\item If $\Phi$ satisfies $\eqref{assumptionPhi}$ with $2+\gamma < 0$, then
$$\forall v \in \R^d,\quad \abs{Q^1_b(g,f)(v)} \leq \emph{\mbox{cst}}\: m_b C_\Phi \left[\norm{g}_{L^1_{\tilde{\gamma}}}+\norm{g}_{L^p}\right]\norm{f}_{W^{2,\infty}}\langle v \rangle ^{\tilde{\gamma}}$$
with $p > d/(d+\gamma+2)$.
\end{enumerate}
\end{lemma}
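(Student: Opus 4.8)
The point of the splitting \eqref{splittingQ1Q2} is that the difference $f'-f$ supplies the cancellation that tames the angular singularity, and my plan is to quantify this. Concretely, $f(v')-f(v)$ vanishes like $|v-v_*|\sin(\theta/2)$ as $\theta\to0$, and, once its azimuthal mean over $\sigma$ is subtracted — which costs one extra derivative of $f$ — like $|v-v_*|^2\sin^2(\theta/2)$; this is exactly what is needed to absorb the kernel $b(\cos\theta)\sin^{d-2}\theta\sim b_0\theta^{-(1+\nu)}$ when $\nu<2$, producing a constant of order $m_b$ (finite for $\nu<2$, cf.\ \eqref{mb}). Since only the part of $b$ near the grazing set $\theta=0$ is problematic, and in the applications (where $b$ is replaced by $b^{NCO}_\eps$) the kernel is supported in $\{\theta\leq\eps\}$, I split the $\sigma$--integral at $\theta=\pi/4$: on $\{\theta\geq\pi/4\}$ the kernel is bounded and this piece is handled as a cutoff gain term, bounding $|f(v')-f(v)|\leq\norm{f}_{W^{1,\infty}}|v-v_*|\sin(\theta/2)$ and arguing as in Lemmas~\ref{lem:L} and \ref{lem:Q+spread}. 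So I concentrate on $\{\theta\leq\pi/4\}$.

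Fix $v$. On $\{\theta\leq\pi/4\}$ I first perform the change of variables $v_*\mapsto w:=v'_*$ at fixed $\sigma$; its Jacobian equals $2^{1-d}\cos^2(\theta/2)$, which is bounded above and below here, and it turns $|v-v_*|$ into $|v-w|/\cos(\theta/2)$, the argument of $g$ becoming the integration variable. Then, for fixed $w$, I carry out the remaining $\sigma$--integral. By conservation of momentum one checks that $v'-v\perp v-w$, with $|v'-v|=|v-w|\tan(\theta/2)$ and $\cos(\theta/2)=\sigma\cdot(v-w)/|v-w|$, so that as the azimuthal part of $\sigma$ turns at fixed $\theta$ the point $v'$ runs over a sphere centred at $v$ inside the hyperplane through $v$ normal to $v-w$. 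Because every remaining weight depends on $\sigma$ only through $\theta$, the first-order Taylor term $\nabla f(v)\cdot(v'-v)$ averages to $0$, so by Taylor's formula the inner integral is bounded by $\mathrm{cst}\,\norm{f}_{W^{2,\infty}}|v-w|^2$ times the angular factor
$$\int_{\{\theta\leq\pi/4\}}b(\cos\theta)\,\frac{\tan^2(\theta/2)}{\cos^2(\theta/2)}\,d\sigma=\mathrm{cst}\int_0^{\pi/8}b(\cos2\psi)\,\frac{\sin^{d}\psi}{\cos^4\psi}\,d\psi\ \lesssim\ \int_0^{\pi/8}\psi^{1-\nu}\,d\psi\ <\ \infty,$$
the last bound using precisely $\nu<2$. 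Reinstating $\Phi(|v-w|/\cos(\theta/2))$ — which is $\leq\mathrm{cst}\,C_\Phi|v-w|^\gamma$ under \eqref{assumptionPhi}, and additionally $\leq\mathrm{cst}\,C_\Phi$ whenever $|v-w|\lesssim1$ under \eqref{assumptionPhimol} — I obtain $|Q^1_b(g,f)(v)|\leq\mathrm{cst}\,m_b C_\Phi\norm{f}_{W^{2,\infty}}\int_{\R^d}|g(w)|\,K(v,w)\,dw$, with $K(v,w)\sim|v-w|^{2+\gamma}$ under \eqref{assumptionPhi}, and $K(v,w)$ bounded for $|v-w|\lesssim1$ and $\lesssim|v-w|^{2+\gamma}$ for $|v-w|\gtrsim1$ under \eqref{assumptionPhimol}.

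It remains to extract the velocity weights. When $2+\gamma\geq0$ — in particular in the whole range allowed under \eqref{assumptionPhimol} — the kernel $K$ is locally bounded and grows at most like $|v-w|^{2+\gamma}\leq\mathrm{cst}\,\langle v\rangle^{\tilde\gamma}\langle w\rangle^{\tilde\gamma}$, hence $\int|g(w)|K(v,w)\,dw\leq\mathrm{cst}\,\langle v\rangle^{\tilde\gamma}\norm{g}_{L^1_{\tilde\gamma}}$, which is part~(1). When $\Phi$ satisfies \eqref{assumptionPhi} with $2+\gamma<0$ (so $\tilde\gamma=0$), $K$ is singular at $w=v$; splitting at $|v-w|=1$, the outer part gives $\norm{g}_{L^1}$, while the inner part is handled by Hölder, $\int_{|v-w|\leq1}|g(w)|\,|v-w|^{2+\gamma}\,dw\leq\norm{g}_{L^p}\,\norm{\,|\cdot|^{2+\gamma}}_{L^{p'}(B(0,1))}$, and the last norm is finite exactly when $p'(d+2+\gamma)>d$, i.e.\ $p>d/(d+\gamma+2)$ — precisely the hypothesis. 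This yields part~(2).

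The genuinely delicate step is the cancellation of the first-order term. A naive Taylor expansion in the original variables fails, since there the coefficient $g(v'_*)$ still depends on $\sigma$ and $g$ has no regularity to spend; it is the preliminary change of variables $v_*\mapsto v'_*$ that decouples $g$ from the angular average and lets the rotational symmetry of the sphere kill the linear term, after which the threshold $\nu<2$ finishes the job. The two technical pressure points are keeping that change of variables away from its degeneracy at $\theta=\pi$ — harmless here, as that region is outside the support of the singular part of $b$ and leaves only a cutoff-type contribution — and handling the locally singular weight $|v-w|^{2+\gamma}$ for soft potentials; everything else is the bookkeeping already met in Lemmas~\ref{lem:L} and \ref{lem:S}.
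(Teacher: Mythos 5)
The paper itself does not give a proof of Lemma~\ref{lem:Q1}: it is quoted directly from [Mo2] (Lemma 2.5 there), so there is no internal argument to compare against. Your self-contained proof is correct in its structure and reproduces the mechanism that makes the estimate work: the change of variables $v_*\mapsto w=v'_*$ at fixed $\sigma$, with Jacobian $2^{1-d}\cos^2(\theta/2)$, moves the argument of $g$ onto the integration variable, and then rotational symmetry about the axis $v-w$ annihilates the first-order Taylor term of $f(v')-f(v)$, leaving a remainder of size $\norm{f}_{W^{2,\infty}}\abs{v-w}^2\tan^2(\theta/2)$ whose angular integral is of order $m_b$ precisely because $\nu<2$. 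The geometric facts you invoke (namely $\abs{v-v_*}=\abs{v-w}/\cos(\theta/2)$, $v'-v\perp v-w$, $\abs{v'-v}=\abs{v-w}\tan(\theta/2)$, and $\cos(\theta/2)=\sigma\cdot(v-w)/\abs{v-w}$) and the extraction of the weight $\langle v\rangle^{\tilde\gamma}$ with Hölder exponent $p>d/(d+\gamma+2)$ in the case $2+\gamma<0$ all check out.

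Two technical points are worth tightening. First, a small slip: the $\psi=\theta/2$ substitution gives $\int b(\cos 2\psi)\sin^d\psi\cos^{d-6}\psi\,d\psi$, not $\cos^{-4}\psi$ (they agree only for $d=2$); since $\psi\leq\pi/8$ the cosine factor is bounded above and below and this changes only the constant. Second, and more seriously, the treatment of the large-angle piece $\{\theta\geq\pi/4\}$ is glossed. Bounding $\abs{f'-f}\leq\norm{f}_{W^{1,\infty}}\abs{v-v_*}\sin(\theta/2)$ leaves a kernel of order $\abs{v-v_*}^{1+\gamma}$; when $2+\gamma<0$ this is \emph{more} singular than $\abs{v-v_*}^{2+\gamma}$, and the Hölder threshold it requires, $p>d/(d+1+\gamma)$, is strictly stronger than the stated $p>d/(d+\gamma+2)$. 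Moreover, to put the argument of $g$ onto the integration variable on this piece you would again want $v_*\mapsto v'_*$, whose Jacobian factor $\cos^{-2}(\theta/2)$ is genuinely unbounded as $\theta\to\pi$, and the citation of Lemmas~\ref{lem:L} and \ref{lem:Q+spread} does not quite supply the needed $L^\infty$ upper bound on a cutoff $Q^+$. The standard fix, which you should state, is to symmetrize the angular integral under $\sigma\mapsto -\sigma$ (this exchanges $v'\leftrightarrow v'_*$ and sends $\theta\mapsto\pi-\theta$), so that one may work entirely on $\{\theta\leq\pi/2\}$; alternatively, estimate $\int\Phi b\,g'_*f'$ and $f(v)\int\Phi b\,g'_*$ separately, using the pre/post-collisional change of variables $(v_*,\sigma)\mapsto(v',v'_*)$ whose Jacobian is identically one. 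Neither remark affects the core of your argument; both belong to the ``bookkeeping'' you rightly flag, but the second one does need to be made explicit to match the stated exponent in part (2).
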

\bigskip


\subsection{Proof of Theorem $\ref{theo:boundnoncutoff}$}\label{subsec:proofNCO}

As explained at the beginning of this section, the main idea is to compare the loss due to the non-cutoff part of the operator $Q^1_\eps$ with the spreading properties of the cutoff operator $Q^+_\eps$. More precisely, due to Lemmas $\ref{lem:L}$, $\ref{lem:S}$ and $\ref{lem:Q1}$ we find that for all $0<\eps<\eps_0$,
\begin{equation}\label{L+S}
L_{\eps}[f] +  S_{\eps}[f] \leq C_f \left(n_{b^{CO}_{\eps}}+ m_{b^{NCO}_{\eps}}\right)\langle v\rangle^{\gamma^+}
\end{equation}
and
\begin{equation}\label{Q+Q1}
Q^+_\eps(f,f) + Q^1_\eps(f,f) \geq Q^+_\eps(f,f) -\abs{ Q^1_\eps(f,f)} \geq  Q^+_\eps(f,f) - C_f m_{b^{NCO}_{\eps}}\langle v\rangle^{(2 + \gamma)^+},
\end{equation}
where $C_f >0$ is a constant depending on $E_f$, $E'_f$, $W_f$ (and $L^{p_\gamma}_f$ if  $\Phi$ satisfies $\eqref{assumptionPhi}$ with $\gamma < 0$).
\par Moreover, by definitions $\eqref{lb}$, $\eqref{nb}$ and $\eqref{mb}$, the following behaviours happen:
$$l_{b^{CO}_{\eps}} \geq l_b$$
and
\begin{equation}\label{eqv1}
n_{b^{CO}_{\eps}} \mathop{\sim}\limits_{\eps \to 0} \frac{b_0}{\nu}\eps^{-\nu}, \quad m_{b^{NCO}_{\eps}} \mathop{\sim}\limits_{\eps \to 0} \frac{b_0}{2-\nu}\eps^{2-\nu}
\end{equation}
if $\nu$ belongs to $(0,2)$ and
\begin{equation}\label{eqv2}
n_{b^{CO}_{\eps}} \mathop{\sim}\limits_{\eps \to 0} b_0\abs{\mbox{log}\eps}, \quad m_{b^{NCO}_{\eps}} \mathop{\sim}\limits_{\eps \to 0} \frac{b_0}{2}\eps^{2}
\end{equation}
for $\nu=0$. This shows that the contribution of $Q^1_\eps$ decreases with $\eps$ so this operator should not affect the spreading method whereas the contribution of $S_\eps$ increases, which is why we lose the Maxwellian lower bound to get a faster exponential one.
\par We just briefly describe the changes to make into the proof of the cutoff case to obtain Theorem $\ref{theo:boundnoncutoff}$.

\bigskip
\textbf{Localised ``upheaval points''.}
The creation of localised initial lower bounds (Proposition $\ref{prop:upheaval}$ in the cutoff case) depends on the Boltzmann operator for two different reasons:
\begin{itemize}
\item the creation of a first lower bound in a neighborhood of a point $(x_1,v_1)$ in the phase space (Lemma $\ref{lem:positivity1}$)
\item the creation of localised lower bounds in $\bar{\Omega}$ \textit{via} the free transport part.
\end{itemize}
Since $L_\eps +S_\eps$ satisfies the same bounds as $L$ in the cutoff case, the second step can be made identically as in the proof of Proposition $\ref{prop:upheaval}$. It remains to prove the creation of an initial bound in a neighborhood of $(x_1,v_1)$.

\bigskip
We use the same definition of $\Delta$, $x_1$, $v_1$, $\alpha_n(t)$ and $(r_n)_{n\in\N}$ as in Lemma $\ref{lem:positivity1}$ apart from 
$$t_n(t)= \max\left\{\tau_0, t-\frac{\Delta}{2^{n+1}\left(\norm{v_1}+r_n\right)}\right\}.$$
Note that the step $n=0$ holds here since it only depends on the continuity of the solution $f$ and $f_0$.
\par The key difference will be that the equivalent statement of Lemma $\ref{lem:positivity1}$ can only be done on time intervals of the form $[\tau_0,\Delta]$ for any $\tau_0>0$. Indeed, take $\tau_0>0$ and therefore
$$\forall n\in \N,\:\forall t \in [\tau_0,\Delta],\quad \alpha_n(t)\geq \alpha_n(\tau_0).$$
Exactly the same arguments as in the inductive step $n+1$ in the proof of Lemma $\ref{lem:positivity1}$ we reach $\eqref{inductionpositivity}$ with our new operators (with a cutoff function depending on $\eps_{n+1}$)
\begin{eqnarray*}
f(t,x,v) &\geq& \int_{t_n(t)}^{t}e^{-sC_f^{\eps_{n+1}}(R)}
\\&&\quad\quad\quad\left(\alpha_n(\tau_0)^2 Q^+_{\eps_{n+1}}[\mathbf{1}_{B(\bar{v},r_n)},\mathbf{1}_{B(\bar{v},r_n)}] - C_f m_{b^{NCO}_{\eps_{n+1}}}\langle R\rangle^{(2 + \gamma)^+}\right)(v)\: ds,
\end{eqnarray*}
using the shorthand notations $C^\eps_f(R) = C_f(n_{b^{CO}_{\eps}}+m_{b^{NCO}_{\eps}})\langle R\rangle^{\gamma^+}$ and $R =\norm{v_1}+ 2r_n$.
Due to the spreading property of $Q^+_{\eps_{n+1}}$ (see Lemma $\ref{lem:Q+spread}$) with $\xi=1/4$ we reach
\begin{eqnarray}
&& \quad f(t,x,v) \geq \int_{t_n(t)}^{t}e^{-C_f^{\eps_{n+1}}(R)} \label{finalineqNCO} 
\\ &&\quad\quad\left(\alpha_n^2(\tau_0)\mbox{cst}\: l_{b^{CO}_{\eps_{n+1}}} c_\Phi r_n^{d+\gamma} \xi^{\frac{d}{2}-1}\mathbf{1}_{B\left(\bar{v},r_n\sqrt{2}(1-\xi)\right)}- C_f m_{b^{NCO}_{\eps_{n+1}}}\langle R\rangle^{(2 + \gamma)^+}\right)(v)\: ds,\nonumber
\end{eqnarray}
Thus, at each step of the induction we just have to choose $\eps_{n+1}$ small enough such that
\begin{equation}\label{choiceepsn}
C_f m_{b^{NCO}_{\eps_{n+1}}}\langle R\rangle^{(2 + \gamma)^+} \leq \frac{1}{2} \alpha_n^2(\tau_0)\mbox{cst}\: l_b c_\Phi r_n^{d+\gamma} \xi^{\frac{d}{2}-1}.
\end{equation}
This proves the following Lemma.

\bigskip
\begin{lemma}\label{lem:positivityNCO}
Let $f$ be the mild solution of the Boltzmann equation described in Theorem $\ref{theo:boundnoncutoff}$.
\\ Then there exist $\Delta >0$, $(x_1,v_1)$ in $\Omega \times \R^d$ such that for all $n \in \N$ there exist $r_n >0$, depending only on $n$, such that for all $\tau_0$ in $(0,\Delta]$ there exists and $\alpha_n(\tau_0)>0$ such that for
$$\forall t \in [\tau_0,\Delta],\:\forall x \in B\left(x_1,\frac{\Delta}{2^n}\right), \:\forall v \in \R^d, \quad f(t,x,v) \geq \alpha_n(\tau_0)\mathbf{1}_{B(v_1,r_n)}(v).$$
\end{lemma}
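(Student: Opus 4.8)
The plan is to mimic, almost verbatim, the inductive construction of Lemma~\ref{lem:positivity1}, the collision operator now being split as in \eqref{noncutoffsplitting}. Concretely, at rank $n+1$ one works with the cutoff kernel $B^{CO}_{\eps_{n+1}}$ plus the non-cutoff remainder $B^{NCO}_{\eps_{n+1}}$, writes $f$ along characteristics through the Duhamel formula without contact \eqref{mildNCO0}, bounds the weight $L_{\eps_{n+1}}+S_{\eps_{n+1}}$ from above by $C_f\langle v\rangle^{\gamma^+}$ (Lemmas~\ref{lem:L} and~\ref{lem:S}, i.e.\ \eqref{L+S}), spreads the gain term $Q^+_{\eps_{n+1}}$ by Lemma~\ref{lem:Q+spread}, and controls the remainder by $|Q^1_{\eps_{n+1}}(f,f)|\leq C_f\,m_{b^{NCO}_{\eps_{n+1}}}\langle v\rangle^{(2+\gamma)^+}$ (Lemma~\ref{lem:Q1}, i.e.\ \eqref{Q+Q1}). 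The single new idea is that $\eps_{n+1}$ is chosen afresh at each step, small enough that the loss coming from $Q^1_{\eps_{n+1}}$ is absorbed into the spreading gain of $Q^+_{\eps_{n+1}}$ — which is possible precisely because $m_{b^{NCO}_\eps}\to0$ as $\eps\to0$ by \eqref{eqv1}--\eqref{eqv2}.

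The initialisation $n=0$ is identical to Step~1 of the proof of Lemma~\ref{lem:positivity1}: it uses only the mass conservation \eqref{massconservation}, the uniform energy bound \eqref{assumption1}, the boundedness of $\Omega$, and the continuity of $f$ and $f_0$, none of which involves the kernel. It yields $\Delta>0$, a point $(x_1,v_1)\in\Omega\times\R^d$ with $B(x_1,\Delta)\subset\Omega$, the radii $(r_n)$ of \eqref{inductionrn1} increasing to $+\infty$, and a constant $\alpha_0>0$ giving the bound for $n=0$ on all of $[0,\Delta]$. For the inductive step I would fix $\tau_0\in(0,\Delta]$ and propagate, rather than a single interval, a nested family $I_n=[\tau_0(1-2^{-n}),\Delta]$ (so $I_0=[0,\Delta]$ and $[\tau_0,\Delta]\subset I_n\uparrow[\tau_0,\Delta]$), with a uniform constant $\alpha_n(\tau_0)>0$ on $I_n$. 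Given $t\in I_{n+1}$, $x\in B(x_1,\Delta/2^{n+1})$ and $v\in B(0,\norm{v_1}+2r_n)$, set $t_n(t)=\max\{\tau_0(1-2^{-n}),\,t-\Delta/(2^{n+1}(\norm{v_1}+r_n))\}$ as in the variant of \eqref{definitiontn} proposed in the excerpt; since $B(x_1,\Delta)\subset\Omega$, the backward segment $(X_{s,t}(x,v))_{s\in[t_n(t),t]}$ stays in $B(x_1,\Delta/2^n)\subset\Omega$, so $t_\partial(x,v)>t_n(t)$ and one may use \eqref{mildNCO0} starting at $t_n(t)$.

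Dropping the nonnegative free-transport term, using \eqref{L+S} (whose constant is uniform in $(t,x,v)$, depending only on the \textit{a priori} bounds on $f$), then the rank-$n$ hypothesis inside $Q^+_{\eps_{n+1}}$, \eqref{Q+Q1}, Lemma~\ref{lem:Q+spread} with $\xi=1/4$ and $l_{b^{CO}_{\eps_{n+1}}}\geq l_b$, one reaches an inequality of the form \eqref{finalineqNCO},
\[
f(t,x,v)\ \geq\ \int_{t_n(t)}^{t} e^{-C^{\eps_{n+1}}_f(R)}\Bigl(\alpha_n(\tau_0)^2\,\mathrm{cst}\,l_b c_\Phi\, r_n^{d+\gamma}\,4^{1-d/2}\,\mathbf{1}_{B(v_1,r_{n+1})}(v)-C_f\,m_{b^{NCO}_{\eps_{n+1}}}\langle R\rangle^{(2+\gamma)^+}\Bigr)\,ds,
\]
with $R=\norm{v_1}+2r_n$ and $C^{\eps}_f(R)=C_f(n_{b^{CO}_\eps}+m_{b^{NCO}_\eps})\langle R\rangle^{\gamma^+}$. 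Since $r_n$, $\alpha_n(\tau_0)$ and $R$ are now frozen, pick $\eps_{n+1}\in(0,\eps_0)$ so small that \eqref{choiceepsn} holds, so that the bracket is $\geq\tfrac12\alpha_n(\tau_0)^2\,\mathrm{cst}\,l_b c_\Phi\,r_n^{d+\gamma}4^{1-d/2}\mathbf{1}_{B(v_1,r_{n+1})}(v)$. On $I_{n+1}$ the interval $[t_n(t),t]$ has length at least $\min\{\tau_0 2^{-n-2},\,\Delta/(2^{n+1}(\norm{v_1}+r_n))\}>0$, so integrating gives the bound at rank $n+1$ on $I_{n+1}$ with an explicit $\alpha_{n+1}(\tau_0)>0$ and $r_{n+1}=\tfrac{3\sqrt2}{4}r_n$. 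As $[\tau_0,\Delta]\subset I_n$ for all $n$, this is Lemma~\ref{lem:positivityNCO}.

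I expect the main obstacle to be exactly the bookkeeping just described: unlike the cutoff case, the choice \eqref{choiceepsn} of $\eps_{n+1}$ needs a \emph{lower} bound on $\alpha_n$ at the times entering the Duhamel integral, which is why $t_n(t)$ must be capped away from $0$ and why one propagates the estimate on the shrinking intervals $I_n$ rather than on $[0,\Delta]$; this is also the structural reason why the non-cutoff initial bound is only claimed for $t\geq\tau_0>0$. Everything else carries over unchanged, since $L_\eps+S_\eps$ obeys the same $\langle v\rangle^{\gamma^+}$ bound as $L$ in the cutoff setting and $Q^1_\eps$ is, uniformly on the relevant velocity range, a lower-order perturbation in $L^\infty_v$.
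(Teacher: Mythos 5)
Your proof is correct and follows essentially the same route as the paper: same splitting \eqref{noncutoffsplitting}, same Duhamel form \eqref{mildNCO0}, same appeal to Lemmas~\ref{lem:L}, \ref{lem:S}, \ref{lem:Q+spread}, \ref{lem:Q1}, and the identical rank-by-rank choice of $\eps_{n+1}$ via \eqref{choiceepsn}. Your nested-interval bookkeeping $I_n=[\tau_0(1-2^{-n}),\Delta]$ is a slightly more explicit version of the paper's $t_n(t)=\max\{\tau_0,\dots\}$ device, making transparent that the integration window $[t_n(t),t]$ stays nondegenerate as $t\downarrow\tau_0$, but the underlying argument is the same.
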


\bigskip
\textbf{Exponential lower bound.}
As explained before, the strict positivity of the diffusion still holds in our case since we proved the initial lower bound in Lemma $\ref{lem:positivityNCO}$. It therefore remains to show that we can indeed spread the ``upheaval points''. This is achieved by adapting the arguments of the cutoff case together with careful choices of $\eps_{n+1}$ and $\xi_{n+1}$ at each step of the induction. This has been done in \cite{Mo2} and in \cite{Bri2} and we refer to these works for deeper details.

\bigskip
Basically, we start by spreading the initial ''upheaval points" (obtained from Lemma $\ref{lem:positivityNCO}$ with the same method as Proposition $\ref{prop:upheaval}$) by induction. At each step of the induction we use the spreading property of the $Q^+_{\eps_n}$ operator between $t^{(2)}_n$ and $t^{(1)}_n$ (see $\eqref{inductionCO}$) and we fix $\eps_n$ small enough to obtain a strictly positive lower bound (see $\eqref{choiceepsn}$).
\par There is, however, a subtlety in the non-cutoff case that we have to deal with. Indeed, at each step of the induction we choose an $\eps_n$ of decreasing magnitude, but at the same time in each step the action of the operator $-(L_{\eps_n} + S_{\eps_n})$ behaves like (see $\eqref{finalineqNCO}$)
$$\mbox{exp}\left[-C_f\left(m_{b^{NCO}_{\eps_n}}+n_{b^{CO}_{\eps_n}}\right)(t^{(1)}_{n}-t^{(2)}_n)\langle v \rangle^{\gamma^+}\right].$$
By $\eqref{eqv1}-\eqref{eqv2}$, as $\eps_n$ tends to $0$ we have that $n_{b^{CO}_{\eps_n}}$ goes to $+\infty$ and so the action of $-(Q^-_\eps + Q^2_\eps)$ seems to decrease the lower bound to $0$ exponentially fast. The idea to overcome this difficulty is to find a time interval $t^{(1)}_{n}-t^{(2)}_n = \Delta_n$, at each step to be sufficiently small to counterbalance the effect of $n_{b^{CO}_{\eps_n}}$.
\par More precisely, taking $$t^{(1)}_{n} = \left(\sum_{k=0}^{n+1}\Delta_k\right)\tau, \quad t^{(2)}_{n} = \left(\sum_{k=0}^n\Delta_k\right)\tau\quad\mbox{with}\quad \sum\limits_{k=0}^\infty \Delta_k =1,$$
fixing $\eps_n$ by $\eqref{choiceepsn}$ and choosing carefully $\xi_n$ (exactly as in \cite{Mo2}\cite{Bri2})we reach the desired Theorem $\ref{theo:boundnoncutoff}$.
\bigskip




\bibliographystyle{acm}
\bibliography{bibliography}


\signmb
\end{document}